\documentclass[a4paper,11pt]{article}
\makeatletter
\renewcommand\normalsize{%
   \@setfontsize\normalsize{10.78pt}{12.5pt}%
   \abovedisplayskip 11\p@ \@plus3\p@ \@minus6\p@
   \abovedisplayshortskip \z@ \@plus3\p@
   \belowdisplayshortskip 6.5\p@ \@plus3.5\p@ \@minus3\p@
   \belowdisplayskip \abovedisplayskip
   \let\@listi\@listI}
\makeatother

\usepackage{amsmath,amssymb,amsthm,amsfonts,cases}
\usepackage{mathrsfs,booktabs}
\usepackage{url}
\usepackage{authblk}
\usepackage[usenames]{color}
\usepackage{graphicx}
\usepackage{float}
\usepackage{caption}
\usepackage{subcaption}
\usepackage{geometry}
\usepackage{multirow}
\usepackage{mathptmx}
\geometry{left=1.25cm,right=0.75cm,top=0.75cm,bottom=1.25cm,footskip=0.7cm}
\allowdisplaybreaks
\setlength{\parskip}{0cm}
\setlength{\parindent}{1em}
\usepackage[compact]{titlesec}
\titlespacing{\section}{0pt}{2ex}{1ex}
\titlespacing{\subsection}{0pt}{1ex}{0ex}
\titlespacing{\subsubsection}{0pt}{0.5ex}{0ex}

\usepackage[onehalfspacing]{setspace}
\raggedbottom

\usepackage[normalem]{ulem}

\usepackage{mathtools}
\mathtoolsset{showonlyrefs=true}

\usepackage[round, authoryear]{natbib}

\usepackage{hyperref}

\newtheorem{remark}{Remark}[section]
\newtheorem{lemma}{Lemma}[section]
\newtheorem{theorem}{Theorem}[section]

\newtheorem{proposition}{Proposition}[section]
\newtheorem{corollary}{Corollary}[section]
\newtheorem{definition}{Definition}[section]
\newtheorem{assumption}{Assumption}[section]

\makeatletter\@addtoreset{equation}{section} \makeatother

\newcommand{\Proj}{\mathrm{Proj}}
\newcommand{\Dist}{\mathrm{Dist}}

\DeclareMathOperator*{\argmin}{arg\,min}
\DeclareMathOperator*{\argmax}{arg\,max}

\begin{document}

\title{Robust optimal investment and consumption strategies with portfolio constraints and stochastic environment}

\author[1,3]{Len Patrick Dominic M. Garces\thanks{Email: LenPatrickDominic.Garces@uts.edu.au}}
\author[2,3]{Yang Shen\thanks{Email: y.shen@unsw.edu.au; Corresponding Author}}
\affil[1]{\small School of Mathematical and Physical Sciences, University of Technology Sydney, Ultimo NSW 2007, Australia}
\affil[2]{\small School of Risk and Actuarial Studies, University of New South Wales, Syndey NSW 2052, Australia}
\affil[3]{\small ARC Centre of Excellence in Population Ageing Research, University of New South Wales, Sydney NSW 2052, Australia}

\date{27 June 2024}
\maketitle

\begin{abstract}
\begin{singlespace}
\noindent
    We investigate a continuous-time investment-consumption problem with model uncertainty in a general diffusion-based market with random model coefficients. We assume that a power utility investor is ambiguity-averse, with the preference to robustness captured by the homothetic multiplier robust specification, and the investor's investment and consumption strategies are constrained to closed convex sets. To solve this constrained robust control problem, we employ the stochastic Hamilton-Jacobi-Bellman-Isaacs equations, backward stochastic differential equations, and bounded mean oscillation martingale theory. Furthermore, we show the investor incurs (non-negative) utility loss, i.e. the loss in welfare, if model uncertainty is ignored. When the model coefficients are deterministic, we establish formally the relationship between the investor's robustness preference and the robust optimal investment-consumption strategy and the value function, and the impact of investment and consumption constraints on the investor's robust optimal investment-consumption strategy and value function. Extensive numerical experiments highlight the significant impact of ambiguity aversion, consumption and investment constraints, on the investor's robust optimal investment-consumption strategy, utility loss, and value function. Key findings include: 1) short-selling restriction always reduces the investor's utility loss when model uncertainty is ignored; 2) the effect of consumption constraints on utility loss is more delicate and relies on the investor's risk aversion level.
\end{singlespace}    
\end{abstract}

\noindent
{\bf Keywords:} Finance, Robust portfolio selection, Ambiguity aversion, Portfolio constraint, Backward stochastic differential equation

\section{Introduction}

Portfolio selection has been one of the central topics in modern finance. \citet{Markowitz1952} and \citet{Merton1969} are two groundbreaking works in this field and formulate portfolio selection as stochastic optimisation problems under the mean-variance criterion and the utility maximisation criterion, respectively. While \citet{Markowitz1952} focuses on the static mean-variance portfolio selection problem, it has been extended to a wide range of continuous-time versions (see, for example, \citet{ZhouLi2000}). The early investigation of \citet{Merton1969} on optimal investment-consumption strategies starts from a continuous-time setting. Under the continuous-time framework, portfolio selection problems are commonly tackled by stochastic control techniques, in which model coefficients in the state processes (for example, wealth and stochastic factors) are assumed to be known or can be estimated from real market data. For the latter scenario, a widely accepted approach proceeds as follows: 1) solving the stochastic control problem; and 2) plugging the estimated parameters into the optimal strategies obtained in the previous step. This approach leads to the so-called plug-in strategies. However, the performance of the plug-in strategies is highly sensitive to estimation errors. It has long been documented in empirical studies, for example \citet{Michaud1989}, that the mean-variance portfolios are unstable and perform poorly in terms of their out-of-sample mean and variance. As pointed out by \citet{Merton1980}, it is notoriously difficult to accurately estimate the sample mean of asset returns. When the number of stocks is large, \citet{Dubois-Veraart2015} show that the simple plug-in strategy can result in large utility loss due to accumulated estimation errors across the positions in multiple stocks.

Further exacerbating the difficulty of resolving the portfolio selection problem is the \textit{uncertainty} around the dynamics of stochastic factors influencing the financial market and, consequently, the investor's attitude towards this ambiguity (as illustrated, for example, by the famous \citet{Ellsberg1961} paradox). One formulation of this type of uncertainty is the situation where the investor has a best guess, or reference, model for the dynamics of the financial state variables/processes, but does not completely trust it. The investor then makes the investment and consumption decisions by considering models ``similar'' to the reference model, but the optimisation criterion involves a penalty that increases as the alternative model becomes 
more dissimilar from the investor's reference model, as usually measured by the relative entropy. By doing so, the investor seeks an optimal investment-consumption strategy that is robust to the uncertainty surrounding the financial model by taking the worst-case outcome, among all alternative models, of the optimisation problem. Indeed, this is the spirit of the robust control approach, based on the notion of multiplier preferences, pioneered by \citet{AndersonHansenSargent2003}.\footnote{We note that the robust control approach is one of the various ways in which model or Knightian uncertainty is incorporated in economic agents' decision making. We refer to the surveys by \citet{EpsteinSchneider2010}, \citet{HansenMarinacci2016}, \citet{GuidolinRinaldi2013}, and \citet{Marinacci2015}, and the references therein, for an overview of other methods of incorporating ambiguity in financial decision making. 
In particular, the framework of \citet{AndersonHansenSargent2003}, which involves a robust utility functional with a relative entropy penalty term, has been generalised via the notion of variational preferences introduced by \citet{MaccheroniMarinacciRustichini2006}. In turn, the variational preferences framework of \citet{MaccheroniMarinacciRustichini2006} also extends the multiple priors model of \citet{GilboaSchmeidler1989} and \citet{ChenEpstein2002} by including an explicit ``ambiguity index'' in the agent's objective function. We note that the variational preferences framework has been adapted into various settings in the context of robust optimal investment problems; see, for example, \citet{HHSchied2007}, \citet{LaevenStadje2014}, and \citet{yang2024optimal}.}

The robust control approach is one of the key remedies to overcome the difficulty caused by model uncertainty in portfolio selection problems. Under the robust control approach, portfolio selection problems are formulated as minimax problems, in which an investor aims to maximise an expected utility (or more generally, performance functional) in a worst-case scenario among a set of alternative probability measures. In the literature, there are two main specifications of robustness that are widely adopted to capture the investor's ambiguity aversion: multiplier robustness and locally constrained robustness. In both specifications, the relative entropy is a widely adopted measure to quantify the distance between a reference probability measure and alternative measures, which reflects the degree of model misspecification.\footnote{The use of relative entropy to quantify the distance between the probability measures implies that the decision-maker only considers alternative probability measures which are absolutely continuous with respect to the reference probability measure. Other notions of distance, such as the Wasserstein distance and optimal transport cost functions, have also been employed to characterise the so-called \textit{ambiguity sets} (or sets of alternative probability measures) in the context of optimisation and portfolio selection; see, for example, \citet{BlanchetMurthy2019} and \citet{PflugWozabal2007}.} In the former specification (see, for example, \citet{AndersonHansenSargent2003}, \citet{Maenhout2004}, and \citet{yan2020robust}), the relative entropy is added as a penalty term to the investor's objective function. In the latter specification (see, for example, \citet{Ait-SahaliaMatthys2019} and \citet{TrojaniVanini2002}), the relative entropy is constrained to not exceed a given level. Dating back to the pioneering work of \citet{AndersonHansenSargent2003} on the continuous-time robust control approach, robustness has been applied in a variety of decision-making problems in finance and economics and this idea is culminated in the monograph \cite{HansenSargent2011} by the two Nobel laureates: Lars Peter Hansen and Thomas Sargent. In the context of portfolio selection, \citet{UppalWang2003} extends the framework of \citet{AndersonHansenSargent2003} to allow investors to have varying levels of ambiguity aversion regarding the marginal distribution of any subset of the available investment options. Furthermore, \citet{Maenhout2004} modifies the methodology in \citet{AndersonHansenSargent2003} by proposing homothetic robustness which preserves the tractability of the corresponding robust portfolio selection problem.

The robust optimal control framework of \citet{AndersonHansenSargent2003} has since been extended and adapted into various settings (typically pertaining to assumptions on the financial market model and/or the investor's utility function) in the context of optimal investment and portfolio selection problems. \citet{Maenhout2006} considers the robust maxmimisation of (power) utility on terminal wealth with stochastic investment opportunities, as modelled by a mean-reverting risk premium process. 
\citet{Liu2010} tackles a robust optimal investment-consumption problem on an infinite time horizon under recursive preferences, assuming also that the expected returns are modelled as a mean-reverting process.\footnote{\citet{Liu2011} further considers an optimal investment-consumption problem over a finite-time horizon when the expected returns are modelled as a hidden Markov model. However, in contrast to \citet{Liu2010} and \citet{Maenhout2004},  \citet{Liu2011} uses a locally constrained robustness approach (by imposing an essential upper bound on the drift distortion process), following the multiple priors formulation of \citet{GilboaSchmeidler1989} and \citet{ChenEpstein2002}.} 
\citet{FlorLarsen2014} employ the homothetic preferences framework of \citet{Maenhout2004} to the robust optimisation of a power utility investor's terminal wealth when the short-term interest rate is stochastic. Under a constant market price of risk, the expected return on the risky and risk-free assets are stochastic due to their dependence on the short rate. 
\citet{WeiYangZhuang2023} further extend the financial market setting of the robust investment-consumption problem with recursive preferences by assuming the price of risky asset is driven by a jump-diffusion stochastic volatility model. As the market is incomplete, the authors assume that the investor has access to derivatives whose underlying is the risky asset. The investor's objective function is consequently formulated using the robust homothetic preferences of \citet{Maenhout2004}. In terms of assumptions on the financial market or investment opportunities available to the investor, the aforementioned papers allow for certain financial factors (e.g. the expected returns on the risky asset, the interest rate, the volatility of the risky asset price) to be modelled by a given stochastic process. All other model parameters are otherwise completely known (as a constant or a deterministic function of time). Furthermore, aside from admissibility conditions on the controls (investment and/or consumption decisions of the investor), no other constraints are imposed.

This paper considers a general robust portfolio selection problem under the multiplier robustness specification of \citet{AndersonHansenSargent2003} and \citet{Maenhout2004}. The generality comes from two aspects. First, both investment and consumption strategies are subject to constraints. More specifically, we assume that both investment and consumption strategies are constrained within closed convex sets. This setup includes many special cases, such as 1) the investor is not allowed to hold any short positions in the risky assets, 2) the investor is only allowed to consume up to her current level of wealth, and 3) the investor has to maintain subsistence consumption. When there is no model uncertainty, optimal investment and consumption problems with constraints on investment and consumption (or either one of the two) have been investigated by \citet{CvitanicKaratzas1992}, \citet{HuImkellerMuller2005,hu2022optimal}, \citet{KraftSeifriedSteffensen2013}, \citet{MaYiGuan2018}, and \citet{Zariphopoulou1994}, among others. However, our results show that when model ambiguity is factored into the investor's preferences, the investor's ambiguity aversion (i.e. preference for robustness) significantly impacts the optimal consumption strategy when the investment constraints are present. We establish this formally in Proposition \ref{prop:av-coefficnet}. Second, all model coefficients are allowed to be random. Thus, there is joint model uncertainty from multiple sources, such as equity return, stochastic interest rate, and stochastic volatility. In this general modelling framework, we utilise the stochastic Hamilton-Jacobi-Bellman-Isaacs (HJBI) equation and backward stochastic differential equations (BSDE) approach to tackle the constrained robust control problem. For that purpose, we first establish the existence and uniqueness results for the solutions to a set of quadratic BSDEs. By applying those results in conjunction with bounded mean oscillation (BMO) martingale theory (see \citet{Kazamaki2006}), we provide a verification theorem for the optimality of obtained strategies. We then study a special example in which all model coefficients are deterministic. In that case, the related quadratic BSDEs reduce to nonlinear ordinary differential equations (ODEs), which allows us to perform a comparative static analysis on the robust optimal strategies and value functions with respect to investment and consumption constraints. 

To the best of our knowledge, \citet{YangLiangZhou2019} is the only other work that studies the robust portfolio selection problem with constraints on both investment and consumption strategies and with random model coefficients. However, their characterisation of model uncertainty is different: the expected returns and the covariance matrix are uncertain and the uncertainty set is built on the unknown returns and covariance matrix.\footnote{This formulation of robust utility maximization problems involves taking the worst-case scenario over an uncertainty set around the ambiguous parameters (e.g. ambiguous returns or ambiguous volatility) or over a possibly non-dominated family of probability measures arising from the uncertainty set on the ambiguous parameters. We refer, for example, to \citet{BiaginiPinar2017}, \citet{EpsteinJi2013}, and \citet{LinRiedel2021} for detailed discussions of robust portfolio selection problems of this type.} 
Furthermore, the robust utility maximisation problem considered by \citet{YangLiangZhou2019} does not explicitly involve the degree of the investor's ambiguity aversion. In our analysis, a key contribution we make to the literature is the analysis of the interplay between the investor's ambiguity aversion and the investment and portfolio constraints on the robust optimal strategies. Furthermore, as we employ the robust optimal control framework of \citet{AndersonHansenSargent2003} and \citet{Maenhout2004}, the core mathematical methodology of this paper is distinct from that of \citet{YangLiangZhou2019}. In particular, while \citet{YangLiangZhou2019} approach their problem via using the martingale argument that is firstly introduced in \citet{HuImkellerMuller2005}, we apply the stochastic HJBI equation and BSDEs to solve our robust control problem. 
Unlike \citet{YangLiangZhou2019} where the optimal strategies and value function are deterministic and can be characterised via the solutions to nonlinear ODEs, the robust optimal strategies and value function in the general case of our paper 
are stochastic and are thus represented by the solutions to nonlinear BSDEs.

The rest of the paper is structured as follows. Section \ref{sec:statement} introduces the mathematical model for the financial market and formulates the robust investment-consumption problem for a power utility investor with constraints on the investment and consumption strategies. 
In Section \ref{sec:main}, we solve the problem by the stochastic HJBI equation and represent the robust optimal investment and consumption strategies by unique solutions to corresponding BSDEs, as accomplished in Proposition \ref{prop:HJBI-solution}. We also establish a verification theorem (Theorem \ref{thm:main}) via the theory of BMO martingales and show in Proposition \ref{prop:utilloss} that the investor incurs a nonnegative utility loss when she ignores model uncertainty and pursues the resulting (sub-optimal) strategy. Section \ref{sec:specialdeterministic} discusses the special case of the robust investment-consumption problem with deterministic coefficients. With this simplification, the BSDEs characterising the value function and the optimal strategies reduce to ODEs; see Corollary \ref{cor:DeterministicCaseSolution}. Furthermore, in the deterministic case, we prove that, if short-selling is not allowed, the optimal consumption strategy is decreasing (resp. increasing) in the ambiguity aversion parameters when the coefficient of relative risk aversion is greater (resp. less) than unity. Furthermore, we formally establish the effect of investment and consumption constraints on the optimal consumption strategy and value function in Proposition \ref{prop:det-comparison}. These results are illustrated in a series of numerical experiments. Finally, Section \ref{sec:conclusion} concludes the paper.

\section{Statement of the Problem}
\label{sec:statement}

In this section, we introduce the stochastic models of the financial market and formulate the robust investment-consumption problem with portfolio constraints and under a stochastic environment.

We consider a probability space $(\Omega, \cal F, \mathbb F, \mathbb P)$, satisfying the usual conditions, which supports an $n$-dimensional standard Brownian motion $\{ W (t) \}_{t\in[0,T]}$ defined on this probability space. Here, 
$T$ denotes the terminal time of the time horizon, $\mathbb F := \{ {\cal F}_t \}_{t\in[0,T]}$ is a complete filtration, and $\mathbb P$ is a reference probability measure. In what follows, we study a financial market with $m+1$ primitive assets, including one risk-free asset and $m$ risky assets (i.e. stocks), where $m \leq n$,  We allow the financial market to be incomplete in the sense that $m < n$.
Suppose that the price process of the risk-free asset and the $i^{th}$ stock are respectively governed by
\begin{align*}
d S_0 (t) & = r (t) S_0 (t) dt, & S_0(0)  &= 1 , \\
d S_i (t) & = S_i(t) \bigg [ \mu_i(t) dt + \sum^n_{j=1} \sigma_{ij} (t) d W_j (t)\bigg ], & S_i(0)  &= s_i > 0 ,
\end{align*}
where $r (t)$ is the risk-free interest rate, $\mu_i(t)$ and $\sigma_{ij}(t)$ are the appreciation rate and the volatility, respectively, of the stock, for each $i = 1, 2,\ldots, m$ and $j = 1, 2,\ldots, n$. To simplify our notation, we denote by $B (t) : = (\mu_1 (t), \mu_2 (t), \ldots, \mu_m (t))^\top - r (t) {\bf 1}_m$ and $\sigma (t) : =[\sigma_{ij} (t)]_{i=1,2,\ldots,m;j=1,2,\ldots,n}$ the risk premium vector and the volatility matrix, respectively.

Suppose that an investor can invest in the aforementioned financial market. Denote by $\pi_i (t)$ the proportion of wealth allocated to the $i^{th}$ stock
at time $t$, for $i=1,2,\ldots,m$, and by $c (t)$ the proportion of wealth consumed at time $t$. Let $\pi(t) : = (\pi_1(t), \pi_2 (t), \ldots, \pi_m(t))^\top$ denote the vector of the proportions invested in the $m$ stocks available in the financial market. Hence, the investor's wealth process satisfies the following stochastic differential equation (SDE):
\begin{align*}
d X(t) = X(t) \left [ r(t) + \pi(t)^\top B(t) - c(t) \right ] dt
+ X(t) \pi(t)^\top \sigma(t) d W(t), \quad X(0) = x_0 > 0 .
\end{align*}
Here, $\{\pi(t)\}_{t\in[0,T]}$ and $\{c (t)\}_{t\in[0,T]}$ are referred to as the investment and consumption strategies, respectively.

This paper assumes that there are constraints on both investment and consumption strategies. More specifically, we assume that the investor's investment and consumption strategies are constrained in two closed convex sets $\Pi \in \mathbb R^m$ and $\mathscr C \in [0 , \infty]$, respectively. Without loss of generality, we can further specify $\mathscr C$ by an interval $[\underline c, \overline c]$ that is a subset of $[0 , \infty]$, i.e. $\mathscr C : = [\underline c, \overline c] \subset [0 , \infty]$. In other words, the consumption strategy is constrained between a (non-negative) floor $\underline c$ and a ceiling $\overline c$ (which is allowed to be infinity), corresponding to the lower and upper bounds of the consumed wealth proportion, respectively. This specification of  $\mathscr C$ will facilitate the investigation of the interaction between portfolio constraints and model uncertainty.

In practice, the investor is concerned about model misspecification that is caused by the deviation from the reference probability measure $\mathbb P$. To quantify such deviation, we define a set of alternative probability measures via the Radon-Nikodym derivative:
\begin{align*}
\frac{d {\mathbb Q}}{d {\mathbb P}} \bigg |_{{\cal F}_T} = \Lambda (T)
:= \exp \bigg \{ - \frac{1}{2} \int^{T}_0 |\phi(t)|^2 d t + \int^{T}_0 \phi^\top (t) d W (t) \bigg \} ,
\end{align*}
where $\{ \phi (t) \}_{t\in[0,T]}$ is called the distortion process. Whenever needed, we highlight the dependence of the alternative measure ${\mathbb Q}$ on $\phi$ by writing ${\mathbb Q} (\phi)$. Denote by ${\mathscr Q}$ the set of all alternative probability measures equivalent to $\mathbb P$.

Under the measure ${\mathbb Q}$, the stochastic process $\{ W^{\mathbb Q} (t) \}_{t\in[0,T]}$ defined by
$W^{\mathbb Q} (t) := W (t) - \int^t_0 \phi (s) d s ,$
is an $n$-dimensional standard Brownian motion.

Note that the relative entropy between $\mathbb P$ and $\mathbb Q$ is computed as
\begin{align}
	{\mathbb E}^{\mathbb Q} \bigg [ \log \bigg ( \frac{d {\mathbb Q}}{d {\mathbb P}} \bigg |_{{\cal F}_T} \bigg ) \bigg ]
	&= {\mathbb E}^{\mathbb Q} \bigg [ - \frac{1}{2} \int^{T}_0 |\phi(t)|^2 d t + \int^{T}_0 \phi^\top (t) d W (t) \bigg ] \nonumber \\
	&= {\mathbb E}^{\mathbb Q} \bigg [ \frac{1}{2} \int^{T}_0 |\phi(t)|^2 d t + \int^{T}_0 \phi^\top (t) d W^{\mathbb Q} (t) \bigg ] = {\mathbb E}^{\mathbb Q} \bigg [ \frac{1}{2} \int^{T}_0 |\phi(t)|^2 d t \bigg ] .
\end{align}
This quantity will be related to a penalty term to be added to the investor's objective function.

To simplify our presentation and analysis, we denote by
$\theta (t) : = \sigma (t)^\top \Sigma (t)^{-1} B (t)$
the market price of risk, where $\Sigma (t) : = \sigma (t) \sigma (t)^\top$ is the covariance matrix of the risky assets,
and define
$p (t) := \sigma (t)^\top \pi (t)$
as the investor's risk exposure to Brownian shocks, with the $i^{th}$ entry representing the proportion of wealth exposed to the random shocks arising from the $i^{th}$ Brownian motion, 
for each $i = 1, 2, \ldots, n$. Define $\Gamma (t) : = \sigma (t)^\top \Pi$, which is also a convex set for each $(t, \omega) \in [0, T] \times \Omega$, inherited from the convexity of $\Pi$.
Thus, the risk exposure $p (t)$ is constrained to take values in the set $\Gamma (t)$. As in \citet{HuImkellerMuller2005}, we consider that the investor chooses the risk exposure
$p (t)$, at each $t \in [0, T]$. Whenever there is no risk of confusion, we also call $\{ p (t) \}_{t\in[0,T]}$ the investment strategy. 

Using $p (t)$ and $\theta (t)$, we rewrite the wealth equation under $\mathbb Q$ as follows:
\begin{align}\label{eq:wealth-equation}
d X(t) = X (t) \left [ r (t) + p (t)^\top ( \theta (t) + \phi (t) ) - c(t) \right ] dt + X (t) p (t)^\top d W^{\mathbb Q} (t) .
\end{align}
We assume that the investor aims to maximise the expected discounted utility of intertemporal consumption and terminal wealth, which
is subject to a penalty measured by the relative entropy between $\mathbb Q$ and $\mathbb P$. Therefore, the dynamic objective function
reads as follows:
\begin{align}\label{eq:problem}
V(t,x) = \sup_{(p, c) \in {\cal A}} \inf_{\phi \in \cal{B}}{\mathbb E}_{t,x}^{{\mathbb Q}}
\Bigg[ \int_t^T e^{-\int_t^s \rho(\nu) d\nu} U_1 (c(s) X (s)) ds
+ e^{-\int_t^T \rho(\nu) d\nu} U_2 (X(T)) + \sum^n_{i = 1} \int_t^T e^{-\int_t^s \rho(\nu) d\nu} \frac{\phi_i^2(s)}{2 \Psi_i(s)} ds \Bigg] ,
\end{align}
where ${\mathbb E}^{\mathbb Q}_{t,x} [\cdot]$ denotes the conditional expectation under $\mathbb Q$ given ${\cal F}_t$ and $X (t) = x$, ${\cal A}$ and ${\cal B}$ are the admissible sets of investment-consumption strategies and distortion processes (to be defined formally in Definition \ref{def:adm}), and $\rho (u)$ denotes
the investor's subjective discount rate. The third term in equation \eqref{eq:problem} is the so-called penalty term. Inspired by \citet{Maenhout2004}, we choose the normaliser on the penalty term as $\Psi_i (s)=\frac{\eta_i}{(1-\gamma)V(s,X(s))}$, in which $\eta_i \geq 0$ denotes the investor's ambiguity aversion coefficient corresponding to the $i^{th}$ Brownian motion.
The farther away the alternative measure $\mathbb Q$ is from the reference measure $\mathbb P$, the larger is the penalty term in \eqref{eq:problem}.

\begin{remark}
The infimum in the objective function \eqref{eq:problem} is taken with respect to the distortion process $\phi$ over the admissible set $\cal B$. This is equivalent to taking the infimum with respect to the probability measure $\mathbb Q (\phi)$ over a set of equivalent probability measures, denoted by ${\cal Q}_{\cal B}$, which is a subset of ${\cal Q}$.
\end{remark}

In this paper, we assume that the investor's preference is described by power utility functions
\begin{eqnarray*}
	U_1 (c x) = \frac{(cx)^{1-\gamma}}{1-\gamma} \quad \mbox{and} \quad U_2 (x) = \beta \frac{x^{1-\gamma}}{1-\gamma} ,
\end{eqnarray*}
where $\gamma \in (0, 1) \cup (1, \infty)$ denotes the coefficient of relative risk aversion and $\beta > 0$ is the weight of the utility derived from terminal wealth.

To simplify our presentation, we introduce
$H : = \mbox{diag} \big [ ( \eta_1, \eta_2, \cdots, \eta_n ) \big ] ,$
which implies that
$H^{-1} = \mbox{diag} \big [ \big ( \frac{1}{\eta_1}, \frac{1}{\eta_2}, \cdots, \frac{1}{\eta_n} \big ) \big ]$.
Thus, the penalty term can be rewritten concisely as follows:
\begin{align*}
	{\mathbb E}_{t}^{\mathbb Q}
	\Bigg[\sum^n_{i = 1} \int_t^T e^{-\int_t^s \rho(\nu) d\nu} \frac{\phi_i^2(s)}{2 \Psi_i(s)} ds \Bigg] =
	{\mathbb E}_{t}^{\mathbb Q}
	\Bigg[\frac{1-\gamma}{2} \int_t^T e^{-\int_t^s \rho(\nu) d\nu} \phi (s)^\top H^{-1} \phi (s) V (s, X (s)) ds \Bigg] .
\end{align*}

In the general model setting, we assume that the model coefficients satisfy the following conditions.

\begin{assumption}\label{ass:bound}
	The model coefficients $\{r (t)\}_{t\in[0,T]}$, $\{\rho (t)\}_{t\in[0,T]}$, $\{\mu_i(t)\}_{t\in[0,T]}$ and $\{\sigma_{ij}(t)\}_{t\in[0,T]}$ are $\mathbb F$-predictable and bounded processes, for $i=1,2,\ldots,m$ and $j=1,2,\ldots,n$, and the covariance matrix $\Sigma (t) = \sigma (t) \sigma (t)^\top$ is invertible, for any $t \in [0, T]$.
\end{assumption}

Let $k$ be a generic natural number. To simplify our notation, we define the following spaces of stochastic processes that are frequently used in the paper:
\begin{itemize}
	\item ${\cal L}^q_{\mathbb P} (0, T; \mathbb R^k)$: the space of $\mathbb R^k$-valued, $\mathbb F$-adapted processes such that
	\begin{align}
		| f |_{{\cal L}^q_{\mathbb P} (0, T; \mathbb R^k)} : = \left\{{\mathbb E} \bigg[ \bigg(\int^T_0 | f (t) |^2 d t\bigg)^{\frac{q}{2}} \bigg]\right\}^{\frac{1}{q}} < \infty , \quad & \mbox{where} \ q > 0 .
	\end{align}
	\item ${\cal S}^q_{\mathbb P} (0, T; \mathbb R^k)$: the space of $\mathbb R^k$-valued, $\mathbb F$-adapted, c\`adl\`ag processes $f$ such that
	\begin{align*}
		| f |_{{\cal S}^q_{\mathbb P} (0, T; \mathbb R^k)} : =
		\left\{
		\begin{aligned}
			& \left\{{\mathbb E} \bigg[ \sup_{t \in [0,T]} | f (t) |^q \bigg]\right\}^{\frac{1}{q}} < \infty , \quad & \mbox{if} \ q \in (0, \infty) , \\
			& \sup_{t \in [0,T]} \{ | f (t) |_\infty \} < \infty , \quad & \mbox{if} \ q = \infty ,
		\end{aligned}
	    \right.
	\end{align*}
where $| f (t) |_\infty$ denotes the essential supremum of $f(t)$ under $\mathbb P$.
\end{itemize}
If we replace the probability measure $\mathbb P$ by
an alternative measure $\mathbb Q$, we can define the spaces ${\cal L}^q_{\mathbb Q} (0, T; \mathbb R^k)$ and ${\cal S}^q_{\mathbb Q} (0, T; \mathbb R^k)$ similarly. 

Next, we define formally the admissible strategies that are of interest to us.

\begin{definition}\label{def:adm}
An investment-consumption strategy and a distortion process are admissible if
\begin{enumerate}
\item $u:=(p, c)$ and $\phi$ are $\mathbb F$-predictable processes taking values in $\Gamma \times \mathscr C$ and $\mathbb R^n$, respectively;
\item the wealth equation \eqref{eq:wealth-equation} associated with $u$ and $\phi$ admits a unique non-negative solution
such that $X (\cdot) \in {\cal S}^2_{\mathbb Q} (0, T; \mathbb R)$ and ${\mathbb E}^{\mathbb Q} [ \sup_{t \in [0, T]} |X (t)|^q ] < \infty$,
where $\mathbb Q := \mathbb Q (\phi) \in {\cal Q}$, for any $q \leq 2$;
\item the Radon-Nikodym derivative process $\{\Lambda (t)\}_{t\in[0,T]}$ associated with $\phi$ is a $\mathbb P$-martingale;
\item the following integrability conditions hold true: $p (\cdot) X (\cdot) \in {\cal L}^2_{\mathbb Q} (0, T; \mathbb R^n)$,  $c (\cdot) X (\cdot) \in {\cal L}^1_{\mathbb Q} (0, T; \mathbb R)$ and $\phi (\cdot) \in {\cal L}^2_{\mathbb Q} (0, T; \mathbb R^n)$, where $\mathbb Q = \mathbb Q (\phi) \in {\cal Q}$.
\end{enumerate}
The sets of all admissible investment-consumption strategies and distortion processes are denoted by ${\cal A}$ and ${\cal B}$, respectively.
\end{definition}

If it exists, an admissible investment-consumption strategy that achieves the supremum in Problem \eqref{eq:problem} is called a robust optimal investment-consumption strategy (or, in short, an optimal investment-consumption strategy). An admissible distortion process that achieves the infimum in Problem \eqref{eq:problem}, if it exists, is called an optimal distortion process.

In the remainder of the paper, we will frequently use the distance and projection operators, which are defined formally as follows. For any $v \in \mathbb R^m$, we denote by $\mbox{Dist}_{\Gamma} \{v\}
: = \min_{v^\prime\in{\Gamma}} |v-v^\prime|$ the distance between the vector $v$ and the closed convex set $\Gamma \subset \mathbb R^m$, and by $\mbox{Proj}_{\Gamma} \{v\} : = \argmin_{v^\prime\in{\Gamma}} |v - v^\prime|^2$ the orthogonal projection
of $v$ onto ${\Gamma}$. One can refer to \citet{HuImkellerMuller2005} for similar definitions of $\mbox{Dist}_{\Gamma} [\cdot]$ and $\mbox{Proj}_{\Gamma} [\cdot]$.

\section{Main Results}\label{sec:main}

In this section, we first identify the candidate of robust optimal investment and consumption strategies by solving the stochastic Hamilton-Jacobi-Bellman-Isaacs (HJBI) equation. Then we provide a verification theorem to confirm the optimality of the obtained candidate, and we discuss the utility loss when model uncertainty is ignored.

\subsection{Robust Optimal Strategies}

In this subsection, we solve Problem \eqref{eq:problem} by using the stochastic HJBI equation in conjunction with backward stochastic differential equations (BSDEs). To that end, we conjecture that the value function of the problem has a parametric form
$V (t, x) = G (t, x, Y (t))$, where $G$ belongs to the class of functions that are continuously differentiable
in the first argument and twice continuously differentiable in the second and third arguments, i.e. $G (\cdot, \cdot, \cdot) \in {\cal C}^{1, 2, 2} ([0, T] \times \mathbb R^2)$,
and $Y$ is the first component of the solution pair $(Y, Z)$ to a BSDE:
\begin{align}\label{eq:BSDE}
d Y (t) = - f (t, Y (t), Z (t)) d t + Z (t)^\top d W (t) , \quad Y (T) = \xi .
\end{align}
Note that the driver $f$ and the terminal value $\xi$ are unknown at the moment, and they will be determined in Proportion \ref{prop:HJB-solution} below.

To solve Problem \eqref{eq:problem}, we consider the following stochastic HJBI equation:
\begin{align}\label{eq:HJBI}
\sup_{u \in {\cal A}} \inf_{\phi \in {\cal B}} \bigg \{ {\cal L}^{(u, \phi)} [G (t, x, Y (t))]
+ \frac{1-\gamma}{2} \phi^\top H^{-1} \phi G (t, x, Y (t)) + U_1 (cx) \bigg \} = 0 ,
\quad G (T, x, \xi) = U_2 (x) ,
\end{align}
where the infinitesimal generator ${\cal L}^{(u, \phi)}$ acting on $G (t, x, Y (t))$ is defined as follows:
\begin{align}\label{eq:inf-generator}
	{\cal L}^{(u, \phi)} [G (t, x, Y (t))]
	:=& - \rho (t) G (t, x, Y (t)) + G_t (t, x, Y (t))
	+ x \left [ r (t) + p^\top (\theta (t) + \phi) - c \right ] G_x (t, x, Y (t)) \nonumber \\
	& - G_y (t, x, Y (t)) [f (t, Y (t), Z (t)) - Z^\top (t) \phi  ]
	+ \frac{1}{2} x^2 | p |^2 G_{xx} (t, x, Y (t)) \nonumber \\
	& + x p^\top Z (t) G_{xy} (t, x, Y (t))
	+ \frac{1}{2} |Z (t)|^2 G_{yy} (t, x, Y (t)) .
\end{align}
The stochasticity of \eqref{eq:HJBI} arises from its link to BSDE \eqref{eq:BSDE}. Unlike the classical version of stochastic HJB equations proposed by \citet{Peng1992}, the special structure of the value function allows us to separate the randomness of \eqref{eq:HJBI} from the time and state variables via BSDE \eqref{eq:BSDE}. One can refer
to \citet{RiederWopperer2012} and \citet{ShenWei2016} for similar stochastic HJB(I) equations.

The next proposition characterises the solution to the stochastic HJBI equation \eqref{eq:HJBI} and specifies the driver and the terminal value of BSDE \eqref{eq:BSDE}. In what follows, whenever there is no risk of confusion, 
we omit the time index ``$(t)$" of stochastic processes to simplify the notation. 

\begin{proposition}\label{prop:HJBI-solution}
The supremum and the infimum in the stochastic HJBI equation \eqref{eq:HJBI} are achieved by
\begin{align}\label{eq:p-c-star}
p^* (t) & = \bigg ( I + \frac{1}{\gamma} H \bigg )^{-\frac{1}{2}} \Proj_{\widehat \Gamma} \bigg \{ \bigg ( I + \frac{1}{\gamma} H \bigg )^{-\frac{1}{2}} \bigg [ \frac{\theta (t)}{\gamma} + \bigg(I - \frac{H}{1-\gamma}\bigg) \frac{Z (t)}{Y (t)} \bigg ] \bigg \} , \qquad
c^* (t) = \frac{1}{Y (t)} \vee \underline c \wedge \overline c , \\
\label{eq:phi-star}
\phi^* (t) & = -\frac{\gamma H}{1-\gamma} \frac{Z (t)}{Y (t)} - H \bigg ( I + \frac{1}{\gamma} H \bigg )^{-\frac{1}{2}} \Proj_{\widehat \Gamma} \bigg \{ \bigg ( I + \frac{1}{\gamma} H \bigg )^{-\frac{1}{2}} \bigg [ \frac{\theta (t)}{\gamma} + \bigg(I - \frac{H}{1-\gamma}\bigg) \frac{Z (t)}{Y (t)} \bigg ] \bigg \} ,
\end{align}
where ${\widehat \Gamma} := (I + \frac{H}{\gamma})^\frac{1}{2} \Gamma$ denotes an auxiliary constraint set in $\mathbb R^n$
and $\Proj_{\widehat \Gamma} \{\cdot\}$ denotes the orthogonal projection operator mapping an $\mathbb R^n$-valued vector onto ${\widehat \Gamma}$.

The stochastic HJBI equation \eqref{eq:HJBI} holds for all $(t, \omega, x) \in [0, T] \times \Omega \times \mathbb R^+$, if
\begin{align}\label{eq:G}
G (t, x, Y (t)) = \frac{x^{1-\gamma}}{1-\gamma} \times [Y (t)]^{\gamma} ,
\end{align}
with the terminal value $Y (T) = \beta^{\frac{1}{\gamma}}$,
and the driver of BSDE \eqref{eq:BSDE} is given by
\begin{align}\label{eq:driver}
f (t, Y, Z) =& \ \frac{Y}{Y \wedge \frac{1}{\underline c} \vee \frac{1}{\overline c}} + \frac{1}{\gamma} \bigg [ - \rho + (1-\gamma) r
+ \frac{1 - \gamma}{2 \gamma} \theta^\top \bigg ( I + \frac{1}{\gamma} H \bigg )^{-1} \theta \bigg ] Y
+ \frac{1}{\gamma} \theta^\top \bigg [ \bigg ( I + \frac{1}{\gamma} H \bigg )^{-1} - \gamma I \bigg ] Z \nonumber \\
& - \frac{1}{2\gamma(1-\gamma)} \frac{Z^\top (I + \frac{1}{\gamma}H)^{-1} H Z}{Y} - \frac{1-\gamma}{2} \Dist^2_{\widehat \Gamma} \bigg \{ \bigg ( I + \frac{1}{\gamma} H \bigg )^{-\frac{1}{2}} \bigg [ \frac{\theta}{\gamma} + \bigg(I - \frac{H}{1-\gamma}\bigg) \frac{Z}{Y} \bigg ] \bigg \} Y ,
\end{align}
where $\Dist_{\widehat \Gamma} \{ \cdot \}$ denotes the distance metric from an $\mathbb R^n$-valued vector to ${\widehat \Gamma}$.
\end{proposition}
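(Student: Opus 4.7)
The plan is to treat the statement as a guess-and-verify exercise for the stochastic HJBI equation \eqref{eq:HJBI} under the homothetic ansatz $G(t,x,y) = \frac{x^{1-\gamma}}{1-\gamma} y^\gamma$, solving the pointwise min-max problem in $(p, c, \phi)$ at each $(t,\omega)$ and reading off the driver $f$ from the resulting identity.

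First, I compute the partial derivatives of $G$ (noting that $G$ has no explicit $t$-dependence, so $G_t = 0$) and substitute them into the infinitesimal generator \eqref{eq:inf-generator}. After pulling out the common factor $G = \frac{x^{1-\gamma}}{1-\gamma} Y^\gamma$, the expression inside the braces of \eqref{eq:HJBI} reduces to a finite-dimensional min-max problem whose coefficients depend only on $(\theta, Y, Z, H, \gamma, \rho, r)$ and no longer on $x$. A key feature of the \citet{Maenhout2004} normalisation of the penalty is that the dependence on $V$ is cancelled exactly by the homothetic scaling, leaving a quadratic in $\phi$ whose sign aligns with the direction of optimisation once the sign of $G$, and the possible swap of $\sup\inf$ into $\inf\sup$ when $\gamma>1$, is accounted for.

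I next perform the inner optimisation over $\phi$. Since this is unconstrained and quadratic, the first-order condition immediately yields $\phi^*(p) = -Hp - \frac{\gamma H}{1-\gamma}\frac{Z}{Y}$, which I substitute back to obtain an outer problem that decouples cleanly in $p$ and $c$. For the consumption component, the unconstrained optimiser of $-(1-\gamma)c + c^{1-\gamma}/Y^\gamma$ is $c = 1/Y$; strict concavity/convexity then gives the projection onto $[\underline c, \overline c]$, i.e.\ $c^* = \frac{1}{Y}\vee\underline c\wedge\overline c$. For the investment component, I rewrite the remaining terms as $p^\top b - \frac{\gamma(1-\gamma)}{2} p^\top M p$, where $M := I + \frac{1}{\gamma} H$ is symmetric positive definite and $b$ is a specific linear combination of $\theta$ and $Z/Y$. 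The change of variables $q = M^{1/2} p$ sends $\Gamma$ onto the convex set $\widehat\Gamma = M^{1/2}\Gamma$ and turns the weighted-metric optimum into an ordinary Euclidean projection, producing $p^*$ in the form of \eqref{eq:p-c-star}; backsubstituting into $\phi^*(p)$ then gives \eqref{eq:phi-star}.

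Finally, I plug $(p^*, c^*, \phi^*)$ back into the HJBI equation and equate it to zero. Matching terms determines $f$: the consumption part yields the first term of \eqref{eq:driver}; the $\theta$-only quadratic contributes $\frac{1-\gamma}{2\gamma^2}\theta^\top M^{-1}\theta \, Y$; the $\theta$-$Z$ cross term simplifies via the identity $M^{-1}[(1-\gamma)I - H] = M^{-1} - \gamma I$, which follows from $H = \gamma(M-I)$; the pure $Z$-quadratic collapses, using $M^{-1}H = \gamma(I - M^{-1})$ together with the commutativity of $M^{-1}$ and $H$, to $-\frac{1}{2\gamma(1-\gamma)} Z^\top M^{-1} H Z/Y$; and the residual of the constrained projection supplies the $-\frac{1-\gamma}{2}\Dist^2_{\widehat\Gamma}\{\cdot\} Y$ term. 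The terminal condition is read off from $G(T,x,Y(T)) = U_2(x)$, giving $Y(T)^\gamma = \beta$, so $Y(T) = \beta^{1/\gamma}$.

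The principal obstacle is the last step: a moderately large collection of cross terms in $\theta$, $Z$, $H$, and $M$ must assemble into the compact form \eqref{eq:driver}. The combinatorics is manageable once one leans on the commuting identities between $H$ and $M$, but keeping track of signs under the two regimes $\gamma\in(0,1)$ and $\gamma>1$—both in the direction of the outer optimisation and in the concavity/convexity of the quadratic forms—requires some care.
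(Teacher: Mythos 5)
Your proposal follows essentially the same route as the paper's proof: the homothetic ansatz $G=\frac{x^{1-\gamma}}{1-\gamma}Y^\gamma$, first-order conditions in $c$ and $\phi$, elimination of $\phi$ as an affine function of $p$, completion of the square in the metric induced by $M=I+\frac{1}{\gamma}H$ so that the constrained problem in $p$ becomes a Euclidean projection onto $\widehat\Gamma=M^{1/2}\Gamma$, and back-substitution to read off the driver. The matrix identities you invoke (from $H=\gamma(M-I)$ one gets $M^{-1}[(1-\gamma)I-H]=M^{-1}-\gamma I$ and $M^{-1}H=\gamma(I-M^{-1})$) are exactly what is needed to collapse the cross and quadratic terms into the form of \eqref{eq:driver}, and they check out.

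Two remarks. First, the hedging about a possible swap of $\sup\inf$ into $\inf\sup$ when $\gamma>1$ is unnecessary: the homothetic normalisation makes the coefficient of the $\phi$-quadratic equal to $\frac{1-\gamma}{2}G=\frac{1}{2}x^{1-\gamma}Y^\gamma>0$ for every $\gamma\in(0,1)\cup(1,\infty)$, while the $p$-quadratic carries the factor $-\frac{\gamma}{2}x^{1-\gamma}Y^\gamma M$, which is negative definite; so the inner problem is always a convex minimisation, the outer one a concave maximisation, and the first-order saddle point is genuine with no interchange needed. Second, and more substantively, the assertion that ``the consumption part yields the first term of \eqref{eq:driver}'' is precisely the step that should be computed rather than asserted: substituting $c^*$ into $-c^*xG_x+U_1(c^*x)$ and normalising by $G_y=\frac{\gamma}{1-\gamma}x^{1-\gamma}Y^{\gamma-1}$ gives
\begin{align}
-\frac{1-\gamma}{\gamma}\,c^*Y+\frac{1}{\gamma}\,(c^*Y)^{1-\gamma},
\end{align}
which equals $c^*Y=\frac{Y}{Y\wedge\frac{1}{\underline c}\vee\frac{1}{\overline c}}$ only when the constraint is not binding, i.e.\ when $c^*Y=1$. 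When $c^*$ is clamped at $\underline c$ or $\overline c$ the two expressions differ, so the first term of \eqref{eq:driver} is not what the substitution actually produces; compare the correctly nonlinear consumption term in \eqref{eq:driver-wt} for the sub-optimal problem. The paper's own proof elides this in exactly the same way, so your sketch reproduces its gap rather than introducing a new one, but a complete verification must either restrict attention to the non-binding case or carry the extra term through.
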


\begin{proof}
First, applying the first-order condition to \eqref{eq:HJBI} with respect to (w.r.t.) $c$ and $\phi$ gives
\begin{align*}
    0 & = - G_x (t, x, Y) + (xc)^{-\gamma} , \\
    0 & = G_y (t, x, Y) Z^\top + x G_x (t, x, Y) p^\top + (1-\gamma) \phi^\top H^{-1} G (t, x, Y).
\end{align*}
By noting that $c$ is constrained in $[\underline c, \overline c]$ and there is no constraint on $\phi$, 
we solve the above two equations and obtain
\begin{align}\label{eq:c*-foc}
c^* & = \bigg\{ \frac{1}{x} \big [ G_x (t, x, Y) \big ]^{-\frac{1}{\gamma}} \bigg \} \vee \underline c \wedge \overline c , \\
\label{eq:phi*-foc}
\phi^* & = - \frac{1}{1-\gamma} H \bigg [ \frac{G_y (t, x, Y)}{G (t, x, Y (t))} Z + \frac{x G_x (t, x, Y)}{G (t, x, Y (t))} p \bigg ] .
\end{align}
Recalling the terminal value of the stochastic HJBI equation \eqref{eq:HJBI}, we try the {\it ansatz}:
$G (t, x, Y (t)) = \frac{x^{1-\gamma}}{1-\gamma} \times [Y (t)]^{\gamma} ,$
with $Y (T) = \beta^\frac{1}{\gamma}$. Thus, substituting the above {\it ansatz}
into \eqref{eq:c*-foc} and \eqref{eq:phi*-foc} gives 
\begin{align}\label{eq:c*-foc-solution}
c^* (t) & = \frac{1}{Y(t)} \vee \underline c \wedge \overline c , \\
\label{eq:phi*-foc-inter-solution}
\phi^* (t) & = - \frac{H}{1-\gamma} \bigg [ \gamma \frac{Z (t)}{Y (t)} + (1-\gamma) p (t) \bigg ] .
\end{align}

To solve the constrained maximisation problem w.r.t. $p$, we collect all the terms involving $p$ and $\phi^*$ in \eqref{eq:HJBI} as follows:
{\small
\begin{align*}
& \ x G_x (t, x, Y) p^\top [ \theta + \phi^* ] + \frac{1}{2} x^2 G_{xx} (t, x, Y) |p|^2
+ x G_{xy} (t, x, Y) p^\top Z + G_{y} (t, x, Y) Z^\top \phi^* + \frac{1-\gamma}{2} (\phi^*)^\top H^{-1} \phi^* G (t, x, Y) \\
& = \gamma x^{1-\gamma} Y^\gamma \bigg \{ - \frac{1}{2} p^\top \bigg (I + \frac{H}{\gamma} \bigg) p
+ p^\top \bigg [ \frac{\theta}{\gamma} + \bigg(I - \frac{H}{1-\gamma}\bigg) \frac{Z}{Y} \bigg ] \bigg \} - x^{1-\gamma} Y^\gamma \frac{\gamma^2}{2 (1-\gamma)^2} \frac{Z^\top H Z}{Y^2} \\
& = - x^{1-\gamma} Y^\gamma \frac{\gamma }{2} \bigg\{ p - \bigg ( I + \frac{1}{\gamma} H \bigg )^{-1} \bigg [ \frac{\theta}{\gamma} + \bigg(I - \frac{H}{1-\gamma}\bigg) \frac{Z}{Y} \bigg ]\bigg\}^\top \bigg (I + \frac{H}{\gamma} \bigg )
\bigg\{ p - \bigg ( I + \frac{1}{\gamma} H \bigg )^{-1} \bigg [ \frac{\theta}{\gamma} + \bigg(I - \frac{H}{1-\gamma}\bigg) \frac{Z}{Y} \bigg ]\bigg\} \\
&\quad + x^{1-\gamma} Y^\gamma
\bigg \{  \frac{1}{2\gamma} \theta^\top \bigg ( I + \frac{1}{\gamma} H \bigg )^{-1} \theta + \frac{1}{1-\gamma} \theta^\top \bigg [ \bigg ( I + \frac{1}{\gamma} H \bigg )^{-1} - \gamma I \bigg ]\frac{Z}{Y} + \frac{\gamma}{2} \frac{Z^\top}{Y}  \bigg[ I - \frac{1}{\gamma (1-\gamma)^2} \bigg ( I + \frac{1}{\gamma} H \bigg )^{-1} H \bigg] \frac{Z}{Y} \bigg \} .
\end{align*}
}
Thus, the constrained maximisation problem w.r.t. $p$ on $\Gamma$ is equivalent to
\begin{align}\label{eq:quad-min}
\min_{\widehat p \in \widehat \Gamma}
\left\| \widehat p - \bigg ( I + \frac{1}{\gamma} H \bigg )^{-\frac{1}{2}} \bigg [ \frac{\theta}{\gamma} + \bigg(I - \frac{H}{1-\gamma}\bigg) \frac{Z}{Y} \bigg ]\right\|^2 ,
\end{align}
where $\widehat p : = ( I + \frac{1}{\gamma} H )^{\frac{1}{2}} p$. Solving \eqref{eq:quad-min}
gives the maximiser
\begin{align}\label{eq:p*-foc-solution}
p^* (t) &= \bigg ( I + \frac{1}{\gamma} H \bigg )^{-\frac{1}{2}} {\widehat p}^* (t) \nonumber \\
&= \bigg ( I + \frac{1}{\gamma} H \bigg )^{-\frac{1}{2}} \Proj_{\widehat \Gamma} \bigg \{ \bigg ( I + \frac{1}{\gamma} H \bigg )^{-\frac{1}{2}} \bigg [ \frac{\theta}{\gamma} + \bigg(I - \frac{H}{1-\gamma}\bigg) \frac{Z}{Y} \bigg ] \bigg \} ,
\end{align}
where ${\widehat p}^* (t)$ is the unique solution to the quadratic minimisation problem \eqref{eq:quad-min}.

Plugging $p^* (t)$ into \eqref{eq:phi*-foc-inter-solution} leads to
\begin{align}\label{eq:phi*-foc-solution}
\phi^* (t)
=& - \frac{H}{1-\gamma}
\bigg [ \gamma \frac{Z}{Y} + (1-\gamma) p^* (t) \bigg ] \nonumber \\
=& -\frac{\gamma H}{1-\gamma} \frac{Z}{Y} - H \bigg ( I + \frac{1}{\gamma} H \bigg )^{-\frac{1}{2}} \Proj_{\widehat \Gamma} \bigg \{ \bigg ( I + \frac{1}{\gamma} H \bigg )^{-\frac{1}{2}} \bigg [ \frac{\theta}{\gamma} + \bigg(I - \frac{H}{1-\gamma}\bigg) \frac{Z}{Y} \bigg ] \bigg \} .
\end{align}
Substituting \eqref{eq:c*-foc-solution}, \eqref{eq:p*-foc-solution}, and \eqref{eq:phi*-foc-solution} into the stochastic HJBI equation \eqref{eq:HJBI}, we have
\begin{align}
 \frac{\gamma}{1-\gamma} x^{1-\gamma} Y^{\gamma - 1} \bigg \{ &- f + \frac{Y}{Y \wedge \frac{1}{\underline c} \vee \frac{1}{\overline c}}  + \frac{1}{\gamma} \bigg [ - \rho + (1-\gamma) r
+ \frac{1 - \gamma}{2 \gamma} \theta^\top \bigg ( I + \frac{1}{\gamma} H \bigg )^{-1} \theta \bigg ] Y \nonumber \\
&  + \frac{1}{\gamma} \theta^\top \bigg [ \bigg ( I + \frac{1}{\gamma} H \bigg )^{-1} - \gamma I \bigg ] Z
- \frac{1}{2\gamma(1-\gamma)} \frac{Z^\top (I + \frac{1}{\gamma}H)^{-1} H Z}{Y}  \nonumber \\
&  - \frac{1-\gamma}{2} \Dist^2_{\widehat \Gamma} \bigg \{ \bigg ( I + \frac{1}{\gamma} H \bigg )^{-\frac{1}{2}} \bigg [ \frac{\theta}{\gamma} + \bigg(I - \frac{H}{1-\gamma}\bigg) \frac{Z}{Y} \bigg ] \bigg \} Y \bigg \} = 0 .
\end{align}
Therefore, the driver of the BSDE \eqref{eq:BSDE} is given by \eqref{eq:driver}. The proof is now completed.
\end{proof}

\begin{remark}\label{rmk:implication}
	As in Merton's classic work \cite{Merton1971}, there are two components in the optimal investment strategy, namely myopic demand and hedging demand. Indeed, when the investment constraint is absent, $( I + \frac{1}{\gamma} H )^{-1} \frac{\theta (t)}{\gamma}$ and $( I + \frac{1}{\gamma} H )^{-1} (I - \frac{H}{1-\gamma}) \frac{Z (t)}{Y (t)} $ correspond to the myopic demand and the hedging demand, respectively. In such a special case, the former component can be considered as an adjusted Merton ratio, which reduces to the Merton ratio $\frac{\theta (t)}{\gamma}$ if $H$ is a zero matrix (when there is no ambiguity aversion or when the investor is ambiguity-neutral). In other words, the adjustment arises due to the investor's ambiguity aversion to model uncertainty.
\end{remark}

\begin{remark}\label{rmk:no-ambiguity}
When the investor is ambiguity-neutral, i.e. $H = 0_{n \times n}$, the optimal investment and consumption strategies and the optimal distortion process are
\begin{align}
p^*_0 (t) =  \Proj_{\widehat \Gamma} \bigg \{ \frac{1}{\gamma} \theta (t)
+ \frac{Z_0 (t)}{Y_0 (t)} \bigg \} , \quad c^*_0 (t) = \frac{1}{Y_0 (t)} \vee \underline c \wedge \overline c , \quad \phi^*_0 (t) = 0_n ,
\end{align}
and BSDE \eqref{eq:BSDE} becomes
\begin{align}\label{eq:BSDE-0}
	d Y_0 (t) = - f_0 (t, Y_0 (t), Z_0 (t)) d t + Z_0 (t)^\top d W (t) , \quad Y_0 (T) = \beta^\frac{1}{\gamma} ,
\end{align}
with the following driver:
\begin{align}
f_0 (t, Y_0, Z_0) = \frac{Y_0}{Y_0 \wedge \frac{1}{\underline c} \vee \frac{1}{\overline c}} + \frac{1}{\gamma} \bigg [ - \rho + (1-\gamma) r
+ \frac{1 - \gamma}{2 \gamma} |\theta|^2 \bigg ] Y_0  + \frac{1 - \gamma}{\gamma}  \theta^\top Z_0 - \frac{1-\gamma}{2} \Dist^2_{\widehat \Gamma} \bigg \{ \frac{\theta}{\gamma} + \frac{Z_0}{Y_0} \bigg \} Y_0 .
\end{align}
Here, $(Y_0, Z_0)$ is the unique solution to the above BSDE \eqref{eq:BSDE-0}.
\end{remark}

\begin{remark}
	As can be seen from the driver of BSDE \eqref{eq:BSDE}, the investment (resp. consumption) constraint affects not only the optimal investment (resp. consumption) strategy itself, but also the optimal consumption (resp. investment) strategy. This subtle interaction arises from the solution pair $(Y, Z)$ of BSDE \eqref{eq:BSDE}, which depends on both the investment and consumption constraints $\widehat \Gamma$ and $\mathscr C$.
\end{remark}

\subsection{Verification Theorem}

In this subsection, we provide a verification theorem 
to confirm that the investment-consumption strategy \eqref{eq:p-c-star} and the distortion process \eqref{eq:phi-star} obtained in Proposition \ref{prop:HJBI-solution} are indeed the optimal ones. For later use, we formally define bounded mean oscillation (BMO) martingales. One can refer to \citet{Kazamaki2006} for a systematic introduction of BMO martingales.

\begin{definition}
	We say that a local martingale $\{ M (t) \}_{t \in [0, T]}$ a BMO martingale under $\mathbb P$ if there exists a constant $K > 0$ such that for any $\mathbb F$-stopping time $\tau \in [0, T]$, the following inequality holds:
	\begin{align}\label{eq:BMO-ineq}
		{\mathbb E} [ (M (T) - M (\tau))^2 | {\cal F}_\tau ] \leq K .
	\end{align}
\end{definition}

The next lemma shows the existence and uniqueness of
a solution to BSDE \eqref{eq:BSDE} and the properties of the unique solution.

\begin{lemma}\label{lem:BSDE}
Under Assumption \ref{ass:bound}, BSDE \eqref{eq:BSDE} admits a unique solution $(Y, Z) \in {\cal S}^\infty_{\mathbb P} (0, T; \mathbb R) \times {\cal L}^2_{\mathbb P} (0, T; \mathbb R^n)$. Moreover,
the solution satifies the following properties: (1) $Y$ is bounded above zero, i.e. there exists $\varepsilon > 0$ such that $Y (t) \geq \varepsilon$, for any $t \in [0, T]$, $\mathbb P$-almost surely;
(2) the It\^o integral $\int^\cdot_0 Z^\top (s) d W (s)$ is a BMO martingale under $\mathbb P$.
\end{lemma}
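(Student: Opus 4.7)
The plan is to remove the singular $1/Y$ terms in the driver \eqref{eq:driver} by the logarithmic substitution $U := \log Y$, $V := Z/Y$, reducing \eqref{eq:BSDE} to a standard quadratic BSDE with bounded terminal condition, then applying classical existence/uniqueness results for quadratic BSDEs, and finally transferring the conclusions back to $(Y,Z)$. Formally applying It\^o's formula to $\log Y$ (valid wherever $Y>0$) yields
\begin{align*}
dU(t) = -\widetilde f(t, U(t), V(t))\,dt + V(t)^\top dW(t), \qquad U(T) = \tfrac{1}{\gamma}\log\beta,
\end{align*}
where $\widetilde f := f/Y + \tfrac{1}{2}|V|^2$ with $f$ evaluated at $(t, e^U, e^U V)$. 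The singular terms collapse neatly: $Z^\top(I+H/\gamma)^{-1}HZ/Y^2 = V^\top(I+H/\gamma)^{-1}HV$, the distance-squared term divided by $Y$ depends only on $V$, and the ratio $Y/\psi(Y)$ divided by $Y$ becomes $1/\psi(e^U)$, where $\psi(y) := y\wedge(1/\underline c)\vee(1/\overline c)$ is globally Lipschitz in $U$ and bounded between $\underline c$ and $\overline c$.

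Under Assumption \ref{ass:bound}, the transformed driver $\widetilde f$ satisfies: (i) $\widetilde f(t,U,0)$ is bounded uniformly in $(t, U)$; (ii) $\widetilde f$ is locally Lipschitz in $U$, uniformly in $V$ on bounded sets; and (iii) $|\widetilde f(t,U,V)| \leq C(1+|V|^2)$ on any bounded range of $U$, with $C$ depending only on the sup-norms of the model coefficients. These are the standard hypotheses of Kobylanski's theorem for quadratic BSDEs with bounded terminal condition, which yields a unique solution $(U,V) \in {\cal S}^\infty_{\mathbb P}(0,T;\mathbb R) \times {\cal L}^2_{\mathbb P}(0,T;\mathbb R^n)$. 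In particular, $U$ is uniformly bounded, so $Y := e^U$ satisfies $Y \in {\cal S}^\infty_{\mathbb P}$ and $Y(t) \geq \varepsilon := \exp(-\|U\|_\infty) > 0$, while $Z := YV \in {\cal L}^2_{\mathbb P}$. A direct It\^o calculation shows that $(Y,Z)$ solves \eqref{eq:BSDE}, and conversely any solution in the stated space with $Y>0$ transforms into a log-BSDE solution in the Kobylanski uniqueness class, delivering uniqueness. For the BMO claim, the standard energy estimate for quadratic BSDEs with bounded first component (It\^o on $e^{\lambda U}$ for a sufficiently large $\lambda>0$ combined with the quadratic-growth bound on $\widetilde f$, in the spirit of \citet{Kazamaki2006} and \citet{HuImkellerMuller2005}) shows that $\int_0^\cdot V^\top(s) dW(s)$ is BMO under $\mathbb P$; boundedness of $Y$ then implies that $\int_0^\cdot Z^\top(s) dW(s) = \int_0^\cdot Y(s) V^\top(s) dW(s)$ is BMO as well.

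The principal technical obstacle is verifying the quadratic-growth and local-Lipschitz hypotheses uniformly across the two regimes $\gamma \in (0,1)$ and $\gamma \in (1,\infty)$: the coefficients $-\tfrac{1}{2\gamma(1-\gamma)}(I+H/\gamma)^{-1}H$ and $-\tfrac{1-\gamma}{2}$ multiplying the two quadratic-in-$V$ terms of $\widetilde f$ change sign with $\gamma$, so one must confirm that $\widetilde f$ still admits a two-sided quadratic-in-$V$ bound with constants depending only on $H$, the sup-norm of $\theta$, and the constraint geometry, so that Kobylanski's monotone-stability argument applies in both cases. The $\Dist^2_{\widehat\Gamma}$ term, though non-smooth, is convex and globally Lipschitz in its argument, which suffices for the local-Lipschitz condition in $V$. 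A secondary concern is the back-transformation: the space ${\cal S}^\infty_{\mathbb P} \times {\cal L}^2_{\mathbb P}$ alone does not enforce $Y>0$, so uniqueness must be argued either by a comparison principle precluding $Y$ from hitting zero, or by showing a posteriori that every ${\cal S}^\infty_{\mathbb P}$-solution falls into the Kobylanski class via the log-transform.
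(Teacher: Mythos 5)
Your log-transform route is genuinely different from the paper's (which truncates the $1/Y$ singularities by $Y\vee\delta$, invokes Kobylanski on the truncated equation, and then recovers a $\delta$-independent lower bound $Y\geq\varepsilon$ via a Girsanov change of measure and a Feynman--Kac representation, finally choosing $\delta=\varepsilon$). But there is a concrete gap in your reduction: the claim that the consumption term transforms into something ``globally Lipschitz in $U$ and bounded between $\underline c$ and $\overline c$'' is false in precisely the cases the paper cares about. Since $\mathscr C=[\underline c,\overline c]\subset[0,\infty]$ with $\overline c$ explicitly allowed to be $+\infty$ (this is the default in Section 4, where $[\underline c,\overline c]=[0,\infty)$), the transformed term is $1/\psi(e^U)=\overline c\wedge e^{-U}\vee\underline c$, which for $\overline c=\infty$ reduces to $e^{-U}$ (or $\underline c\vee e^{-U}$). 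This has exponential growth in $-U$, so your hypothesis (i) ($\widetilde f(t,U,0)$ bounded uniformly in $U$) fails, and the standard quadratic-BSDE existence theorem you invoke does not apply as stated: Kobylanski requires at most linear growth in the $y$-variable alongside the quadratic growth in $z$. The term is positive and decreasing in $U$, so it pushes $U$ upward and is morally self-correcting, but turning that into a proof requires exactly the truncate-then-derive-an-a-priori-lower-bound-independent-of-the-truncation argument that the paper carries out in the original variables (where the consumption term $Y/(Y\wedge\frac{1}{\underline c}\vee\frac{1}{\overline c})$ is piecewise linear in $Y$ and harmless, and only the $Z^\top(\cdot)Z/Y$ and $Z/Y$ terms need truncating). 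Your proposal omits this step entirely; the ``principal technical obstacle'' you identify (the sign of the quadratic-in-$V$ coefficients across the two regimes of $\gamma$) is comparatively benign, since a two-sided bound $|\widetilde f|\leq C(1+e^{-U}+|V|^2)$ holds with bounded coefficients irrespective of those signs.

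Two further points. First, your ``secondary concern'' about the back-transformation is a real issue and is not resolved: membership of $Y$ in ${\cal S}^\infty_{\mathbb P}$ does not put $\log Y$ in the Kobylanski uniqueness class unless $Y$ is bounded away from zero, so uniqueness in the stated space needs the comparison/positivity argument you only gesture at. Second, where your approach does go through (e.g.\ $0<\underline c\leq\overline c<\infty$, or after supplying the missing truncation), it buys a genuinely cleaner derivation of the lower bound $Y\geq\varepsilon$ (it falls out of $\|U\|_\infty<\infty$) and of the BMO property of $\int Z^\top dW$ (from the BMO property of $\int V^\top dW$ and the boundedness of $Y$), avoiding the paper's explicit measure change. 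With the truncation of $e^{-U}$ at a level $-M$ and a comparison argument giving $U\geq -M'$ with $M'$ independent of $M$ (mirroring the paper's \eqref{eq:Y-lowerbound}), your proof would be complete.
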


\begin{proof}
We divide the proof into two cases: (i) $\gamma > 1$ and (ii) $0 < \gamma < 1$.

{\it Case (i) $\gamma > 1$}. Choose an arbitrary positive constant $\delta > 0$ and consider a truncated version of BSDE \eqref{eq:BSDE} as below:
\begin{align}\label{eq:trunc-BSDE}
d Y (t) = - f_\delta (t, Y (t), Z (t)) d t
+ Z (t)^\top d W (t) , \quad Y (T) = \ \beta^\frac{1}{\gamma} ,
\end{align}
where the driver is given by 
\begin{align}
f_\delta (t, Y, Z) =&\ \frac{Y}{Y\wedge \frac{1}{\underline c} \vee \frac{1}{\overline c}}
+ \frac{1}{\gamma} \bigg [ - \rho + (1-\gamma) r 
+ \frac{1 - \gamma}{2 \gamma} \theta^\top  \bigg ( I + \frac{1}{\gamma} H \bigg )^{-1} \theta \bigg ] Y + \theta^\top \bigg [ \frac{1}{\gamma} \bigg ( I + \frac{1}{\gamma} H \bigg )^{-1} - I \bigg ] Z \\
& - \frac{1}{2 \gamma(1-\gamma)} \frac{Z^\top ( I + \frac{1}{\gamma} H )^{-1} H Z}{Y \vee \delta}  - \frac{1-\gamma}{2} \Dist^2_{\widehat \Gamma} \bigg \{ \bigg ( I + \frac{1}{\gamma} H \bigg )^{-\frac{1}{2}} \bigg [ \frac{\theta}{\gamma} + \bigg(I - \frac{H}{1-\gamma}\bigg) \frac{Z}{Y \vee \delta} \bigg ] \bigg \} Y .
\end{align}
Clearly, BSDE \eqref{eq:trunc-BSDE} has a bounded terminal value. 
Moreover, by simple algebra, we can show the driver 
$f_\delta (t, y, z)$ has (at most) linear growth in $y$ and quadratic growth in $z$, that is, 
\begin{align}
|f_\delta (t, y, z)| \leq k_1 |y| + k_2 |z| 
+ k_3 |z|^2 + k_4 , \quad \forall (t, \omega, y, z) \in [0,T] \times \Omega \times \mathbb R \times \mathbb R^n ,
\end{align}
where $k_i$ are generic positive constants, for $i = 1, 2, 3, 4$. It then follows from \citet{Kobylanski2000} that \eqref{eq:trunc-BSDE} admits a unique solution $(Y_\delta, Z_\delta) \in {\cal S}^\infty_{\mathbb P} (0, T; \mathbb R) \times {\cal L}^2_{\mathbb P} (0, T; \mathbb R^n)$ and $\int^\cdot_0 Z_\delta (s)^\top d W (s)$ is a BMO martingale under $\mathbb P$. Thus, we can define a new probability
measure $\mathbb P_\delta$ equivalent to $\mathbb P$ as follows:
\begin{align}
\frac{d \mathbb P_\delta}{d \mathbb P} \bigg |_{{\cal F}_T} = {\cal E} \bigg \{ - \int^T_0 \bigg (
\theta (t)^\top \bigg [ \frac{1}{\gamma} \bigg ( I + \frac{1}{\gamma} H \bigg )^{-1} - I \bigg ]
- \frac{1}{2 \gamma(1-\gamma)} \frac{Z_\delta (t)^\top ( I + \frac{1}{\gamma} H )^{-1} H }{Y_\delta (t) \vee \delta} \bigg ) d W (t) \bigg \} .
\end{align}
Under the measure ${\mathbb P}_\delta$, the following process 
$W^{\mathbb P_\delta} (t) := W (t) - \int^t_0 Z_\delta (s) d s ,$
is an $n$-dimensional standard Brownian motion.

By changing the probability measure from $\mathbb P$ to $\mathbb P_\delta$, BSDE \eqref{eq:trunc-BSDE} is transformed to
\begin{align*}
    d Y_\delta (t) =& - \bigg \{ \frac{Y_\delta}{Y_\delta \wedge \frac{1}{\underline c} \vee \frac{1}{\overline c}} + \frac{1}{\gamma} \bigg [ - \rho + (1-\gamma) r
    + \frac{1 - \gamma}{2 \gamma} \theta^\top \bigg ( I + \frac{1}{\gamma} H \bigg )^{-1} \theta \bigg ] Y_\delta \\
    & - \frac{1-\gamma}{2} \Dist^2_{\widehat \Gamma} \bigg \{ \bigg ( I + \frac{1}{\gamma} H \bigg )^{-\frac{1}{2}} \bigg [ \frac{\theta}{\gamma} + \bigg(I - \frac{H}{1-\gamma}\bigg) \frac{Z_\delta}{Y_\delta \vee \delta} \bigg ] \bigg \} Y_\delta\bigg \} d t
    + Z^\top_\delta d W^{\mathbb P_\delta} , \quad Y_\delta (T) =  \beta^{\frac{1}{\gamma}} .
\end{align*}
Here we have used $(Y_\delta, Z_\delta)$ to highlight the unique solution to \eqref{eq:trunc-BSDE} associated with $\delta$. 

Let ${\mathbb E}^{\mathbb P_\delta}_t [ \cdot ]$ denote the conditional expectation under ${\mathbb P_\delta}$ given ${\cal F}_t$. By the Feynman-Kac formula, we obtain an expectation representation for $Y_\delta$:
\begin{align}\label{eq:Y-lowerbound}
Y_\delta (t) = {\mathbb E}^{\mathbb P_\delta}_t \bigg [ \beta^{\frac{1}{\gamma}} e^{\int^T_t \widehat \Theta_\delta (s) d s} \bigg ]
\geq {\mathbb E}^{\mathbb P_\delta}_t \bigg [ \beta^{\frac{1}{\gamma}} e^{\int^T_t \Theta (s) d s} \bigg ] \geq \varepsilon ,
\end{align}
where
\begin{align}
	\widehat \Theta_\delta (t) : =&\ \frac{1}{Y_\delta (t) \wedge \frac{1}{\underline c} \vee \frac{1}{\overline c}} + \frac{1}{\gamma} \bigg [ - \rho (t) + (1-\gamma) r (t)
	+ \frac{1 - \gamma}{2 \gamma} \theta (t)^\top \bigg ( I + \frac{1}{\gamma} H \bigg )^{-1} \theta (t) \bigg ] \\
	& - \frac{1-\gamma}{2} \Dist^2_{\widehat \Gamma} \bigg \{ \bigg ( I + \frac{1}{\gamma} H \bigg )^{-\frac{1}{2}} \bigg [ \frac{\theta (t)}{\gamma} + \bigg(I - \frac{H}{1-\gamma}\bigg) \frac{Z_\delta (t)}{Y_\delta (t) \vee \delta} \bigg ] \bigg \}
\end{align}
and
\begin{align}
	\Theta (t) : = \frac{1}{\gamma} \bigg [ - \rho (t) + (1-\gamma) r (t)
	+ \frac{1 - \gamma}{2 \gamma} \theta (t)^\top \bigg ( I + \frac{1}{\gamma} H \bigg )^{-1} \theta (t) \bigg ] .
\end{align}
In \eqref{eq:Y-lowerbound}, $\varepsilon > 0$ is a lower bound of $Y_\delta (t)$, depending only on $\beta$ and $\gamma$ and the bounds of various other model parameters. Obviously, $\varepsilon > 0$ is independent of our choice of $\delta$. Therefore, by setting $\delta = \varepsilon$
from the very beginning of the proof for {\it Case (i)} gives the desired results.

{\it Case (ii) $0 < \gamma < 1$}. In this case, we rewrite the driver of BSDE \eqref{eq:BSDE} as follows:
\begin{align}\label{eq:driver-rew}
	f (t, Y, Z) =& \ \frac{Y}{Y \wedge \frac{1}{\underline c} \vee \frac{1}{\overline c}} + \frac{1}{\gamma} \big [ - \rho + (1-\gamma) r \big ] Y
	- \frac{1-\gamma}{2} [p^* (t, Y, Z)]^\top \bigg (I + \frac{H}{\gamma} \bigg) p^* (t, Y, Z) Y \nonumber \\
	& + (1-\gamma) [p^* (t, Y, Z)]^\top \bigg [ \frac{\theta}{\gamma} + \bigg(I - \frac{H}{1-\gamma}\bigg) \frac{Z}{Y} \bigg ] Y - \frac{\gamma}{2 (1-\gamma)} \frac{Z^\top H Z}{Y} ,
\end{align}
where 
\begin{align}
p^* (t, Y, Z) = \bigg ( I + \frac{1}{\gamma} H \bigg )^{-\frac{1}{2}} \Proj_{\widehat \Gamma} \bigg \{ \bigg ( I + \frac{1}{\gamma} H \bigg )^{-\frac{1}{2}} \bigg [ \frac{\theta}{\gamma} + \bigg(I - \frac{H}{1-\gamma}\bigg) \frac{Z}{Y} \bigg ] \bigg \} .
\end{align}
As in the proof of {\it Case (i)}, if we truncate the driver \eqref{eq:driver-rew} by $\widetilde \delta$ as follows:
\begin{align}
	f_{\widetilde \delta} (t, Y, Z) =& \ \frac{Y}{Y \wedge \frac{1}{\underline c} \vee \frac{1}{\overline c}} + \frac{1}{\gamma} \big [ - \rho + (1-\gamma) r \big ] Y
	- \frac{1-\gamma}{2} [p^*_{\widetilde \delta} (t, Y, Z)]^\top \bigg (I + \frac{H}{\gamma} \bigg) p^*_{\widetilde \delta} (t, Y, Z) Y \\
	& + (1-\gamma) [p^*_{\widetilde \delta} (t, Y, Z)]^\top \bigg [ \frac{\theta}{\gamma} + \bigg(I - \frac{H}{1-\gamma}\bigg) \frac{Z}{Y \vee \widetilde \delta} \bigg ] Y - \frac{\gamma}{2 (1-\gamma)} \frac{Z^\top H Z}{Y \vee \widetilde \delta} ,
\end{align}
where 
\begin{align}
p^*_{\widetilde \delta} (t, Y, Z) = \bigg ( I + \frac{1}{\gamma} H \bigg )^{-\frac{1}{2}} \Proj_{\widehat \Gamma} \bigg \{ \bigg ( I + \frac{1}{\gamma} H \bigg )^{-\frac{1}{2}} \bigg [ \frac{\theta}{\gamma} + \bigg(I - \frac{H}{1-\gamma}\bigg) \frac{Z}{Y \vee \widetilde \delta} \bigg ] \bigg \} ,
\end{align}
then we obtain a truncated BSDE:
\begin{align}\label{eq:trunc-BSDE-2}
d Y (t) = - f_{\widetilde \delta} (t, Y (t), Z (t)) d t + Z (t)^\top d W (t) , \quad Y (T) = \beta^{\frac{1}{\gamma}} .
\end{align}
It can be shown as in {\it Case (i)} that BSDE \eqref{eq:trunc-BSDE-2} has a bounded terminal value and a quadratic driver. Thus, by \citet{Kobylanski2000}, we have that \eqref{eq:trunc-BSDE-2} admits a unique solution such that $(Y_{\widetilde \delta}, Z_{\widetilde \delta}) \in {\cal S}^\infty_{\mathbb P} (0, T; \mathbb R) \times {\cal L}^2_{\mathbb P} (0, T; \mathbb R^n)$ and $\int^\cdot_0 Z_{\widetilde \delta} (s) d W (s)$ is a BMO martingale under $\mathbb P$, where the solution is written as $(Y_{\widetilde \delta}, Z_{\widetilde \delta})$ to highlight its dependence on ${\widetilde \delta}$.

Then we define an equivalent probability measure $\mathbb P_{\widetilde \delta}$ as follows:
\begin{align}
	\frac{d \mathbb P_{\widetilde \delta}}{d \mathbb P} \bigg |_{{\cal F}_T} = {\cal E} \bigg \{ - \int^T_0 \bigg (
	- \frac{\gamma}{2 (1-\gamma)} \frac{Z_{\widetilde \delta} (t)^\top H }{Y_{\widetilde \delta} (t) \vee \widetilde \delta} \bigg ) d W (t) \bigg \} .
\end{align}
Under $\mathbb P_{\widetilde \delta}$, the following process 
$W^{\mathbb P_{\widetilde \delta}} (t) := W (t) - \int^t_0 Z_{\widetilde \delta} (s) d s ,$
is an $n$-dimensional standard Brownian motion,
and BSDE \eqref{eq:BSDE} is transformed to
\begin{align}
    \begin{split}
        \label{eq:trunc-BSDE-tranform-rew}
        d Y_{\widetilde \delta} (t) = & - Y_{\widetilde \delta} \bigg \{ \frac{1}{Y_{\widetilde \delta} (t) \wedge \frac{1}{\underline c} \vee \frac{1}{\overline c}} + \frac{1}{\gamma} \big [ - \rho + (1-\gamma) r \big ] 
		- \frac{1-\gamma}{2} [p^*_{\widetilde \delta} (t, Y_{\widetilde \delta} , Z_{\widetilde \delta}]^\top \bigg (I + \frac{H}{\gamma} \bigg) p^*_{\widetilde \delta} (t, Y_{\widetilde \delta} , Z_{\widetilde \delta})  \\
		& + (1-\gamma) [p^*_{\widetilde \delta} (t, Y_{\widetilde \delta}, Z_{\widetilde \delta})]^\top \bigg [ \frac{\theta}{\gamma} + \bigg(I - \frac{H}{1-\gamma}\bigg) \frac{Z_{\widetilde \delta}}{Y_{\widetilde \delta} \vee \widetilde \delta} \bigg ] \bigg \} d t
		+ Z^\top_{\widetilde \delta} d W^{\mathbb P_{\widetilde \delta}}, \quad Y_{\widetilde \delta} (T) =  \beta^{\frac{1}{\gamma}} .
    \end{split}
\end{align}
The properties of orthogonal projection operators \citep[see e.g.][]{Beck2004} imply that
\begin{align}
	[p^*_{\widetilde \delta} (t, y, z)]^\top \bigg (I + \frac{H}{\gamma} \bigg) p^*_{\widetilde \delta} (t, y, z) = \left\| \Proj_{\widehat \Gamma} \bigg \{ \bigg ( I + \frac{1}{\gamma} H \bigg )^{-\frac{1}{2}} \bigg [ \frac{\theta}{\gamma} + \bigg(I - \frac{H}{1-\gamma}\bigg) \frac{z}{y \vee \widetilde \delta} \bigg ] \bigg \} \right\|^2 \geq 0
\end{align}
and
\begin{align}
	[p^*_{\widetilde \delta} (t, y, z)]^\top \bigg [ \frac{\theta}{\gamma} + \bigg(I - \frac{H}{1-\gamma}\bigg) \frac{z}{y} \bigg ] 
 \geq 
 [p^*_{\widetilde \delta} (t, y, z)]^\top \bigg (I + \frac{H}{\gamma} \bigg) p^*_{\widetilde \delta} (t, y, z) , \quad \forall (t, \omega, y, z) \in [0,T] \times \Omega \times \mathbb R \times \mathbb R^n .
\end{align}

Let ${\mathbb E}^{\mathbb P_{\widetilde \delta}}_t [\cdot]$ be the conditional expectation under 
$\mathbb P_{\widetilde \delta}$ given ${\cal F}_t$. Note 
the geometric form of the driver in \eqref{eq:trunc-BSDE-tranform-rew}
implies that $Y_{\widetilde \delta}$ is non-negative. 
Furthermore, following the similar derivations as in {\it Case (i)}, we have the following representation for $Y_{\widetilde \delta}$:
{\small
\begin{align}\label{eq:Y-lowerbound-rew}
	Y_{\widetilde \delta} (t) &= {\mathbb E}^{\mathbb P_{\widetilde \delta}}_t \bigg [ \beta^{\frac{1}{\gamma}} e^{\int^T_t \widetilde \Theta (\nu) d \nu} + \int^T_t e^{\int^s_t \widetilde \Theta (\nu) d \nu} Y_{\widetilde \delta} (s) \bigg \{ \frac{1}{Y_{\widetilde \delta} (s) \wedge \frac{1}{\underline c} \vee \frac{1}{\overline c}}
	- \frac{1-\gamma}{2} [p^*_{\widetilde \delta} (s, Y_{\widetilde \delta} (s), Z_{\widetilde \delta} (s))]^\top \bigg (I + \frac{H}{\gamma} \bigg) p^*_{\widetilde \delta} (s, Y_{\widetilde \delta} (s), Z_{\widetilde \delta} (s)) \\
	& \qquad\qquad\qquad\qquad\quad + (1-\gamma) [p^*_{\widetilde \delta} (s, Y_{\widetilde \delta} (s), Z_{\widetilde \delta} (s))]^\top \bigg [ \frac{\theta (s)}{\gamma} + \bigg(I - \frac{H}{1-\gamma}\bigg) \frac{Z_{\widetilde \delta} (s)}{Y_{\widetilde \delta} (s) \vee \widetilde \delta} \bigg ] \bigg \} d s \bigg ] \\
	&\geq {\mathbb E}^{\mathbb P_{\widetilde \delta}}_t \bigg [ \beta^{\frac{1}{\gamma}} e^{\int^T_t \widetilde \Theta (\nu) d \nu} + \int^T_t e^{\int^s_t \widetilde \Theta (\nu) d \nu} Y_{\widetilde \delta} (s) \bigg \{ \frac{1}{Y_{\widetilde \delta} (s) \wedge \frac{1}{\underline c} \vee \frac{1}{\overline c}}
	+ \frac{1-\gamma}{2} [p^*_{\widetilde \delta} (s, Y_{\widetilde \delta} (s), Z_{\widetilde \delta} (s))]^\top \bigg (I + \frac{H}{\gamma} \bigg) p^*_{\widetilde \delta} (s, Y_{\widetilde \delta} (s), Z_{\widetilde \delta} (s)) \bigg \} d s \bigg ] \\
	&\geq {\mathbb E}^{\mathbb P_{\widetilde \delta}}_t \bigg [ \beta^{\frac{1}{\gamma}} e^{\int^T_t \widetilde \Theta (\nu) d \nu} \bigg ] \geq \widetilde \varepsilon,
\end{align}}
where $\widetilde \Theta : = \frac{1}{\gamma} \big [ - \rho + (1-\gamma) r \big ]$ and $\widetilde \varepsilon > 0$ is the lower bound of $Y_{\widetilde \delta} (t)$ and is independent of our choice of $\widetilde \delta$. Again, as in the proof of
{\it Case (i)}, by setting $\widetilde \delta = \widetilde \varepsilon$ from the very beginning of the proof for {\it Case (ii)} completes the proof.
\end{proof}

The next proposition establishes the martingale optimality principle for 
Problem \eqref{eq:problem} which is essentially a robust control problem. 

\begin{proposition}\label{prop:MOPR}
Suppose that i) for each $t \in [0, T]$, $v (t,x)$ is twice continuously differentiable in $x$; ii) for each $x \in {\mathbb R}^+$, $v (t,x)$ is 
a continuous $\mathbb F$-adapted process;
iii) for each $x \in {\mathbb R}^+$, the boundary condition $v (T,x) = U_2 (x)$ holds true. 
Denote by $X^{u,\phi} (\cdot)$ the wealth process associated with 
the investment-consumption strategy $u (\cdot) = (p (\cdot), c (\cdot))$ 
and the distortion process $\phi (\cdot)$,
and define a cost process $\{N^{u,\phi} (t)\}_{t\in[0,T]}$ associated with
$u (\cdot)$ and $\phi (\cdot)$ as follows:
\begin{align*}
N^{u,\phi} (t) := \int^t_0 e^{- \int^s_0 \rho (\nu) d \nu} R (s, X^{u,\phi} (s), u (s), \phi (s)) d s + e^{- \int^t_0 \rho (\nu) d \nu}  v (t, X^{u,\phi} (t)) , \quad \forall t \in [0, T] ,
\end{align*}
where
\begin{align}
R (s, X^{u,\phi} (s), u (s), \phi (s)) : = \frac{1-\gamma}{2} \phi (s)^\top H^{-1} \phi (s) \cdot v (s, X^{u,\phi} (s)) + U_1 (c (s) X (s)) .
\end{align}
Suppose the following conditions hold true:
\begin{enumerate}
\item for any $u (\cdot) \in \cal A$, there exists $\mathbb Q (\phi) \in {\cal Q}_{\cal B}$, i.e. $\phi (\cdot) \in \cal B$, such that the process $\{ N^{u,\phi} (t) \}_{t \in [0, T]}$ is a $\mathbb Q (\phi)$-supermartingale;
\item there exists $u^* (\cdot) \in \cal A$ and $\mathbb Q^* := \mathbb Q (\phi^*) \in {\cal Q}_{\cal B}$, i.e. $\phi^* (\cdot) \in \cal B$, such that
\begin{align}\label{eq:opt-measure}
& \inf_{\phi \in {\cal B}} {\mathbb E}^{\mathbb Q (\phi)}
\bigg [ \int^T_0 e^{- \int^s_0 \rho (\nu) d \nu} R (s, X^{u^*, \phi} (s), u^* (s), \phi (s)) d s
+ e^{- \int^T_0 \rho (\nu) d \nu} U_2 (X^{u^*,\phi} (T)) \bigg ] \nonumber \\
& = {\mathbb E}^{\mathbb Q^*} \bigg [ \int^T_0 e^{- \int^s_0 \rho (\nu) d \nu} R (s, X^{u^*, \phi^*} (s), u^* (s), \phi^* (s)) d s + e^{- \int^T_0 \rho (\nu) d \nu}  U_2 (X^{u^*, \phi^*} (T)) \bigg ]
\end{align}
and $\{ N^{u^*, \phi^*} (t) \}_{t \in [0, T]}$ is a $\mathbb Q^*$-martingale.
\end{enumerate}
Then $u^* (\cdot) : = (p^* (\cdot), c^*(\cdot))$ is a (robust) optimal investment-consumption strategy and $\phi^* (\cdot)$ is an optimal distortion process,
and the value function is given by $V (t, x) = v (t, x)$.
\end{proposition}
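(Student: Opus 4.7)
The plan is to prove $V(0, x_0) = v(0, x_0)$ via a two-sided sandwich: condition (1) delivers the upper bound $V \leq v$ through the supermartingale inequality, and condition (2) delivers the matching lower bound together with saddle-point attainment at $(u^*, \phi^*)$ through the martingale equality. The pointwise identity $V(t, x) = v(t, x)$ for a general $(t, x)$ then follows by repeating the same argument on the subinterval $[t, T]$ conditional on $\mathcal{F}_t$ with $X(t) = x$, noting that any (super)martingale on $[0, T]$ remains one on $[t, T]$ by optional sampling.

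For the upper bound, I would fix an arbitrary $u \in \mathcal{A}$ and, by condition (1), select $\phi \in \mathcal{B}$ so that $\{N^{u, \phi}(t)\}_{t \in [0, T]}$ is a $\mathbb{Q}(\phi)$-supermartingale. Since $N^{u, \phi}(0) = v(0, x_0)$ and the boundary condition inserts $U_2$ into $N^{u, \phi}(T)$, the supermartingale inequality $N^{u, \phi}(0) \geq \mathbb{E}^{\mathbb{Q}(\phi)}[N^{u, \phi}(T)]$ rearranges into
$$v(0, x_0) \geq \mathbb{E}^{\mathbb{Q}(\phi)}\bigg[\int_0^T e^{-\int_0^s \rho(\nu) d\nu} R(s, X^{u, \phi}(s), u(s), \phi(s))\, ds + e^{-\int_0^T \rho(\nu) d\nu} U_2(X^{u, \phi}(T))\bigg].$$
The right-hand side is precisely the objective functional at the particular pair $(u, \phi)$, so it majorizes $\inf_{\phi' \in \mathcal{B}} \mathbb{E}^{\mathbb{Q}(\phi')}[\cdots]$; taking $\sup_{u \in \mathcal{A}}$ afterward yields $V(0, x_0) \leq v(0, x_0)$.

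For the lower bound, I would invoke condition (2) with the candidate pair $(u^*, \phi^*)$: the $\mathbb{Q}^*$-martingale property between $0$ and $T$ gives the equality $v(0, x_0) = N^{u^*, \phi^*}(0) = \mathbb{E}^{\mathbb{Q}^*}[N^{u^*, \phi^*}(T)]$, which unpacks into the identity that $v(0, x_0)$ equals the objective at $(u^*, \phi^*)$. Combining this with the infimum-attainment identity \eqref{eq:opt-measure} then gives $v(0, x_0) = \inf_{\phi \in \mathcal{B}} \mathbb{E}^{\mathbb{Q}(\phi)}[\cdots u^*, \phi \cdots] \leq \sup_{u \in \mathcal{A}} \inf_{\phi \in \mathcal{B}} \mathbb{E}^{\mathbb{Q}(\phi)}[\cdots] = V(0, x_0)$. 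The two inequalities pin down $V(0, x_0) = v(0, x_0)$, with the outer supremum attained at $u^*$ and the inner infimum at $\phi^*$; this is the advertised optimality.

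The main care point is the integrability needed for the expectations and the (super)martingale manipulations to be legitimate: these rely on the admissibility hypotheses of Definition \ref{def:adm} (the $\mathcal{S}^2_{\mathbb{Q}}$-regularity of $X$, the $\mathcal{L}^2_{\mathbb{Q}}$-integrability of $p(\cdot) X(\cdot)$ and $\phi(\cdot)$, and the $\mathcal{L}^1_{\mathbb{Q}}$-integrability of $c(\cdot) X(\cdot)$), combined with the boundedness of $\rho$ from Assumption \ref{ass:bound}, which together guarantee that $N^{u, \phi}(T)$ is $\mathbb{Q}(\phi)$-integrable. The subtlest point will be the passage from the static statement in \eqref{eq:opt-measure} at $t = 0$ to the dynamic identity $V(t, x) = v(t, x)$: this is to be handled by noting that the minimizer $\phi^*$ arises pointwise in $\omega$ from the inner Hamiltonian in the stochastic HJBI equation, so the attainment property localizes to any subinterval $[t, T]$ via conditioning.
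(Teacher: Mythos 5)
Your proposal is correct and follows essentially the same two-sided sandwich argument as the paper: Condition (1) plus the supermartingale inequality yields $V \le v$, and Condition (2) plus the martingale equality and the infimum-attainment identity \eqref{eq:opt-measure} yields $V \ge v$ with optimality of $(u^*,\phi^*)$. The paper simply carries out the argument directly with conditional expectations given $\mathcal{F}_t$ and $X(t)=x$ rather than first working at $t=0$ and then localizing, but the content is the same.
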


\begin{proof}
By Condition 1, we have that for any $u (\cdot) \in {\cal A}$, there exists $\phi (\cdot) \in {\cal B}$ such that
\begin{align*}
& \ {\mathbb E}^{\mathbb Q (\phi)}_{t,x} \bigg [ \int^T_0 e^{- \int^s_0 \rho (\nu) d \nu}  R (s, X^{u, \phi} (s), u (s), \phi (s)) d s
+ e^{- \int^T_0 \rho (\nu) d \nu} U_2 (X^{u,\phi} (T)) \bigg ] \\
&= {\mathbb E}^{\mathbb Q (\phi)}_{t,x} \bigg [ N^{u,\phi} (T) \bigg ]
\leq N^{u,\phi} (t) = \int^t_0 e^{- \int^s_0 \rho (\nu) d \nu} R (s, X^{u,\phi} (s), u (s), \phi (s)) d s + e^{- \int^t_0 \rho (\nu) d \nu}  v (t, x) .
\end{align*}
Taking the infimum and supreme w.r.t. $\phi (\cdot) \in \cal B$ and $u (\cdot) \in {\cal A}$, respectively, 
and rearranging the two sides of the above inequalities, we obtain
\begin{align*}
V (t, x) = \sup_{u \in {\cal A}} \inf_{\phi \in \cal B} {\mathbb E}^{\mathbb Q (\phi)}_{t,x} \bigg [ \int^T_t e^{- \int^s_t \rho (\nu) d \nu}  R (s, X^{u, \phi} (s), u (s), \phi (s)) d s
+ e^{- \int^T_t \rho (\nu) d \nu} U_2 (X^{u,\phi} (T)) \bigg ] \leq v (t,x) .
\end{align*}
Moreover, it follows from Condition 2 that
\begin{align*}
& \ {\mathbb E}^{\mathbb Q^*}_{t,x} \bigg [ \int^T_0 e^{- \int^s_0 \rho (\nu) d \nu} R (s, X^{u^*, \phi^*} (s), u^* (s), \phi^* (s)) d s
+ e^{- \int^T_0 \rho (\nu) d \nu} U_2 (X^{u^*,\phi^*} (T)) \bigg ] \\
& = {\mathbb E}^{\mathbb Q^*}_{t,x} \bigg [ N^{u^*,\phi^*} (T) \bigg ]
= N^{u^*,\phi^*} (t) = \int^t_0 e^{- \int^s_0 \rho (\nu) d \nu}  R (s, X^{u^*,\phi^*} (s), u^* (s), \phi^* (s)) d s + e^{- \int^t_0 \rho (\nu) d \nu} v (t, x) .
\end{align*}
Thus,
\begin{align*}
V (t, x) &= \sup_{u \in {\cal A}} \inf_{\phi \in \cal B} {\mathbb E}^{\mathbb Q (\phi)}_{t,x} \bigg [ \int^T_t e^{- \int^s_t \rho (\nu) d \nu} R (s, X^{u, \phi} (s), u (s), \phi (s)) d s
+ e^{- \int^T_t \rho (\nu) d \nu}  U_2 (X^{u,\phi} (T)) \bigg ] \\
&\geq \inf_{\phi \in \cal B} {\mathbb E}^{\mathbb Q (\phi)}_{t,x} \bigg [ \int^T_t e^{- \int^s_t \rho (\nu) d \nu}  R (s, X^{u^*, \phi} (s), u^* (s), \phi (s)) d s
+ e^{- \int^T_t \rho (\nu) d \nu}  U_2 (X^{u^*,\phi} (T)) \bigg ] \\
& = {\mathbb E}^{\mathbb Q^*}_{t,x} \bigg [ \int^T_t e^{- \int^s_t \rho (\nu) d \nu}  R (s, X^{u^*, \phi^*} (s), u^* (s), \phi^* (s)) d s
+ e^{- \int^T_t \rho (\nu) d \nu} U_2 (X^{u^*,\phi^*} (T)) \bigg ] = v (t, x) .
\end{align*}
This completes the proof.
\end{proof}

Before we verify that the solution to the stochastic HJBI equation \eqref{eq:HJBI} gives the robust optimal strategies and the value function to Problem \eqref{eq:problem}, we present three useful lemmas which validate that the strategies obtained in Proposition \ref{prop:HJBI-solution} are admissible.

\begin{lemma}\label{lem:RN}
The Radon-Nikodym derivative process $\{\Lambda (t)\}_{t\in[0,T]}$ associated with $\phi^*$, i.e.
\begin{align}
\Lambda (t)
= \exp \bigg \{ - \frac{1}{2} \int^{t}_0 |\phi^*(s)|^2 d s + \int^{T}_0 \phi^* (s)^\top d W (s) \bigg \} ,
\end{align}
is a uniformly integrable $\mathbb P$-martingale.
\end{lemma}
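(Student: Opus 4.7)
The plan is to establish the stronger property that the stochastic integral $\int_0^\cdot \phi^*(s)^\top dW(s)$ is a BMO martingale under $\mathbb P$, from which the uniform integrability (and martingale property) of its stochastic exponential $\Lambda$ follows at once by the classical result of \citet{Kazamaki2006} that the stochastic exponential of a BMO martingale is automatically uniformly integrable.

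To verify the BMO condition, I would start from the explicit expression \eqref{eq:phi-star} for $\phi^*$. The matrices $H$ and $(I + H/\gamma)^{-1/2}$ are deterministic and bounded, $\theta(\cdot)$ is bounded by Assumption \ref{ass:bound}, and every orthogonal projection onto a closed convex set is $1$-Lipschitz; therefore one can extract generic constants $c_1, c_2 > 0$ such that
\begin{align*}
|\phi^*(t)|^2 \leq c_1 \bigg(|\theta(t)|^2 + \bigg|\frac{Z(t)}{Y(t)}\bigg|^2 \bigg) + c_2, \qquad \forall t \in [0, T],
\end{align*}
where $c_2$ absorbs the bound on $|\Proj_{\widehat\Gamma(t,\omega)}(0)|$, which is finite because $\widehat\Gamma = (I + H/\gamma)^{1/2} \sigma^\top \Pi$ and $\sigma$ is bounded. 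Next I invoke Lemma \ref{lem:BSDE}: $Y$ is bounded below by some $\varepsilon > 0$ $\mathbb P$-a.s., and $\int_0^\cdot Z^\top dW$ is BMO under $\mathbb P$, so that for every $\mathbb F$-stopping time $\tau \in [0, T]$,
\begin{align*}
\mathbb E \bigg[\int_\tau^T \bigg|\frac{Z(s)}{Y(s)}\bigg|^2 ds \,\bigg|\, \mathcal F_\tau\bigg] \leq \frac{1}{\varepsilon^2} \mathbb E\bigg[\int_\tau^T |Z(s)|^2 ds \,\bigg|\, \mathcal F_\tau\bigg] \leq \frac{K}{\varepsilon^2}
\end{align*}
for some constant $K > 0$. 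Combining this with the essential boundedness of $\theta$ yields a constant $\widetilde K > 0$ with $\mathbb E [\int_\tau^T |\phi^*(s)|^2 ds \,|\, \mathcal F_\tau] \leq \widetilde K$ uniformly in $\tau$, which is precisely the defining inequality \eqref{eq:BMO-ineq} for the BMO property of $\int_0^\cdot \phi^*(s)^\top dW(s)$.

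With the BMO property in hand, Kazamaki's theorem delivers the uniform integrability and martingality of $\Lambda$, completing the argument. The main obstacle I anticipate is controlling the projection term in the bound for $|\phi^*|^2$ cleanly, since the auxiliary constraint set $\widehat\Gamma(t,\omega)$ varies in both $t$ and $\omega$; however the $1$-Lipschitz property of orthogonal projections together with the standing boundedness assumption on $\sigma$ make this estimate routine. All remaining ingredients (the boundedness of $\theta$, the uniform lower bound on $Y$, and the BMO property of $\int_0^\cdot Z^\top dW$) are supplied directly by Assumption \ref{ass:bound} and Lemma \ref{lem:BSDE}.
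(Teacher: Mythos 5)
Your proposal is correct and follows essentially the same route as the paper: both establish that $\int_0^\cdot \phi^* (s)^\top d W (s)$ is a BMO martingale under $\mathbb P$ by combining the nonexpansiveness of the projection onto $\widehat \Gamma$, the boundedness of $\theta$ and $H$, the uniform lower bound $Y \geq \varepsilon$, and the BMO property of $\int_0^\cdot Z^\top d W$ from Lemma \ref{lem:BSDE}, and then invoke Kazamaki's theorem. The only cosmetic difference is that the paper first proves the BMO property for $\int_0^\cdot p^* (s)^\top d W (s)$ and transfers it to $\phi^*$ via the linear relation \eqref{eq:phi*-foc-inter-solution}, whereas you bound $|\phi^*|^2$ directly (and are, if anything, slightly more careful about the $\Proj_{\widehat\Gamma}(0)$ term).
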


\begin{proof}
Recall the relation \eqref{eq:phi*-foc-inter-solution} and the property of the solution $(Y (\cdot), Z (\cdot))$. It suffices to show 
that the stochastic integral $\int^\cdot_0 p^* (t)^\top d W (t)$ is a BMO martingale under $\mathbb P$, which ensures that $\int^\cdot_0 \phi^* (t)^\top d W (t)$ is also a BMO martingale under $\mathbb P$ and thus the desired result of the current lemma holds true.

To proceed, we need to derive the following inequality as used to define BMO martingales (see Eq. \eqref{eq:BMO-ineq}):
	\begin{align}
	&\ {\mathbb E} \bigg[ \bigg(\int^T_0 p^* (t)^\top d W (t) - \int^\tau_0 p^* (t)^\top d W (t) \bigg )^2 \bigg | {\cal F}_\tau \bigg ] \\
         & = {\mathbb E} \bigg[ \int^T_\tau \bigg | \bigg ( I + \frac{1}{\gamma} H \bigg )^{-\frac{1}{2}} \Proj_{\widehat \Gamma} \bigg \{ \bigg ( I + \frac{1}{\gamma} H \bigg )^{-\frac{1}{2}} \bigg [ \frac{\theta}{\gamma} + \bigg(I - \frac{H}{1-\gamma}\bigg) \frac{Z}{Y} \bigg ] \bigg \} \bigg |^2 d t \bigg | {\cal F}_\tau \bigg ] \\
         & \leq 
         {\mathbb E} \bigg[ \int^T_\tau \bigg | \bigg ( I + \frac{1}{\gamma} H \bigg )^{-1} \bigg [ \frac{\theta}{\gamma} + \bigg(I - \frac{H}{1-\gamma}\bigg) \frac{Z}{Y} \bigg ] \bigg |^2 d t \bigg | {\cal F}_\tau \bigg ] \\
         & =  
         {\mathbb E} \bigg[ \bigg ( \int^T_\tau \bigg \{ \bigg ( I + \frac{1}{\gamma} H \bigg )^{-1} \bigg [ \frac{\theta}{\gamma} + \bigg(I - \frac{H}{1-\gamma}\bigg) \frac{Z}{Y} \bigg ] \bigg \}^\top d W (t) \bigg )^2 \bigg | {\cal F}_\tau \bigg ] \\
         & \leq K .
	\end{align}
 In the above derivation, the second and fourth lines are due to the It\^o isometry, 
 the third line is obtained by using the nonexpansiveness property of projection operators \citep[see][Theorem 9.9]{Beck2004},
 and the last line holds since Lemma \ref{lem:BSDE} implies that $\int^\cdot_0 \{ ( I + \frac{1}{\gamma} H )^{-1} [ \frac{\theta}{\gamma} + (I - \frac{H}{1-\gamma}) \frac{Z}{Y} ] \}^\top d W (t)$ is a BMO martingale under $\mathbb P$.

 The desired result is an immediate consequence of Theorem 2.3 in \citet{Kazamaki2006}. The proof is completed.
\end{proof}

\begin{lemma}\label{lem:SDE}
Under Assumption \ref{ass:bound}, SDE \eqref{eq:wealth-equation} associated with $u^* = (p^*,c^*)$ and $\phi^*$ admits a unique non-negative solution $X^* \in {\cal S}^2_{\mathbb Q^*} (0, T; \mathbb R)$, where
$\mathbb Q^* := \mathbb Q (\phi^*)$. Moreover,
it holds true that ${\mathbb E} [ \sup_{t \in [0, T]} |X^* (t)|^q ] < \infty$, for any $q \leq 2$.
\end{lemma}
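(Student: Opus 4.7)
The plan is to leverage the linear structure of SDE \eqref{eq:wealth-equation} combined with the BMO martingale machinery developed in Lemmas \ref{lem:BSDE} and \ref{lem:RN}. Substituting $(p^*, c^*, \phi^*)$ into \eqref{eq:wealth-equation} yields a linear SDE in $X$ with stochastic coefficients whose unique strong solution admits the explicit exponential representation
\begin{align*}
X^*(t) = x_0 \exp\bigg\{\int_0^t\!\big[r(s) + p^*(s)^\top(\theta(s) + \phi^*(s)) - c^*(s) - \tfrac{1}{2}|p^*(s)|^2\big]ds + \int_0^t p^*(s)^\top dW^{{\mathbb Q}^*}(s)\bigg\}.
\end{align*}
Existence and uniqueness follow from the standard theory for linear SDEs once the necessary local integrability of the coefficients is verified (a consequence of Lemma \ref{lem:BSDE} and the BMO property behind Lemma \ref{lem:RN}), and strict positivity is immediate from the exponential form.

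For the integrability estimates, I would pass to the reference measure ${\mathbb P}$ via $dW^{{\mathbb Q}^*} = dW - \phi^*\,ds$, which recasts $X^*$ as
\begin{align*}
X^*(t) = x_0 \exp\bigg(\int_0^t[r(s) - c^*(s)]\,ds\bigg) \cdot \exp\bigg(\int_0^t p^*(s)^\top\theta(s)\,ds\bigg) \cdot {\cal E}\bigg(\int_0^\cdot p^*(s)^\top dW(s)\bigg)(t).
\end{align*}
The first factor is uniformly bounded on $[0,T]$: $r$ is bounded by Assumption \ref{ass:bound}, while $c^*(t) = \tfrac{1}{Y(t)} \vee \underline{c} \wedge \overline{c}$ inherits boundedness from the lower and upper bounds on $Y$ supplied by Lemma \ref{lem:BSDE}. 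For the remaining two factors, I would invoke the BMO property of $\int_0^\cdot p^{*\top}\,dW$ established in the proof of Lemma \ref{lem:RN}: the John--Nirenberg (energy) inequality for BMO martingales (see \citet[Chapter 2]{Kazamaki2006}) delivers exponential moments of $\int_0^T |p^*|^2\,ds$ under ${\mathbb P}$, which via Cauchy--Schwarz controls the middle factor, while the reverse H\"older inequality for BMO exponentials places ${\cal E}(\int_0^\cdot p^{*\top}\,dW)$ in $L^r({\mathbb P})$ for some $r > 1$. Combining Doob's maximal inequality with H\"older's inequality across the three factors then yields $\mathbb E[\sup_{t \in [0,T]} |X^*(t)|^q] < \infty$ for all $q \leq 2$.

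Finally, for the ${\cal S}^2_{{\mathbb Q}^*}$ bound, I would use that the BMO property of a stochastic integral is preserved under a Girsanov transform whose density is itself induced by a BMO integrand (\citet[Theorem 3.3]{Kazamaki2006}); since $\int_0^\cdot \phi^{*\top}\,dW$ is BMO under ${\mathbb P}$ by Lemma \ref{lem:RN} and the explicit form of $\phi^*$, the process $\int_0^\cdot p^{*\top}\,dW^{{\mathbb Q}^*}$ is BMO under ${\mathbb Q}^*$ as well. An analogous decomposition and estimate carried out directly under ${\mathbb Q}^*$---with $W$ replaced by $W^{{\mathbb Q}^*}$ and the bounded-drift factor adjusted accordingly---then produces ${\mathbb E}^{{\mathbb Q}^*}[\sup_{t \in [0,T]} |X^*(t)|^2] < \infty$. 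The main obstacle is the careful tracking of the reverse H\"older exponents: the exponent $r > 1$ in the reverse H\"older inequality for ${\cal E}(\int p^{*\top}dW^{{\mathbb Q}^*})$ depends on the BMO norm, and one must verify that the chained H\"older pairings close tightly enough to cover the exponent $q = 2$. Beyond this bookkeeping, the argument is a routine combination of It\^o-exponential manipulations and BMO inequalities.
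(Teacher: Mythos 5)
Your proposal is correct in substance and rests on the same two pillars as the paper's proof---the observation that \eqref{eq:wealth-equation} with $(p^*,c^*,\phi^*)$ is a linear SDE with unbounded but BMO-controlled coefficients, and the use of the reverse H\"older inequality plus Doob's maximal inequality to extract moments---but the execution differs in a way worth noting. Where you write the solution explicitly and factor it under $\mathbb P$ as a bounded drift term times $\exp(\int_0^t p^*(s)^\top\theta(s)\,ds)$ times ${\cal E}(\int_0^\cdot p^*(s)^\top dW(s))$, controlling the middle factor by the John--Nirenberg inequality, the paper instead absorbs the entire unbounded drift into a further Girsanov change of measure: it defines $\widehat{\mathbb Q}^*$ via the kernel $D(t)$ (shown to be BMO, hence transferable across measures by Theorem 3.3 of \citet{Kazamaki2006}), under which $X^*$ becomes a pure Dol\'eans--Dade exponential times the bounded factor $e^{\int_0^t(r-c^*)ds}$; the ${\cal S}^2_{\mathbb Q^*}$ estimate is then a Cauchy--Schwarz between $1/\Pi^2(T)$ and $\sup_t|\widehat X^*(t)|^4$, each handled by reverse H\"older. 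For the $\mathbb E[\sup_t|X^*(t)|^q]$ bound the paper applies It\^o's formula to $|X^*|^q$ and introduces yet another measure $\widetilde{\mathbb Q}^*$ to repeat the same trick, whereas your factorisation handles all $q\le 2$ uniformly; your route is arguably cleaner there, at the cost of needing exponential moments of $\int_0^T|p^*(s)|^2ds$ with a coefficient of order $q^2-q$ (which John--Nirenberg only supplies when that coefficient, scaled by the BMO norm, is small enough). The "obstacle" you flag---that the reverse H\"older exponent delivered by the BMO norm may not reach the power needed to close the H\"older chain at $q=2$---is genuine, but you should be aware that the paper's own proof has exactly the same unaddressed gap: it asserts $\mathbb E^{\widehat{\mathbb Q}^*}[|\widehat X^*(T)|^4]<\infty$ and $\mathbb E^{\widehat{\mathbb Q}^*}[1/\Pi^2(T)]<\infty$ by reverse H\"older without verifying that the relevant BMO norms admit those exponents. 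So your proposal is at parity with the published argument; tightening it would require either a smallness condition on the BMO norms or an additional boundedness assumption on $\widehat\Gamma$ and $Z/Y$.
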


\begin{proof}
Associated with the strategies $u^*=(p^*, c^*)$ and $\phi^*$, the wealth process $X^*$ satisfies
\begin{align}\label{eq:wealth-Q}
d X^* (t) = X^* (t) \left [ A (t) dt + p^* (t)^\top d W^{\mathbb Q^*} (t) \right ] ,
\end{align}
where
\begin{align}
A (t) : =& \ r (t) - \frac{1}{Y (t)} \vee \underline c \wedge \overline c + \Proj_{\widehat \Gamma} \bigg \{ \bigg ( I + \frac{1}{\gamma} H \bigg )^{-\frac{1}{2}} \bigg [ \frac{\theta (t)}{\gamma} + \bigg(I - \frac{H}{1-\gamma}\bigg) \frac{Z (t)}{Y (t)} \bigg ] \bigg \}^\top  \bigg ( I + \frac{1}{\gamma} H \bigg )^{-\frac{1}{2}} D (t) ,
\end{align}
and 
\begin{align}
D (t) : = \theta (t) -\frac{\gamma H}{1-\gamma} \frac{Z (t)}{Y (t)} - H \bigg ( I + \frac{1}{\gamma} H \bigg )^{-\frac{1}{2}} \Proj_{\widehat \Gamma} \bigg \{ \bigg ( I + \frac{1}{\gamma} H \bigg )^{-\frac{1}{2}} \bigg [ \frac{\theta (t)}{\gamma} + \bigg(I - \frac{H}{1-\gamma}\bigg) \frac{Z (t)}{Y (t)} \bigg ] \bigg \} .
\end{align}
Although \eqref{eq:wealth-Q} is a linear SDE, it does not satisfy the standard regularity conditions (e.g. Lipschitz condition) because both $A (t)$ and $p^* (t)$ are unbounded stochastic processes. Thus, the standard SDE theory is not applicable here. To overcome this difficulty, we use measure change techniques and BMO martingale theory
to show that \eqref{eq:wealth-Q} has a unique solution that is square-integrable under $\mathbb Q^*$.

First, it can be shown as in the proof of Lemma \ref{lem:RN} that 
$\int^\cdot_0 D (t)^\top d W (t)$ is a BMO martingale under $\mathbb P$. Then applying Theorem 3.3 in \citet{Kazamaki2006} and the fact that $\mathbb Q^*$ is a uniformly integrable measure equivalent 
to $\mathbb P$ as follows: 
\begin{align}
\frac{d {\mathbb Q}^*}{d \mathbb P} \bigg |_{{\cal F}_T} = {\cal E} \bigg \{ - \int^T_0 \phi^* (t)^\top d W (t) \bigg \} ,
\end{align}
we obtain $\int^\cdot_0 D (t)^\top d W^{\mathbb Q^*} (t)$ is a BMO martingale under $\mathbb Q^*$ as well.

With the above result, we can now define a new probability measure $\widehat {\mathbb Q}^*$ equivalent to $\mathbb Q^*$ as below
\begin{align}
\frac{d \widehat {\mathbb Q}^*}{d \mathbb Q^*} \bigg |_{{\cal F}_T} = \Pi (T) := {\cal E} \bigg \{ - \int^T_0 D (t)^\top d W^{\mathbb Q^*} (t) \bigg \} .
\end{align}
By Girsanov's theorem, we know that the process $\{W^{\widehat {\mathbb Q}^*} (t)\}_{t \in [0, T]}$ defined by
$W^{\widehat {\mathbb Q}^*} (t) : = W^{\mathbb Q^*} (t) + \int^t_0 D (s) d s $
is a Brownian motion under $\widehat {\mathbb Q}^*$. Moreover, we note that
\begin{align}
\frac{1}{\Pi (T)}
= {\cal E} \bigg \{ \int^T_0 D (t)^\top d W^{\widehat {\mathbb Q}^*} (t) \bigg \} .
\end{align}
Under $\widehat {\mathbb Q}^*$, we consider a Dol\'eans-Dade exponential process $\{{\widehat X}^* (t)\}_{t\in[0,T]}$ defined by
\begin{align}
{\widehat X}^* (t) : = {\cal E} \bigg \{ \int^t_0 p^* (s)^\top d W^{\widehat {\mathbb Q}^*} (s) \bigg \} ,
\end{align}
which is the unique solution to the following SDE:
\begin{align}\label{eq:SDE-hat-X}
d {\widehat X}^* (t) = {\widehat X}^* (t) p^* (t) d W^{\widehat {\mathbb Q}^*} (t)
= {\widehat X}^* (t) \big [ {\bar A} (t) d t + p^* (t)^\top d W^{\mathbb Q} (t) \big ] ,
\end{align}
where ${\bar A} (t) : = A (t) - (r (t) - \frac{1}{Y(t)} \vee \underline c \wedge \overline c)$.

Thus, we have $X^* (t) := {\widehat X}^* (t) \cdot e^{\int^t_0 ( r (s) - \frac{1}{Y (s)} \vee \underline c \wedge \overline c ) d s}$
is a solution to the following SDE:
\begin{align}
d X^* (t) &= d [ {\widehat X}^* (t) \cdot e^{\int^t_0 ( r (s) - \frac{1}{Y (s)} \vee \underline c \wedge \overline c ) d s} ] \nonumber \\
&= [ {\widehat X}^* (t) \cdot e^{\int^t_0 ( r (s) - \frac{1}{Y (s)} \vee \underline c \wedge \overline c ) d s} ]
\big [ A (t) d t + p^* (t)^\top d W^{\mathbb Q^*} (t) \big ] = X^* (t) \big [ A (t) d t + p^* (t)^\top d W^{\mathbb Q^*} (t) \big ] ,
\end{align}
which coincides with \eqref{eq:wealth-Q}. This confirms the existence of a solution to \eqref{eq:wealth-Q}.

Furthermore, the uniqueness of the solution $X^* (t)$ can be verified by contradiction. Indeed, if there is another solution denoted by
$X^\prime (t)$ to SDE \eqref{eq:wealth-Q}, then ${\widehat X}^\prime (t) : = X^\prime (t) e^{- \int^t_0 ( r (s) - \frac{1}{Y (s)} \vee \underline c \wedge \overline c ) d s}$
is also a solution to \eqref{eq:SDE-hat-X}. This contradicts with the uniqueness of the solution to \eqref{eq:SDE-hat-X}. Therefore,
$X^* (t) = {\widehat X}^* (t) \cdot e^{\int^t_0 ( r (s) - \frac{1}{Y (s)} \vee \underline c \wedge \overline c ) d s}$ must be the unique solution to SDE \eqref{eq:wealth-Q}.

Next we show that $X^*$ is square integrable under $\mathbb Q^*$. For that purpose, we derive
\begin{align}
{\mathbb E}^{\mathbb Q^*} \bigg [ \sup_{t \in [0, T]} |X^* (t)|^2 \bigg ]
&= {\mathbb E}^{\mathbb Q^*} \bigg [ \sup_{t \in [0, T]} |{\widehat X}^* (t) \cdot e^{\int^t_0 ( r (s) - \frac{1}{Y (s)} \vee \underline c \wedge \overline c ) d s}|^2 \bigg ] \nonumber \\
&\leq C {\mathbb E}^{\mathbb Q^*} \bigg [ \sup_{t \in [0, T]} |{\widehat X}^* (t)|^2 \bigg ]
= C {\mathbb E}^{\widehat{\mathbb Q}^*} \bigg [ \frac{1}{\Pi (T)} \sup_{t \in [0, T]} |{\widehat X}^* (t)|^2 \bigg ] \nonumber \\
&\leq C \bigg \{ {\mathbb E}^{\widehat{\mathbb Q}^*} \bigg [ \frac{1}{\Pi^2 (T)} \bigg ] \bigg \}^\frac{1}{2}
\bigg \{ {\mathbb E}^{\widehat{\mathbb Q}^*} \bigg [ \sup_{t \in [0, T]} |{\widehat X}^* (t)|^4 \bigg ] \bigg \}^\frac{1}{2} \nonumber \\
&\leq C \bigg \{ {\mathbb E}^{\widehat{\mathbb Q}^*} \bigg [ \frac{1}{\Pi^2 (T)} \bigg ] \bigg \}^\frac{1}{2}
\bigg \{ {\mathbb E}^{\widehat{\mathbb Q}^*} \bigg [ |{\widehat X}^* (T)|^4 \bigg ] \bigg \}^\frac{1}{2} < \infty .
\end{align}
Here, the third line is due to the Cauchy-Schwarz inequality, the first inequality in the last line is ensured by Doob's martingale inequality, and the second one in the last line is derived by applying the reverse H\"older inequality (see \citet{Kazamaki2006}) to $\frac{1}{\Pi (T)}$ and ${\widehat X}^* (T)$. The non-negativity of $X^*$ follows immediately from the constructed form $X^* (t) = {\cal E} \{ \int^t_0 p^* (s) d W^{\widehat {\mathbb Q}} (s) \} \cdot e^{\int^t_0 ( r (s) - \frac{1}{Y (s)} \vee \underline c \wedge \overline c ) d s}$.

Moreover, by It\^o's formula, we have
\begin{align}\label{eq:wealth-1-gamma}
	d |X^* (t)|^{q} = |X^* (t)|^{q} \left \{ \left [ q A (t) + \frac{1}{2} q (q-1) |p^* (t)|^2 \right ] dt +  q \, p^* (t)^\top d W^{\mathbb Q^*} (t) \right \} .
\end{align}
Let ${\widetilde {\mathbb Q}^*}$ be another new probability measure equivalent to $\mathbb Q^*$ defined as follows:
\begin{align}
	\frac{d {\widetilde {\mathbb Q}^*}}{d \mathbb Q^*} = {\cal E} \bigg \{ - \int^T_0  \bigg \{ D (t)
	+ \frac{1}{2} (q-1) p^* (t) \bigg \}^\top d W^{\mathbb Q^*} (t) \bigg \} .
\end{align}
Thus, we obtain the relationship $|X^* (t)|^{q} = {\cal E} \{ \int^t_0 q \cdot p^* (s)^\top d W^{\widetilde {\mathbb Q}^*} (s) \} \cdot e^{\int^t_0 q ( r (s) - \frac{1}{Y (s)} \vee \underline c \wedge \overline c ) d s}$. Following similar derivations as presented above, we can show ${\mathbb E} [ \sup_{t \in [0, T]} |X^* (t)|^q ] < \infty$, for any $q \leq 2$. The proof is now completed.
\end{proof}

\begin{lemma}
Under Assumption \ref{ass:bound}, the strategies $u^* = (p^*, c^*)$ and $\phi^*$ obtained in Proposition \ref{prop:HJBI-solution} are admissible, that is, $u^* = (p^*, c^*) \in {\cal A}$ and $\phi^* \in {\cal B}$.
\end{lemma}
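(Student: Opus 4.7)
The plan is to verify each of the four conditions in Definition \ref{def:adm} for the triple $(p^*, c^*, \phi^*)$, drawing on Lemmas \ref{lem:BSDE}--\ref{lem:SDE} and on BMO-martingale machinery. Condition~1 follows directly from the explicit formulas \eqref{eq:p-c-star}--\eqref{eq:phi-star}: $\mathbb F$-predictability is inherited from $(Y, Z)$ and the model coefficients; the $\vee$-$\wedge$ truncation in $c^*$ forces $c^*(t) \in [\underline c, \overline c] = \mathscr C$; and $p^*(t) \in \Gamma$ because $\Proj_{\widehat \Gamma}\{\cdot\} \in \widehat \Gamma = (I + H/\gamma)^{1/2} \Gamma$, so the pre-multiplication by $(I + H/\gamma)^{-1/2}$ returns a vector in $\Gamma$, while $\phi^*$ is unconstrained in $\mathbb R^n$. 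Conditions~2 and~3 are, respectively, the content of Lemmas \ref{lem:SDE} and \ref{lem:RN}, so nothing more needs to be done for them.

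The remaining work is Condition~4, which I would split into three sub-cases. For $c^* X^* \in {\cal L}^1_{\mathbb Q^*}$, the lower bound $Y \geq \varepsilon$ from Lemma \ref{lem:BSDE} makes $c^*(t) \leq \max(1/\varepsilon, \underline c) \wedge \overline c$ uniformly bounded, so combining with $\sup_t |X^*(t)| \in L^2(\mathbb Q^*)$ from Lemma \ref{lem:SDE} and the trivial estimate $(\int_0^T |c^* X^*|^2 dt)^{1/2} \leq \sqrt{T}\, \|c^*\|_\infty \sup_t |X^*(t)|$ closes this sub-case. For $\phi^* \in {\cal L}^2_{\mathbb Q^*}$, one first simplifies \eqref{eq:phi-star} using $(I + H/\gamma)^{-1/2} \Proj_{\widehat \Gamma}\{\cdot\} = p^*$ to obtain $\phi^* = -\frac{\gamma H}{1-\gamma} \frac{Z}{Y} - H p^*$, so the claim reduces to $\mathcal L^2_{\mathbb Q^*}$ integrability of $Z/Y$ and $p^*$. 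The former is controlled by $Y \geq \varepsilon$ and the BMO property of $\int_0^\cdot Z^\top dW$ under $\mathbb P$ from Lemma \ref{lem:BSDE}, transferred to $\mathbb Q^*$ via stability of BMO under equivalent measure changes with BMO Girsanov kernels \citep[Theorem 3.3]{Kazamaki2006}. The BMO property of $\int_0^\cdot p^{*\top} dW$ under $\mathbb P$ is established in the proof of Lemma \ref{lem:RN} and transferred to $\mathbb Q^*$ analogously.

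The main obstacle is the remaining sub-case $p^* X^* \in {\cal L}^2_{\mathbb Q^*}$: the energy inequality for BMO martingales yields $\mathbb E^{\mathbb Q^*}[(\int_0^T |p^*|^2 dt)^n] < \infty$ for every $n \geq 1$, but Hölder's inequality applied to $\mathbb E^{\mathbb Q^*}[\sup_t |X^*|^2 \cdot \int_0^T |p^*|^2 dt]$ requires more than the $\mathcal S^2_{\mathbb Q^*}$ bound that Lemma \ref{lem:SDE} delivers for $X^*$. My plan is to upgrade $X^*$ to $\mathcal S^{2q}_{\mathbb Q^*}$ for some $q > 1$ via two chained reverse Hölder estimates based on the Doléans-Dade representation $X^*(t) = \widehat X^*(t) \exp(\int_0^t (r - c^*) ds)$ from the proof of Lemma \ref{lem:SDE}: the exponential $\widehat X^* = {\cal E}(\int p^{*\top} dW^{\widehat{\mathbb Q}^*})$ is of a BMO martingale under $\widehat{\mathbb Q}^*$, hence lies in $L^p(\widehat{\mathbb Q}^*)$ for some $p > 1$ by reverse Hölder; the density $d\mathbb Q^*/d\widehat{\mathbb Q}^* = 1/\Pi(T)$ has a logarithmic gradient that is also BMO under $\widehat{\mathbb Q}^*$, so a second reverse Hölder transfers the bound to $\mathbb Q^*$. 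Doob's maximal inequality then yields $\mathbb E^{\mathbb Q^*}[\sup_t |X^*(t)|^{2q}] < \infty$ for some $q > 1$, and Hölder's inequality with conjugate exponents $q$ and $q/(q-1)$ completes the argument for $p^* X^*$.
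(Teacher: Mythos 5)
Your treatment of Conditions 1--3 and of the sub-cases $c^* X^* \in \mathcal{L}^1_{\mathbb{Q}}$ and $\phi^* \in \mathcal{L}^2_{\mathbb{Q}}$ matches the paper's proof: Conditions 2 and 3 are delegated to Lemmas \ref{lem:SDE} and \ref{lem:RN}, the consumption term is handled by the uniform bound on $c^*$ (coming from $Y \geq \varepsilon$) together with $X^* \in \mathcal{S}^2_{\mathbb{Q}}$, and the distortion term by the BMO property of $\int_0^\cdot (\phi^*)^\top dW$ under $\mathbb{P}$ transferred to $\mathbb{Q}$ via Theorem 3.3 of \citet{Kazamaki2006}. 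Up to that point the proposal is sound.

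The gap is in the sub-case you yourself flag as the main obstacle, $p^* X^* \in \mathcal{L}^2_{\mathbb{Q}}$. Your plan is to upgrade $X^*$ from $\mathcal{S}^2_{\mathbb{Q}^*}$ to $\mathcal{S}^{2q}_{\mathbb{Q}^*}$ for some $q>1$ by chaining two reverse H\"older estimates and then conclude by H\"older against the energy inequality (the energy inequality itself is fine: it does give all moments of $\int_0^T |p^*|^2\,dt$). The upgrade step, however, does not go through in general: the reverse H\"older inequality for the exponential of a BMO martingale only holds up to an exponent determined by the size of the BMO norm (Kazamaki's threshold function decreases to $0$ as the exponent grows), and nothing in Assumption \ref{ass:bound} makes the BMO norms of $\int_0^\cdot (p^*)^\top dW^{\widehat{\mathbb{Q}}^*}$ or of the kernel of $1/\Pi(T)$ small. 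Hence the admissible exponent $p_0$ for $\widehat X^*$ and the exponent $p_1$ for $1/\Pi(T)$ may each be only barely above $1$, while your chaining needs roughly $p_0(p_1-1)/p_1 > 2$ to land on some $q>1$; this cannot be guaranteed. The paper avoids the issue entirely by invoking Lemma 4.1 of \citet{DelbaenTang2010}, a Fefferman/energy-type inequality that bounds $\mathbb{E}^{\mathbb{Q}}\big[\int_0^T |X^*(t)|^2\,|p^*(t)|^2\,dt\big]$ directly by a constant times $\big\|\int_0^\cdot (p^*)^\top dW^{\mathbb{Q}}\big\|^2_{\mathrm{BMO}}\;\mathbb{E}^{\mathbb{Q}}\big[\sup_{t}|X^*(t)|^2\big]$, so only the $\mathcal{S}^2_{\mathbb{Q}}$ bound already supplied by Lemma \ref{lem:SDE} is needed and no higher moments of $X^*$ enter. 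Replacing your integrability-upgrade step with that lemma closes the argument.
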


\begin{proof}
In Lemmas \ref{lem:RN} and \ref{lem:SDE}, we have verified Conditions 3 and 2 of Definition \ref{def:adm}, respectively.
We only need to show Conditions 1 and 4 of Definition \ref{def:adm}.
First, it follows from the explicit expressions of $u^* = (p^*, c^*)$ and 
$\phi^*$ that the former is constrained in $\widehat \Gamma \times \cal C$ and the latter takes value in $\mathbb R^n$. Hence, Condition 1 is met. 

It remains to show Condition 4. In Lemma \ref{lem:RN}, we have proved 
that $\int^\cdot_0 (\phi^* (s))^\top d W (s)$ is a BMO martingale under $\mathbb P$. By Theorem 3.3 in \citet{Kazamaki2006}, we can show that the same is true under $\mathbb Q$.
Moreover, the BMO property implies that the optimal distortion process $\phi^*$ also satisfies the square-integrability under $\mathbb Q$, that is, $\phi^* \in {\cal L}^2_{\mathbb Q} (0, T; \mathbb R^n)$.

Finally, we check other conditions related to $u^* = (p^*, c^*)$. For that purpose, we note that
1) due to Lemma \ref{lem:SDE}, the wealth process
associated with  $(p^*, c^*)$  satisfies $X^* \in {\cal S}^2_{\mathbb Q} (0, T; \mathbb R)$,
which, together with the boundedness of the optimal consumption strategy $c^*$, ensures that $X^* c^* \in {\cal L}^1_{\mathbb Q} (0, T; \mathbb R)$;
and 2) it has been shown as in the proof of Lemma \ref{lem:RN} that $\int^\cdot_0 (p^* (s))^\top d W (s)$ is a BMO martingale under $\mathbb P$. Then applying Lemma 4.1 in \citet{DelbaenTang2010}
leads to $X^* p^* \in {\cal L}^2_{\mathbb Q} (0, T; \mathbb R^n)$.
Combining the above facts gives Condition 4. Therefore, we obtain that $u^*=(p^*, c^*) \in {\cal A}$ and $\phi^* \in {\cal B}$. This completes the proof.
\end{proof}

In what follows, we present the main result of this subsection, namely the verification theorem. This theorem is proved by applying the martingale optimality principle established in Proposition \ref{prop:MOPR}.

\begin{theorem}\label{thm:main}
	Under Assumption \ref{ass:bound}, the robust optimal investment-consumption strategy and the optimal distortion process are given by \eqref{eq:p-c-star} and \eqref{eq:phi-star}
	and the value function is given by \eqref{eq:G}.
\end{theorem}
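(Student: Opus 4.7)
The plan is to apply the martingale optimality principle of Proposition \ref{prop:MOPR} with the candidate value function
\begin{align*}
v(t,x) := G(t,x,Y(t)) = \frac{x^{1-\gamma}}{1-\gamma}\,[Y(t)]^{\gamma},
\end{align*}
where $(Y,Z)$ is the unique solution pair to BSDE \eqref{eq:BSDE}--\eqref{eq:driver} delivered by Lemma \ref{lem:BSDE}. The preliminary structural hypotheses of Proposition \ref{prop:MOPR} are immediate: $v(t,x)$ is smooth in $x$ on $\mathbb{R}^+$; for each fixed $x$, $v(t,x)$ inherits path-continuity and $\mathbb{F}$-adaptedness from the continuous adapted process $Y$; and the boundary condition $v(T,x)=\beta x^{1-\gamma}/(1-\gamma)=U_2(x)$ follows from the terminal value $Y(T)=\beta^{1/\gamma}$. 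Moreover, Lemma \ref{lem:BSDE} gives $Y(t)\geq\varepsilon>0$, so $v$ is well-defined and strictly concave in $x$.

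The next step is to apply It\^o's formula to $e^{-\int_0^t\rho(\nu)d\nu}\,v(t,X^{u,\phi}(t))$ along the wealth dynamics \eqref{eq:wealth-equation} for an arbitrary admissible pair $(u,\phi)\in\mathcal{A}\times\mathcal{B}$, substituting $dY$ from \eqref{eq:BSDE} with the driver \eqref{eq:driver}. Collecting terms and using the explicit form of $v$, the drift of the cost process $\{N^{u,\phi}(t)\}_{t\in[0,T]}$ becomes
\begin{align*}
e^{-\int_0^t\rho(\nu)d\nu}\Bigl\{\mathcal{L}^{(u,\phi)}[G(t,X^{u,\phi}(t),Y(t))] + \tfrac{1-\gamma}{2}\phi^\top H^{-1}\phi\, G + U_1(cX^{u,\phi}(t))\Bigr\}.
\end{align*}
By Proposition \ref{prop:HJBI-solution}, this drift is $\leq 0$ whenever $\phi$ is the inner minimiser $\phi^{u}(t)$ associated with the given $u$ (obtained from the first-order condition applied to $(u,\cdot)$), and it vanishes identically along $(u^*,\phi^*)$ because the HJBI equation is satisfied with equality at the saddle point. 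This provides a candidate supermartingale/martingale structure for Conditions 1 and 2 of Proposition \ref{prop:MOPR}.

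To convert these drift inequalities/equalities into genuine (super)martingale statements under the appropriate measure, I need to show that the stochastic-integral part of $N^{u,\phi}$ is a true $\mathbb{Q}(\phi)$-martingale, not merely a local martingale. For the saddle point $(u^*,\phi^*)$, this is where the BMO machinery from Lemmas \ref{lem:RN}--\ref{lem:SDE} pays off: since $\int_0^\cdot Z^\top dW$, $\int_0^\cdot (p^*)^\top dW$ and $\int_0^\cdot (\phi^*)^\top dW$ are all BMO under $\mathbb{P}$, Kazamaki's theorem preserves the BMO property under the equivalent measure $\mathbb{Q}^*$, and the integrability bounds on $X^*$ from Lemma \ref{lem:SDE} together with $c^*\in[\underline c,\overline c]$ and $Y(t)\geq\varepsilon$ will control the martingale term of $N^{u^*,\phi^*}$, which I would verify by using the product structure $v=\frac{(X^*)^{1-\gamma}}{1-\gamma}Y^\gamma$ and applying H\"older/reverse-H\"older inequalities as in the proof of Lemma \ref{lem:SDE}. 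Thus Condition 2 of Proposition \ref{prop:MOPR} holds with $\mathbb{Q}^*=\mathbb{Q}(\phi^*)$.

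The main obstacle is verifying Condition 1 uniformly over $u\in\mathcal{A}$. For a general admissible $u=(p,c)$, the natural choice of $\phi$ making $N^{u,\phi}$ a supermartingale is the measurable selector $\phi^u$ from the pointwise inner minimisation of the integrand; one must check that this $\phi^u$ lies in $\mathcal{B}$ (admissibility of the associated measure and integrability). To handle this, I would use the linear-in-$\phi$ plus quadratic-penalty structure, which gives the explicit formula $\phi^u=-\frac{H}{1-\gamma}\bigl[\gamma\tfrac{Z}{Y}+(1-\gamma)p\bigr]$ (from \eqref{eq:phi*-foc-inter-solution}); boundedness of $H$ and $\varepsilon$-boundedness of $Y$ from below reduce the problem to controlling $\int_0^\cdot (p^\top,Z^\top/Y) dW$, where the $Z/Y$ term is BMO by Lemma \ref{lem:BSDE} and the $p$-term is controlled by the admissibility condition $p(\cdot)X(\cdot)\in\mathcal{L}^2_{\mathbb{Q}}$ combined with the positivity lower bound of $X$. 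Localising by stopping times $\tau_n\uparrow T$ where all stochastic integrals are true martingales, taking expectations to obtain supermartingale inequalities between stopping times, and passing to the limit via the uniform integrability guaranteed by the admissibility conditions in Definition \ref{def:adm} then yields the desired supermartingale property. Combining Conditions 1 and 2 via Proposition \ref{prop:MOPR} concludes that $V(t,x)=v(t,x)=\tfrac{x^{1-\gamma}}{1-\gamma}[Y(t)]^{\gamma}$ and $(p^*,c^*,\phi^*)$ are optimal.
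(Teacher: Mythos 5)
Your proposal is correct and follows essentially the same route as the paper: the martingale optimality principle of Proposition \ref{prop:MOPR} applied to $v(t,x)=\frac{x^{1-\gamma}}{1-\gamma}[Y(t)]^\gamma$, with the drift of $N^{u,\phi}$ identified with the HJBI expression, the inner minimiser $\phi^u=-\frac{H}{1-\gamma}[\gamma Z/Y+(1-\gamma)p]$ plus localisation for the supermartingale part, and BMO/reverse-H\"older estimates for the martingale property along $(u^*,\phi^*)$. The only point to spell out explicitly is the verification of \eqref{eq:opt-measure} in Condition 2 — that $\phi^*$ globally minimises the drift over $\phi\in\mathbb{R}^n$ for fixed $u^*$, which the paper checks via the positive-definite Hessian $H^{-1}x^{1-\gamma}Y^\gamma$ and the first-order condition; this is implicit in your appeal to the quadratic-penalty structure but should be stated.
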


\begin{proof}
For simplicity, we define
\begin{align}
v (t, x) := G (t, x, Y (t)) = \frac{x^{1-\gamma}}{1-\gamma} \cdot [Y (t)]^{\gamma} ,
\end{align}
and recall the following two processes defined in Proposition \ref{prop:MOPR}:
\begin{align}
R (s, x, u, \phi) & : = \frac{1-\gamma}{2} \phi^\top H^{-1} \phi \cdot G (s, x, Y (s)) + U_1 (c x) , \\
N^{u,\phi} (t) & := \int^t_0 e^{- \int^s_0 \rho (\nu) d \nu} R (s, X^{u,\phi} (s), u (s), \phi (s)) d s + e^{- \int^t_0 \rho (\nu) d \nu}  v (t, X^{u,\phi} (t)) .
\end{align}


First, we verify Condition 1 in Proposition \ref{prop:MOPR}. In what follows, we suppress the dependence of the cost and wealth processes on $u$ and $\phi$ and write $N (t) = N^{u,\phi} (t)$ and $X (t) = X^{u,\phi} (t)$ whenever there is no risk of confusion. Applying It\^o's formula to $N (t)$ gives
\begin{align*}
d N (t) = e^{- \int^t_0 \rho (s) d s} \big [ \Xi (t, X (t), Y (t); u (t), \phi (t)) d t
+ \widetilde \Xi (t, X (t), Y (t), Z (t); u (t), \phi (t)) d W^{\mathbb Q} (t) \big ] ,
\end{align*}
where
{\small
\begin{align*}
\Xi (t, X (t), Y (t), Z (t); u (t), \phi (t)) :=& - \rho (t) G (t, X (t), Y (t))
+ \left [ r (t) + p^\top (t) (\theta (t) + \phi (t)) - c (t) \right ] X (t) G_x (t, X (t), Y (t)) \\
& - G_y (t, X (t), Y (t)) [f (t, Y (t), Z (t)) - Z^\top (t) \phi (t) ]
+ \frac{1}{2} X^2 (t) | p (t) |^2 G_{xx} (t, X (t), Y (t)) \\
& + X (t) p^\top (t) Z (t) G_{xy} (t, X (t), Y (t))
+ \frac{1}{2} |Z (t)|^2 G_{yy} (t, X (t), Y (t)) \\
& + \frac{1-\gamma}{2} \phi^\top (t) H^{-1} \phi (t) G (t, X (t), Y (t)) + U_1 (c (t)X(t))
\end{align*}
}
and
\begin{align*}
\widetilde \Xi (t, X (t), Y (t), Z (t); u (t), \phi (t)) : =& \ G_x (t, X (t), Y (t)) X(t)p^\top (t) + G_y (t, X (t), Y (t)) Z^\top (t) \\
=& \ [X (t)]^{1-\gamma} [Y (t)]^{\gamma} p^\top (t) + \frac{\gamma}{1-\gamma} [X (t)]^{1-\gamma} [Y (t)]^{\gamma-1} Z^\top (t) .
\end{align*}
For any $u$, we set
\begin{align}
\phi (t) = \Phi (t, u) = - H \bigg [ \frac{\gamma}{1-\gamma} \frac{Z (t)}{Y (t)} + p \bigg ] .
\end{align}
Then we have
{\small
\begin{align*}
\Xi (t, X (t), Y (t), Z (t); u, \Phi (t, u))
& = - \rho (t) G
+ \left [ r (t) + p^\top \theta (t) - c \right ] X (t) G_x - G_y f (t, Y (t), Z (t)) + \frac{1}{2} X^2 (t) | p |^2 G_{xx} \\
& \quad + X (t) p^\top Z (t) G_{xy}
+ \frac{1}{2} |Z (t)|^2 G_{yy} - \frac{1-\gamma}{2} \Phi^\top (t, u) H^{-1} \Phi (t, u) G + U_1 (c X (t)) \\
& = - \rho (t) G + r (t) X (t) G_x - G_y f + \frac{1}{2} |Z (t)|^2 G_{yy} + \big [ - c X (t) G_x + U_1 (c X (t)) \big ] \\
& \quad + p^\top \theta (t) X (t) G_x + \frac{1}{2} X^2 (t) | p |^2 G_{xx} + X (t) p^\top Z (t) G_{xy}
- \frac{1-\gamma}{2} \Phi^\top (t, u) H^{-1} \Phi (t, u) G .
\end{align*}
}
It follows from the proof of Proposition \ref{prop:HJBI-solution} that
\begin{align*}
u^*(t) = (p^*(t), c^*(t)) = \argmax_{u \in \widehat \Gamma \times \mathscr C} \Xi (t, X (t), Y (t), Z (t); u, \Phi (t, u)) ,
\end{align*}
where $p^*(t)$ and $c^*(t)$ are given by Eq. \eqref{eq:p-c-star}, and
\begin{align}\label{eq:opt-strategy-cond}
\Xi (t, X (t), Y (t), Z (t); u (t), \Phi (t, u (t))) \leq \Xi (t, X (t), Y (t), Z (t); u^* (t), \Phi (t, u^* (t))) = 0 , \quad
\forall u (\cdot) \in {\cal A} .
\end{align}

To confirm that for any $u (\cdot) \in {\cal A}$, there always exists $\phi (t) = \Phi (t, u (t))$ such that the cost process $N$ is a supermartingale under ${\mathbb Q} (\phi)$, we now apply localization techniques. To that end, we need to show
that $\int^\cdot_0 \widetilde \Xi (t, X (t), Y (t), Z (t); u (t), \phi (t)) d W^{\mathbb Q} (t)$ is a local martingale under ${\mathbb Q} (\phi)$. Since $Y \in {\cal S}^\infty_{\mathbb Q (\phi)} (0, T; \mathbb R)$, we only need to show that $\int^\cdot_0 |X (t)|^{2(1-\gamma)} |p (t)|^2 d t < \infty$ and $\int^\cdot_0 [X (t)]^{1-\gamma} Z^\top (t) d t < \infty$ hold ${\mathbb Q} (\phi)$-almost surely. We only prove the first inequality as the second one can be proved similarly. For that purpose, using Markov's inequality, we derive
\begin{align}
	{\mathbb Q} \bigg ( \sup_{t \in [0,T]} |X (t)|^{-2\gamma} \geq a \bigg ) \leq \frac{{\mathbb E} \left[\sup_{t \in [0,T]} |X (t)|^{-2\gamma}\right]}{a} .
\end{align}
Sending $a$ to $\infty$ and combining with ${\mathbb E} [ \sup_{t \in [0, T]} |X^* (t)|^{-2\gamma} ] < \infty$ gives
\begin{align}
{\mathbb Q} \bigg ( \sup_{t \in [0,T]} |X (t)|^{-2\gamma} \geq \infty \bigg ) = 0 \quad \mbox{and} \quad
{\mathbb Q} \bigg ( \sup_{t \in [0,T]} |X (t)|^{-2\gamma} < \infty \bigg ) = 1 .
\end{align}
By using $p X \in {\cal L}^2_{\mathbb Q} (0, T; \mathbb R^n)$, we obtain
\begin{align}
\int^T_0 |X (t)|^{2(1-\gamma)} |p (t)|^2 d t \leq \bigg(\sup_{t \in [0, T]} |X (t)|^{-2\gamma}\bigg) \cdot \int^T_0 |X (t)|^{2} |p (t)|^2 d t
< \infty , \quad {\mathbb Q}\mbox{-a.s.}
\end{align}
Therefore, we can confirm that $\int^\cdot_0 \widetilde \Xi (t, X (t), Y (t), Z (t); u (t), \phi (t)) d W^{\mathbb Q} (t)$ is a local martingale under ${\mathbb Q} (\phi)$. Then, applying localization techniques leads to that $N$ is a supermartingale under ${\mathbb Q} (\phi)$.

Next we verify Condition 2 in Proposition \ref{prop:MOPR}. For that purpose, we show that the cost process $N$ is a ${\mathbb Q} (\phi^*)$-martingale. It follows from Proposition \ref{prop:HJB-solution} that for $u^* = (p^*, c^*)$ and $\phi^* = \Phi (t, u^*)$. Lemma \ref{lem:SDE} ensures ${\mathbb E}^{\mathbb Q (\phi^*)} [ \sup_{t \in [0, T]} |X^* (t)|^{2 (1-\gamma)} ] < \infty$, i.e. $|X^*|^{1-\gamma} \in {\cal S}^2_{\mathbb Q (\phi^*)} (0, T; \mathbb R)$. Moreover, since $p^*$ and $Z$ satisfy the BMO property, and $Y$ is bounded 
and bounded above zero,
we have $\int^\cdot_0 [X^* (t)]^{1-\gamma} [Y (t)]^{\gamma} p^* (t)^\top d W^{{\mathbb Q} (\phi^*)} (t)$ and $\int^\cdot_0 [X^* (t)]^{1-\gamma} [Y (t)]^{\gamma-1} Z (t)^\top d W^{{\mathbb Q} (\phi^*)} (t)$ are ${\mathbb Q} (\phi^*)$-martingales (see Lemma 1.4 in \citet{DelbaenTang2010}).
Therefore, associated with $u^*$ and $\phi^*$, the cost process $N$ is a martingale under ${\mathbb Q} (\phi^*)$.

It remains to verify \eqref{eq:opt-measure}, which can be considered as an optimal control problem with the state process $X^{u^*,\phi}$.
Hence, we only need to show that $\phi^*$ is the optimal strategy of the minimisation problem \eqref{eq:opt-measure}. It is equivalent to showing that
\begin{align}\label{eq:opt-measure-cond1}
\Xi (t, X (t), Y (t), Z (t); u^* (t), \phi (t)) \geq \Xi (t, X (t), Y (t), Z (t); u^* (t), \phi^* (t)) = 0
\end{align}
and
\begin{align}\label{eq:opt-measure-cond2}
\phi^* (t) = \argmin_{\phi \in \mathbb R^n} \Xi (t, X (t), Y (t), Z (t); u^* (t), \phi) .
\end{align}
From \eqref{eq:opt-strategy-cond}, the last equality in \eqref{eq:opt-measure-cond1} is true.
It suffices to show that $\phi^*$ is the minimiser of $\Xi (t, X (t), Y (t), Z (t); u^* (t), \phi)$.
Note the Hessian matrix of $\Xi$ w.r.t. $\phi$, i.e.
\begin{align}
\frac{\partial^2}{\partial \phi^2} \Xi (t, X (t), Y (t), Z (t); u^*(t), \phi)
= H^{-1} [X (t)]^{1-\gamma} [Y (t)]^{\gamma}
\end{align}
is positive-definite. Then applying the first-order condition to $\Xi$, that is, 
\begin{align*}
\frac{\partial}{\partial \phi} \Xi (t, X (t), Y (t), Z (t); u^*, \phi)
= X (t) [p^* (t)]^\top G_x + G_y Z^\top (t) + (1-\gamma) \phi^\top (t) H^{-1} G = 0
\end{align*}
gives the minimiser
\begin{align}
\phi (t) = - H \bigg [ \frac{\gamma}{1-\gamma} \frac{Z (t)}{Y (t)} + p^* (t) \bigg ]
\end{align}
which coincides with $\phi^* (t) = \Phi (t, u^* (t))$. The proof is now completed.
\end{proof}

\subsection{Utility Loss}
\label{sec:UL}

To end this section, we consider the situation where the investor ignores model uncertainty and adopts the sub-optimal investment-consumption strategy $(p_0, c_0) = (p_0^*, c^*_0)$ as defined in Remark \ref{rmk:no-ambiguity}. Therefore, the optimisation problem becomes
\begin{align}\label{eq:pro-inf}
	\widetilde V (t,x) = \inf_{\mathbb Q\in\mathcal{Q}}{\mathbb E}_{t,x}^{\mathbb Q}
	\Bigg[& \int_t^T e^{-\int_t^s \rho(\nu) d\nu} U_1 (c_0 (s) X^{(p_0, c_0, \phi)} (s)) ds
	+ e^{-\int_t^T \rho(\nu) d\nu} U_2 (X^{(p_0, c_0, \phi)}  (T)) \nonumber \\
	& + \frac{1-\gamma}{2} \int_t^T e^{-\int_t^s \rho(\nu) d\nu} \phi (s)^\top H^{-1} \phi (s) \widetilde V (s, X^{(p_0, c_0, \phi)} (s)) ds  \Bigg] ,
\end{align}
where $X^{(p_0, c_0, \phi)}$ denotes the wealth process associated with $(p_0, c_0)$ and $\phi$.

To establish rigorous results, we impose the following assumption in the rest of this subsection:
\begin{assumption}\label{ass:bound-constraint}
	The closed convex set $\widehat \Gamma$ is bounded.
\end{assumption}
Note that all the results obtained prior to Subsection \ref{sec:UL} do not rely on Assumption \ref{ass:bound-constraint}. In Section \ref{sec:specialdeterministic} for the deterministic case, we do not need Assumption \ref{ass:bound-constraint} either. In the remaining part of the paper, we will state explicitly if 
Assumption \ref{ass:bound-constraint} is needed.

The solution to the optimisation problem \eqref{eq:pro-inf} can be 
derived by solving the following stochastic HJB equation
\begin{align}\label{eq:HJB}
	\inf_{\phi \in {\cal B}} \bigg \{ {\cal L}^{(p_0, c_0, \phi)} [\widetilde G (t, x, \widetilde Y (t))]
	+ \frac{1-\gamma}{2} \phi^\top H^{-1} \phi G (t, x, \widetilde Y (t)) + U_1 (c_0 x) \bigg \} = 0 ,
	\quad \widetilde G (T, x, \xi) = U_2 (x) ,
\end{align}
where
\begin{align}\label{eq:inf-generator-sub}
	{\cal L}^{(p_0, c_0, \phi)} [\widetilde G (t, x, \widetilde Y (t))]
	=& - \rho (t) \widetilde G (t, x,\widetilde  Y (t)) + \widetilde G_t (t, x, \widetilde Y (t))
	+ x \left [ r (t) + p^\top_0 (t) (\theta (t) + \phi) - c_0 (t) \right ] \widetilde G_x (t, x, \widetilde Y (t)) \nonumber \\
	& - \widetilde G_y (t, x, \widetilde Y (t)) [\widetilde f (t, \widetilde Y (t), \widetilde Z (t)) - \widetilde Z^\top (t) \phi  ]
	+ \frac{1}{2} x^2 | p_0 (t) |^2 \widetilde G_{xx} (t, x, \widetilde Y (t)) \nonumber \\
	& + x p^\top_0 (t) \widetilde Z (t) \widetilde G_{xy} (t, x, \widetilde Y (t))
	+ \frac{1}{2} |\widetilde Z (t)|^2 \widetilde G_{yy} (t, x, \widetilde Y (t)) ,
\end{align}
and $(\widetilde Y, \widetilde Z)$ is the solution to the following BSDE
\begin{align}\label{eq:BSDE-wt}
	d \widetilde Y (t) = - \widetilde f (t, \widetilde Y (t), \widetilde Z (t)) d t + \widetilde Z (t)^\top d W (t) , \quad \widetilde Y (T) = \beta^\frac{1}{\gamma} ,
\end{align}
with the driver given by
{\small
\begin{align}\label{eq:driver-wt}
	\widetilde f (t, \widetilde Y, \widetilde Z) =& \ \frac{1-\gamma}{\gamma} \bigg \{ - \frac{\widetilde Y}{Y_0 \wedge \frac{1}{\underline c} \vee \frac{1}{\overline c}} +  \frac{1}{1-\gamma} \bigg [ \frac{\widetilde Y}{Y_0 \wedge \frac{1}{\underline c} \vee \frac{1}{\overline c}} \bigg ]^{1-\gamma} \bigg \}  - \frac{1-\gamma}{2} \Proj^\top_{\widehat \Gamma} \bigg \{ \frac{\theta}{\gamma} + \frac{Z_0}{Y_0} \bigg \}  \bigg ( I + \frac{H}{\gamma} \bigg ) \Proj_{\widehat \Gamma} \bigg \{ \frac{\theta}{\gamma} + \frac{Z_0}{Y_0} \bigg \} \widetilde Y \nonumber \\
	& + \frac{1}{\gamma} \bigg [ - \rho + (1-\gamma) r
	\bigg ] \widetilde Y + (1-\gamma) \Proj^\top_{\widehat \Gamma} \bigg \{ \frac{\theta}{\gamma} + \frac{Z_0}{Y_0} \bigg \}
	\bigg [ \frac{\theta}{\gamma} + \frac{\widetilde Z}{\widetilde Y} \bigg ( I -\frac{1}{1-\gamma} H \bigg ) \bigg ] \widetilde Y - \frac{1-\gamma}{2} \frac{\widetilde Z^\top ( I + \frac{\gamma}{(1-\gamma)^2} H ) \widetilde Z}{\widetilde Y} . 
\end{align}
}
Here $(Y_0, Z_0)$ denotes the unique solution to BSDE \eqref{eq:BSDE-0}.

\begin{lemma}\label{lem:BSDE-wt}
	Under Assumptions \ref{ass:bound} and \ref{ass:bound-constraint}, BSDE \eqref{eq:BSDE-wt} admits a unique solution $(\widetilde Y, \widetilde Z) \in {\cal S}^\infty_{\mathbb P} (0, T; \mathbb R) \times {\cal L}^2_{\mathbb P} (0, T; \mathbb R^n)$. Moreover,
	$\widetilde Y$ is bounded above zero, i.e. there exists $\varepsilon > 0$ such that $\widetilde Y \geq \varepsilon$;
	the It\^o integral $\int^\cdot_0 \widetilde Z (s)^\top d W (s)$ is a BMO martingale under $\mathbb P$.
\end{lemma}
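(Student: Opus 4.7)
The plan is to adapt closely the two-case truncation argument developed in the proof of Lemma \ref{lem:BSDE}, since the driver $\widetilde f$ shares the same structural features that caused difficulty there: a superlinear term in $\widetilde Y$ (namely the $[\widetilde Y/(Y_0 \wedge \tfrac{1}{\underline c} \vee \tfrac{1}{\overline c})]^{1-\gamma}$ piece), a linear-in-$\widetilde Z$ term, and a quadratic-in-$\widetilde Z$ term divided by $\widetilde Y$. The key new ingredient is the projection $\Proj_{\widehat \Gamma}\{\tfrac{\theta}{\gamma} + \tfrac{Z_0}{Y_0}\}$, which couples the BSDE to the sub-optimal solution $(Y_0, Z_0)$; by Assumption \ref{ass:bound-constraint} the set $\widehat \Gamma$ is bounded, so the projection is uniformly bounded on $[0,T]\times\Omega$ regardless of $(Y_0,Z_0)$. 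Together with Lemma \ref{lem:BSDE} applied with $H=0$ (which guarantees $Y_0$ is bounded and bounded above zero, and $\int_0^\cdot Z_0^\top dW$ is BMO under $\mathbb P$), this makes all the $(Y_0,Z_0)$-dependent coefficients in $\widetilde f$ essentially bounded processes, reducing the problem to one of the same analytical type as in Lemma \ref{lem:BSDE}.

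The first step is to replace $\widetilde Y$ in the denominators appearing in $\widetilde f$ by $\widetilde Y \vee \delta$ for an arbitrary $\delta>0$, obtaining a truncated driver $\widetilde f_\delta$. I then verify that $\widetilde f_\delta$ has linear growth in $\widetilde Y$ and at most quadratic growth in $\widetilde Z$, and that the terminal value $\beta^{1/\gamma}$ is bounded, so that Kobylanski's theorem \citep{Kobylanski2000} delivers a unique solution $(\widetilde Y_\delta, \widetilde Z_\delta) \in {\cal S}^\infty_{\mathbb P}(0,T;\mathbb R)\times{\cal L}^2_{\mathbb P}(0,T;\mathbb R^n)$ with $\int_0^\cdot \widetilde Z_\delta^\top dW$ being a BMO martingale under $\mathbb P$. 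For the case $\gamma>1$, I perform a Girsanov change of measure that simultaneously absorbs the linear-in-$\widetilde Z$ coefficient and the $\widetilde Z^\top M \widetilde Z / (\widetilde Y\vee\delta)$ term (with $M:=I+\tfrac{\gamma}{(1-\gamma)^2}H$), producing a new measure $\mathbb P_\delta \sim \mathbb P$ under which $\widetilde Y_\delta$ admits a Feynman-Kac representation of the form $\widetilde Y_\delta(t)=\mathbb E^{\mathbb P_\delta}_t[\beta^{1/\gamma}e^{\int_t^T \widehat \Theta_\delta(s)ds}]$. For $\gamma\in(0,1)$, I instead rewrite $\widetilde f$ in the geometric form analogous to equation \eqref{eq:driver-rew} and use an analogous measure change to remove the $\widetilde Z^\top H \widetilde Z/\widetilde Y$ term, obtaining a non-negative representation for $\widetilde Y_\delta$.

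In both cases I extract a lower bound $\varepsilon>0$ on $\widetilde Y_\delta(t)$ that depends only on $\beta$, $\gamma$, the bounds on the model coefficients in Assumption \ref{ass:bound}, and the diameter of $\widehat \Gamma$, but is independent of $\delta$. Setting $\delta:=\varepsilon$ from the outset, the truncation becomes inactive ($\widetilde Y_\delta \vee \delta = \widetilde Y_\delta$) and $(\widetilde Y_\delta, \widetilde Z_\delta)$ actually solves the original BSDE \eqref{eq:BSDE-wt}; uniqueness follows from uniqueness for the truncated BSDE. The BMO property of $\int_0^\cdot \widetilde Z^\top dW$ under $\mathbb P$ is inherited directly from the BMO property for the truncated problem given by Kobylanski's result.

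The main obstacle I expect is the case $\gamma\in(0,1)$, where the quadratic term $-\tfrac{1-\gamma}{2}\widetilde Z^\top M\widetilde Z/\widetilde Y$ has the wrong sign for a direct positivity argument and the $\Proj^\top_{\widehat \Gamma}\{\cdot\}\tfrac{\widetilde Z}{\widetilde Y}(I-\tfrac{1}{1-\gamma}H)\widetilde Y$ piece contributes a cross term that can be negative. As in Case (ii) of Lemma \ref{lem:BSDE}, I would bypass this by completing the square to reveal the non-negative quantity $[p_0^*]^\top(I+\tfrac{H}{\gamma})p_0^*\,\widetilde Y$, combined with the projection inequality $[\Proj_{\widehat \Gamma}\{v\}]^\top v \geq |\Proj_{\widehat \Gamma}\{v\}|^2$ valid for convex sets containing the origin (or an appropriate shifted version), so that the exponent in the Feynman-Kac representation is bounded below by $\widetilde \Theta:=\tfrac{1}{\gamma}[-\rho+(1-\gamma)r]$, yielding the uniform positive lower bound $\widetilde\varepsilon$ and closing the argument.
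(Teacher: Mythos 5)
Your proposal is correct and follows essentially the same route as the paper, which simply observes that under Assumption \ref{ass:bound-constraint} all coefficients and projection terms in the driver \eqref{eq:driver-wt} are bounded and then defers to the truncation-plus-measure-change argument of Lemma \ref{lem:BSDE}; your write-up is a faithful (and more detailed) instantiation of exactly that. The only minor remark is that in the case $\gamma\in(0,1)$ the projection $\Proj_{\widehat\Gamma}\{\theta/\gamma+Z_0/Y_0\}$ is exogenous and uniformly bounded, so the negative cross and quadratic terms in $\widetilde Z$ can simply be absorbed into the Girsanov kernel, making the completing-the-square/projection-inequality step you anticipate unnecessary (though not harmful).
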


\begin{proof}
	Under Assumptions \ref{ass:bound} and \ref{ass:bound-constraint}, all model coefficients and the projection operators
	in the driver \eqref{eq:driver-wt} are bounded. Then we can apply similar truncation techniques as in the proof of Lemma \ref{lem:BSDE} to derive the desired results, which is thus omitted here.
\end{proof}

\begin{remark}
	The main reason for imposing Assumption \ref{ass:bound-constraint} is that without this assumption, the boundedness and BMO properties of the solution $(\widetilde Y, \widetilde Z)$ may not hold since $Z_0$ is unbounded in general. It is worth pointing out $Z_0$ becomes zero in the deterministic case. Therefore, we do not need Assumption \ref{ass:bound-constraint} in Section \ref{sec:specialdeterministic}.
\end{remark}

Following the same procedure of solving the stochastic HJBI equation \eqref{eq:HJBI}, we derive the solution to the stochastic HJB equation \eqref{eq:HJB} in the following proposition.

\begin{proposition}\label{prop:HJB-solution}
	Under Assumptions \ref{ass:bound} and \ref{ass:bound-constraint}, the infimum in the stochastic HJB equation \eqref{eq:HJB} is achieved by
\begin{align}\label{eq:phi-star-wt}
	\widetilde \phi^* (t) = -\frac{\gamma H}{1-\gamma} \frac{\widetilde Z (t)}{\widetilde Y (t)}
	- H \Proj_{\widehat \Gamma} \bigg \{ \frac{\theta (t)}{\gamma} + \frac{Z_0 (t)}{Y_0 (t)} \bigg \} ,
\end{align}
and the solution to the stochastic HJB equation \eqref{eq:HJB} is given by
\begin{align}\label{eq:G-wt}
	\widetilde v (t, x) = \widetilde G (t, x, \widetilde Y (t)) = \frac{x^{1-\gamma}}{1-\gamma} \times [\widetilde Y (t)]^{\gamma} ,
\end{align}
where $(\widetilde Y, \widetilde Z)$ and $(Y_0, Z_0)$ are the unique solutions to BSDEs \eqref{eq:BSDE-wt} and \eqref{eq:BSDE-0}, respectively.
\end{proposition}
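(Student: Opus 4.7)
The proof proposal mirrors the strategy of Proposition \ref{prop:HJBI-solution}, but is substantially simpler because the investment-consumption strategy $(p_0, c_0)$ is fixed by Remark \ref{rmk:no-ambiguity} and only the infimum over the distortion process $\phi$ remains to be handled. The plan is to identify the pointwise minimiser $\widetilde\phi^*$ via a first-order condition, verify the parametric ansatz \eqref{eq:G-wt} by direct substitution, and show that the remaining residue is exactly the driver \eqref{eq:driver-wt} of BSDE \eqref{eq:BSDE-wt}, whose well-posedness is provided by Lemma \ref{lem:BSDE-wt}.

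First I would impose the ansatz $\widetilde G(t,x,\widetilde Y) = \frac{x^{1-\gamma}}{1-\gamma}\,[\widetilde Y]^{\gamma}$ so that $\widetilde G_x/\widetilde G = (1-\gamma)/x$ and $\widetilde G_y/\widetilde G = \gamma/\widetilde Y$. Because $\phi$ is unconstrained and the map $\phi\mapsto \frac{1-\gamma}{2}\phi^{\top}H^{-1}\phi\,\widetilde G + \widetilde G_y\widetilde Z^{\top}\phi + x\widetilde G_x p_0^{\top}\phi$ is strictly convex in $\phi$ (the Hessian $H^{-1}\widetilde G$ is positive-definite by Lemma \ref{lem:BSDE-wt}, which gives $\widetilde Y\ge\varepsilon>0$), the first-order condition yields a unique minimiser
\begin{align*}
\widetilde\phi^*(t) = -\frac{H}{1-\gamma}\bigg[\frac{\widetilde G_y}{\widetilde G}\widetilde Z + \frac{x\widetilde G_x}{\widetilde G}\,p_0(t)\bigg] = -\frac{\gamma H}{1-\gamma}\frac{\widetilde Z(t)}{\widetilde Y(t)} - H\,p_0(t).
\end{align*}
Substituting the expression $p_0(t) = \Proj_{\widehat\Gamma}\{\theta(t)/\gamma + Z_0(t)/Y_0(t)\}$ from Remark \ref{rmk:no-ambiguity} recovers exactly \eqref{eq:phi-star-wt}.

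Next I would substitute $\widetilde\phi^*$ and the ansatz back into \eqref{eq:HJB} and collect terms. The consumption-related contributions produce, after factoring $\frac{\gamma}{1-\gamma}x^{1-\gamma}\widetilde Y^{\gamma-1}$, the piece $\frac{1-\gamma}{\gamma}\{-\widetilde Y/(Y_0\wedge \underline c^{-1}\vee\overline c^{-1}) + \frac{1}{1-\gamma}[\widetilde Y/(Y_0\wedge \underline c^{-1}\vee\overline c^{-1})]^{1-\gamma}\}$, using that $c_0 = 1/Y_0\vee\underline c\wedge\overline c$, i.e., $1/c_0 = Y_0\wedge\underline c^{-1}\vee\overline c^{-1}$. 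The drift term from $p_0(\theta+\widetilde\phi^*)$ together with the penalty $\tfrac{1-\gamma}{2}(\widetilde\phi^*)^{\top}H^{-1}\widetilde\phi^*\,\widetilde G$ collapses, after completing the square in $p_0$ and using $p_0 = \Proj_{\widehat\Gamma}\{\cdot\}$, into the quadratic form $-\tfrac{1-\gamma}{2}\Proj^{\top}_{\widehat\Gamma}\{\cdot\}(I+H/\gamma)\Proj_{\widehat\Gamma}\{\cdot\}\widetilde Y$ plus the linear cross term in $(\widetilde Z/\widetilde Y)$, while the pure $\widetilde Z$-terms from $\widetilde G_{yy}$ and from $\widetilde\phi^*$ combine into $-\tfrac{1-\gamma}{2}\widetilde Z^{\top}(I+\gamma(1-\gamma)^{-2}H)\widetilde Z/\widetilde Y$. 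Matching these against $-\widetilde f(t,\widetilde Y,\widetilde Z)$ yields precisely \eqref{eq:driver-wt}.

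Finally, the terminal condition $\widetilde G(T,x,\widetilde Y(T)) = U_2(x) = \beta x^{1-\gamma}/(1-\gamma)$ forces $\widetilde Y(T) = \beta^{1/\gamma}$, consistent with \eqref{eq:BSDE-wt}; Lemma \ref{lem:BSDE-wt} then delivers a unique solution $(\widetilde Y,\widetilde Z)\in {\cal S}^{\infty}_{\mathbb P}\times {\cal L}^2_{\mathbb P}$ that is bounded away from zero, so the ansatz is well-defined throughout $[0,T]\times\Omega\times\mathbb R^+$. The most delicate step is the algebraic verification in the second paragraph, specifically ensuring that the combination of the projection term arising from $p_0$ and the penalty quadratic in $\widetilde\phi^*$ telescopes correctly; this is a direct but lengthy expansion analogous to the one producing \eqref{eq:driver} in Proposition \ref{prop:HJBI-solution}, with the key simplification that $p_0$ is not optimised against $\widetilde Y,\widetilde Z$ but is inherited from $(Y_0,Z_0)$.
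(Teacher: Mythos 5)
Your proposal is correct and follows essentially the same route as the paper's proof: a first-order condition in $\phi$ (unconstrained, with the strictly convex quadratic penalty guaranteeing a unique minimiser), substitution of $p_0$ from Remark \ref{rmk:no-ambiguity} to obtain \eqref{eq:phi-star-wt}, and direct substitution of the ansatz \eqref{eq:G-wt} into \eqref{eq:HJB} to match the residual against the driver \eqref{eq:driver-wt}. The only cosmetic imprecision is that the Hessian of the penalised objective in $\phi$ is $(1-\gamma)H^{-1}\widetilde G = H^{-1}x^{1-\gamma}\widetilde Y^{\gamma}$ (the factor $1-\gamma$ cancels the sign of $\widetilde G$), which is positive-definite for all $\gamma \in (0,1)\cup(1,\infty)$, not $H^{-1}\widetilde G$ alone.
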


\begin{proof}
By the first-order condition with respect to $\phi$, we have
\begin{align*}
	\widetilde G_y (t, x, \widetilde Y) \widetilde Z^\top + x \widetilde G_x (t, x, \widetilde Y) p^\top_0 (t) + (1-\gamma) \phi^\top H^{-1} \widetilde G (t, x, \widetilde Y) = 0 .
\end{align*}
Plugging $p_0$ and $\widetilde G$ into the above equation gives
\begin{align}
	\widetilde \phi^* (t) = - H \bigg [ \frac{\gamma}{1-\gamma} \frac{\widetilde Z (t)}{\widetilde Y (t)} + p_0 (t) \bigg ]
	= -\frac{\gamma H}{1-\gamma} \frac{\widetilde Z (t)}{\widetilde Y (t)} - H \Proj_{\widehat \Gamma} \bigg \{ \frac{1}{\gamma} \theta (t)
	+ \frac{Z_0 (t)}{Y_0 (t)} \bigg \} .
\end{align}
Substituting $(p_0, c_0)$ and $\widetilde \phi^*$ into the stochastic HJB equation \eqref{eq:HJB} and combining with \eqref{eq:driver-wt}, we immediately have that the equality in \eqref{eq:HJB} holds, that is,
\begin{align}
\frac{\gamma}{1-\gamma} x^{1-\gamma} \widetilde Y^{\gamma - 1} \bigg \{& - \widetilde f + \frac{1-\gamma}{\gamma} \bigg \{ - \frac{\widetilde Y}{Y_0 \wedge \frac{1}{\underline c} \vee \frac{1}{\overline c}} +  \frac{1}{1-\gamma} \bigg [ \frac{\widetilde Y}{Y_0 \wedge \frac{1}{\underline c} \vee \frac{1}{\overline c}} \bigg ]^{1-\gamma} \bigg \} \\
 & - \frac{1-\gamma}{2} \Proj^\top_{\widehat \Gamma} \bigg \{ \frac{\theta}{\gamma} + \frac{Z_0}{Y_0} \bigg \}  \bigg ( I + \frac{H}{\gamma} \bigg ) \Proj_{\widehat \Gamma} \bigg \{ \frac{\theta}{\gamma} + \frac{Z_0}{Y_0} \bigg \} \widetilde Y + \frac{1}{\gamma} \bigg [ - \rho + (1-\gamma) r
	\bigg ] \widetilde Y \nonumber \\
	&  + (1-\gamma) \Proj^\top_{\widehat \Gamma} \bigg \{ \frac{\theta}{\gamma} + \frac{Z_0}{Y_0} \bigg \}
	\bigg [ \frac{\theta}{\gamma} + \frac{\widetilde Z}{\widetilde Y} \bigg ( I -\frac{1}{1-\gamma} H \bigg ) \bigg ] \widetilde Y - \frac{1-\gamma}{2} \frac{\widetilde Z^\top ( I + \frac{\gamma}{(1-\gamma)^2} H ) \widetilde Z}{\widetilde Y} \bigg \} = 0 .
\end{align}
The verification theorem can be established similarly as that for the stochastic HJBI equation \eqref{eq:HJBI} and thus is omitted here.
\end{proof}

We define the utility loss as the percentage of the initial wealth that the investor will lose if adopting the sub-optimal investment-consumption strategy $(p_0, c_0)$, that is, the time-dependent solution $L (t)$ to the equation $v (t, x (1 - L (t))) = \widetilde v (t, x) .$

\begin{proposition}
\label{prop:utilloss}
Under Assumptions \ref{ass:bound} and \ref{ass:bound-constraint}, if the model uncertainty is ignored, the utility loss is given by
\begin{align}\label{eq:UL}
	L (t) = 1 - \left[\frac{\widetilde Y (t)}{Y (t)} \right]^\frac{\gamma}{1-\gamma} .
\end{align}
Moreover, the utility loss always satisfies $L (t) \in [0,1]$, for any $t \in [0, T]$ and $\gamma \in (0, 1) \cup (1, \infty)$.
\end{proposition}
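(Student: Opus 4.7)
The plan is to derive the formula for $L(t)$ by direct substitution into the defining equation, and then to establish $L(t) \in [0,1]$ via a straightforward comparison of the two value functions that exploits the sup-inf structure of the robust problem.

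First I would substitute the explicit representations $v(t,x) = \frac{x^{1-\gamma}}{1-\gamma}[Y(t)]^{\gamma}$ from \eqref{eq:G} and $\widetilde v(t,x) = \frac{x^{1-\gamma}}{1-\gamma}[\widetilde Y(t)]^{\gamma}$ from \eqref{eq:G-wt} into the defining relation $v(t, x(1-L(t))) = \widetilde v(t,x)$. Cancelling the common factor $\frac{x^{1-\gamma}}{1-\gamma}$, which is nonzero for $x>0$ and $\gamma \neq 1$, reduces the relation to $(1-L(t))^{1-\gamma}[Y(t)]^{\gamma} = [\widetilde Y(t)]^{\gamma}$. Lemmas \ref{lem:BSDE} and \ref{lem:BSDE-wt} guarantee that $Y(t)$ and $\widetilde Y(t)$ are bounded above zero, so raising both sides to the power $\tfrac{1}{1-\gamma}$ yields $1 - L(t) = \left[\widetilde Y(t)/Y(t)\right]^{\gamma/(1-\gamma)}$, which is exactly \eqref{eq:UL}.

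Next, to establish the bounds I would use that, by construction, $v(t,x) = \sup_{u \in \mathcal{A}} \inf_{\phi \in \mathcal{B}} \mathbb{E}^{\mathbb{Q}(\phi)}_{t,x}[\cdots]$ while $\widetilde v(t,x)$ is the same objective functional with $u$ fixed to the admissible strategy $(p_0, c_0) \in \mathcal{A}$. This immediately gives the pointwise inequality $v(t,x) \geq \widetilde v(t,x)$ on $[0,T] \times \mathbb{R}^+$. Translating this into a comparison between $Y$ and $\widetilde Y$ requires a case split on $\gamma$. When $\gamma > 1$, the factor $\frac{x^{1-\gamma}}{1-\gamma}$ is negative, so $v \geq \widetilde v$ forces $[Y(t)]^\gamma \leq [\widetilde Y(t)]^\gamma$, hence $\widetilde Y(t)/Y(t) \geq 1$; combined with the negative exponent $\gamma/(1-\gamma) < 0$, this yields $\left[\widetilde Y(t)/Y(t)\right]^{\gamma/(1-\gamma)} \leq 1$. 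When $\gamma \in (0,1)$, the factor is positive, so $v \geq \widetilde v$ gives $\widetilde Y(t)/Y(t) \leq 1$, and together with the positive exponent $\gamma/(1-\gamma) > 0$ this again yields $\left[\widetilde Y(t)/Y(t)\right]^{\gamma/(1-\gamma)} \leq 1$. In both cases $L(t) \geq 0$, while $L(t) \leq 1$ follows from the strict positivity of $Y$ and $\widetilde Y$.

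This proof is essentially routine once the verification theorem (Theorem \ref{thm:main}) and its analogue for the sub-optimal problem (Proposition \ref{prop:HJB-solution}) are in hand. The only delicate point is the sign bookkeeping in the case split on $\gamma$, where one must correctly pair the sign of $1-\gamma$ (which controls the direction of the inequality after cancellation) with the sign of $\gamma/(1-\gamma)$ (which controls monotonicity of the power map). I do not anticipate any substantive obstacle beyond this care.
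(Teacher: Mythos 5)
Your proposal is correct and follows essentially the same route as the paper: the formula \eqref{eq:UL} by direct substitution of \eqref{eq:G} and \eqref{eq:G-wt} into $v(t,x(1-L(t)))=\widetilde v(t,x)$, the inequality $\widetilde v(t,x)\leq v(t,x)$ from the sub-optimality of $(p_0,c_0)$ in the sup--inf problem, the same sign-split on $\gamma$ to conclude $\left[\widetilde Y(t)/Y(t)\right]^{\gamma/(1-\gamma)}\leq 1$, and $L(t)\leq 1$ from the strict positivity of $Y$ and $\widetilde Y$ guaranteed by Lemmas \ref{lem:BSDE} and \ref{lem:BSDE-wt}. No gaps.
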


\begin{proof}
The utility loss \eqref{eq:UL} is an immediate result by definition and Propositions \ref{prop:HJBI-solution} and \ref{prop:HJB-solution}. Moreover, we note that
\begin{align}
	\widetilde v (t, x) &= J (t, x; p_0, c_0, \widetilde \phi^*) = \inf_{\phi \in \cal B} J (t, x; p_0, c_0, \phi)
	\leq J (t, x; p_0, c_0, \phi^*) \nonumber \\
	&\leq \sup_{(p,c) \in \cal A} J (t, x; p, c, \phi^*)
	= \sup_{(p,c) \in \cal A} \inf_{\phi \in \cal B} J (t, x; p, c, \phi) = v (t, x) .
\end{align}
Combining with Propositions \ref{prop:HJBI-solution} and \ref{prop:HJB-solution}, we have
$\frac{1}{1-\gamma} [ \widetilde Y (t) ]^\gamma \leq \frac{1}{1-\gamma}  [ Y (t) ]^\gamma ,$
which leads to
\begin{align}
\left\{
	\begin{aligned}
		\left[\frac{\widetilde Y (t)}{Y (t)}\right]^\gamma \leq 1 ,  \quad \mbox{if} \ \gamma < 1 , \\
		\left[\frac{\widetilde Y (t)}{Y (t)}\right]^\gamma \geq 1 , \quad \mbox{if} \ \gamma > 1 .
	\end{aligned}
\right.
\end{align}	
Thus, $[\widetilde Y (t)/Y (t)]^\frac{\gamma}{1-\gamma} \leq 1$ for any $\gamma \in (0, 1) \cup (1, \infty)$, implying $L (t) \geq 0$. On the other hand, combining Lemma \ref{lem:BSDE} with Lemma \ref{lem:BSDE-wt}, we have $[\widetilde Y (t)/Y (t)]^\frac{\gamma}{1-\gamma} \geq 0$ and thereby $L (t) \leq 1$. This completes the proof.
\end{proof}

\section{Special Case in Deterministic Setting}
\label{sec:specialdeterministic}

In this section, we consider the robust optimal investment-consumption problem when 
all the model coefficients including $\{r (t)\}_{t\in[0,T]}$, $\{\rho (t)\}_{t\in[0,T]}$, $\{\mu_i(t)\}_{t\in[0,T]}$ and $\{\sigma_{ij}(t)\}_{t\in[0,T]}$ for $i=1,2,\ldots,m$ and $j=1,2,\ldots,n$, are deterministic functions of time $t$. As in the previous section, we omit the dependence of these deterministic functions on time index $t$ below whenever needed for ease of notation.
In the deterministic case, it is obvious that $Z (t) \equiv 0$ and BSDE \eqref{eq:BSDE} reduces to an ODE
\begin{align}\label{eq:Y-sol-deter-diff}
- \frac{d Y (t)}{d t} = \frac{Y (t)}{Y (t) \wedge \frac{1}{\underline c} \vee \frac{1}{\overline c}} + Q (t) Y (t) , \qquad Y(T) = \beta^{\frac{1}{\gamma}},
\end{align}
where 
\begin{align}
Q (t) : = \frac{1}{\gamma} \bigg [ - \rho + (1-\gamma) r
+ \frac{1 - \gamma}{2 \gamma} \theta^\top \bigg ( I + \frac{1}{\gamma} H \bigg )^{-1} \theta \bigg ] - \frac{1-\gamma}{2} \Dist^2_{\widehat \Gamma} \bigg [ \bigg ( I + \frac{1}{\gamma} H \bigg )^{-\frac{1}{2}} \frac{\theta}{\gamma}  \bigg ] . 
\end{align}
The solution to \eqref{eq:Y-sol-deter-diff} is given by
\begin{align}\label{eq:Y-sol-deter}
Y (t) = \beta^{\frac{1}{\gamma}} \exp \bigg ( \int^T_t Q (s) d s \bigg ) + \int^T_t \frac{Y (s)}{Y (s) \wedge \frac{1}{\underline c} \vee \frac{1}{\overline c}} \exp \bigg ( \int^s_t Q (\nu) d \nu \bigg ) d s .
\end{align}
As in Section \ref{sec:main}, when the investor is ambiguity-neutral, i.e. $H = 0_{n \times n}$, the solution to the above ODE \eqref{eq:Y-sol-deter-diff}, denoted by $Y_0 (t)$, has the following explicit form 
\begin{align}\label{eq:Y0-sol-deter}
Y_0 (t) = \beta^{\frac{1}{\gamma}} \exp \bigg ( \int^T_t Q_0 (s) d s \bigg ) + \int^T_t \frac{Y (s)}{Y (s) \wedge \frac{1}{\underline c} \vee \frac{1}{\overline c}} \exp \bigg ( \int^s_t Q_0 (\nu) d \nu \bigg ) d s ,
\end{align}
where 
\begin{align}
Q_0 (t) : = \frac{1}{\gamma} \bigg [ - \rho + (1-\gamma) r
+ \frac{1 - \gamma}{2 \gamma} |\theta|^2 \bigg ] - \frac{1-\gamma}{2} \Dist^2_{\widehat \Gamma} \bigg (  \frac{\theta}{\gamma}  \bigg ) . 
\end{align}


Moreover, it can shown that BSDE \eqref{eq:BSDE-wt} becomes a Bernoulli equation
\begin{align}\label{eq:Bernoulli}
	- \frac{d \widetilde Y (t)}{d t} 
 = \frac{1}{\gamma} \bigg \{ \bigg [ \frac{\widetilde Y (t)}{Y_0 (t) \wedge \frac{1}{\underline c} \vee \frac{1}{\overline c}} \bigg ]^{1-\gamma} + \widetilde Q (t) \widetilde Y (t) \bigg \}   , \qquad \widetilde{Y}(T) = \beta^{\frac{1}{\gamma}},
\end{align}
where 
\begin{align}
\widetilde Q (t) : = (1-\gamma) \bigg \{ - \frac{1}{Y_0 \wedge \frac{1}{\underline c} \vee \frac{1}{\overline c}} - \frac{\gamma}{2} \Proj^\top_{\widehat \Gamma} \bigg ( \frac{\theta}{\gamma} \bigg ) \bigg ( I + \frac{H}{\gamma} \bigg )  \Proj_{\widehat \Gamma} \bigg ( \frac{\theta}{\gamma} \bigg )  
	+ \frac{1}{1-\gamma} \bigg [ - \rho + (1-\gamma) r
	\bigg ] + \Proj^\top_{\widehat \Gamma} \bigg ( \frac{\theta}{\gamma}  \bigg ) \theta \bigg \} .
\end{align}
A simple application of transformation techniques gives the explicit solution to the Bernoulli equation \eqref{eq:Bernoulli}:
\begin{align}\label{eq:Y-wt-sol-deter}
	\widetilde Y (t) = \bigg\{ \beta \exp \bigg ( \int^T_t \widetilde Q (s) d s \bigg ) + \int^T_t \frac{1}{[Y_0 (s) \wedge \frac{1}{\underline c} \vee \frac{1}{\overline c}]^{1-\gamma}} \exp \bigg ( \int^s_t \widetilde Q (\nu) d \nu \bigg ) d s \bigg \}^{\frac{1}{\gamma}} .
\end{align}

The following corollary summarises the optimal solution 
to Problem \eqref{eq:problem} in the deterministic case.

\begin{corollary}
\label{cor:DeterministicCaseSolution}
When all the model coefficients are deterministic, the robust optimal investment-consumption strategy, the optimal distortion process, the value function 
and the utility loss are given by 
\begin{align}\label{eq:inv-con-deter}
p^* (t) & = \bigg ( I + \frac{1}{\gamma} H \bigg )^{-\frac{1}{2}} \Proj_{\widehat \Gamma} \bigg \{ \bigg ( I + \frac{1}{\gamma} H \bigg )^{-\frac{1}{2}} \frac{\theta (t)}{\gamma} \bigg \} , \qquad c^* (t) = \frac{1}{Y (t)} \vee \underline c \wedge \overline c \\
\label{eq:distortion-deter}
\phi^* (t) & = - H \bigg ( I + \frac{1}{\gamma} H \bigg )^{-\frac{1}{2}} \Proj_{\widehat \Gamma} \bigg \{ \bigg ( I + \frac{1}{\gamma} H \bigg )^{-\frac{1}{2}} \frac{\theta (t)}{\gamma} \bigg \}, \\
\label{eq:value-fun-deter}
V (t, x) & = \frac{x^{1-\gamma}}{1-\gamma} \times [Y (t)]^{\gamma} , \\
\label{eq:utility-loss-deter}
L (t) & = 1 - \left[\frac{\widetilde{Y}(t)}{Y(t)}\right]^\frac{\gamma}{1-\gamma} ,
\end{align}
in which $Y$ and $\widetilde Y$ are computed explicitly as in \eqref{eq:Y-sol-deter} and \eqref{eq:Y-wt-sol-deter}.
\end{corollary}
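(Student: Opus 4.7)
The plan is to derive the corollary as a specialization of the general results (Proposition 3.1, Theorem 3.1, and Proposition 3.3) to the deterministic-coefficient setting. The key observation is that when $r$, $\rho$, $\mu_i$, and $\sigma_{ij}$ are deterministic functions of $t$, the drivers of the BSDEs \eqref{eq:BSDE} and \eqref{eq:BSDE-wt} lose their $\omega$-dependence, so one expects the unique solutions $(Y,Z)$ and $(\widetilde Y, \widetilde Z)$ to be deterministic, with $Z \equiv 0$ and $\widetilde Z \equiv 0$. Once this reduction is justified, every term involving $Z/Y$ in \eqref{eq:p-c-star}--\eqref{eq:driver} collapses, and the stated expressions for $p^*$, $c^*$, $\phi^*$, $V$, and $L$ follow from plugging $Z = \widetilde Z = 0$ into \eqref{eq:p-c-star}, \eqref{eq:phi-star}, \eqref{eq:G}, and \eqref{eq:UL}.

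The first step is to verify $Z \equiv 0$. One approach is to solve the ODE \eqref{eq:Y-sol-deter-diff} (with deterministic coefficients $Q$) for a continuous function $Y$, set $Z \equiv 0$, and then observe that this pair $(Y,0)$ satisfies BSDE \eqref{eq:BSDE} with the driver \eqref{eq:driver}. Indeed, when $Z = 0$, the quadratic-in-$Z$ terms and the $\theta^\top[\cdots] Z$ term all vanish, and the $\mathrm{Dist}^2_{\widehat\Gamma}$ argument reduces to $(I + H/\gamma)^{-1/2}\theta/\gamma$, giving exactly $Q(t) Y(t) + Y/(Y \wedge \underline c^{-1} \vee \overline c^{-1})$. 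Uniqueness from Lemma \ref{lem:BSDE} then forces $(Y,Z)$ to coincide with this deterministic pair. An analogous argument applies to $(\widetilde Y, \widetilde Z)$: with $Z_0 \equiv 0$ in the deterministic case and the driver \eqref{eq:driver-wt} evaluated at $\widetilde Z = 0$, the Bernoulli equation \eqref{eq:Bernoulli} emerges after collecting terms.

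The ODE \eqref{eq:Y-sol-deter-diff} is a linear first-order ODE in $Y$ (once the $Y/(Y \wedge \underline c^{-1} \vee \overline c^{-1})$ term is viewed as a forcing term depending on $Y$), so Duhamel's principle yields the fixed-point representation \eqref{eq:Y-sol-deter}. The equation \eqref{eq:Bernoulli} for $\widetilde Y$ is a genuine Bernoulli ODE: applying the substitution $U = \widetilde Y^{\gamma}$ linearizes it, after which variation of parameters produces \eqref{eq:Y-wt-sol-deter}. Finally, the utility loss expression \eqref{eq:utility-loss-deter} is an immediate restatement of Proposition \ref{prop:utilloss} using the deterministic $Y$ and $\widetilde Y$.

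The routine calculations will be the algebraic simplifications when substituting $Z = 0$, but the main obstacle --- more conceptual than technical --- is the rigorous justification that $Z \equiv 0$ in the deterministic setting; the cleanest route is the construction-plus-uniqueness argument sketched above, which also sidesteps the need to invoke Assumption \ref{ass:bound-constraint} (as remarked after Lemma \ref{lem:BSDE-wt}, since $Z_0 \equiv 0$ makes the driver \eqref{eq:driver-wt} automatically bounded in the deterministic case).
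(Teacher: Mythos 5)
Your proposal is correct and follows essentially the same route as the paper: the paper's proof is a one-line appeal to the Section \ref{sec:main} results combined with the observation (stated as ``obvious'') that $Z \equiv 0$ in the deterministic case, after which BSDE \eqref{eq:BSDE} reduces to the ODE \eqref{eq:Y-sol-deter-diff} and BSDE \eqref{eq:BSDE-wt} to the Bernoulli equation \eqref{eq:Bernoulli}. Your construction-plus-uniqueness argument for $Z\equiv 0$ (build the deterministic pair $(Y,0)$ from the ODE, check it solves the BSDE, invoke the uniqueness in Lemma \ref{lem:BSDE}) simply makes rigorous the step the paper glosses over, and your remaining computations (the Duhamel representation, the substitution $U=\widetilde Y^\gamma$, and the dispensability of Assumption \ref{ass:bound-constraint}) all match the paper.
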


\begin{proof}
Applying the results obtained in Section \ref{sec:main} and the explicit solutions \eqref{eq:Y-sol-deter} and \eqref{eq:Y-wt-sol-deter} immediately yields the desired results
\eqref{eq:inv-con-deter}, \eqref{eq:distortion-deter}, \eqref{eq:value-fun-deter}, and \eqref{eq:utility-loss-deter}.
\end{proof}

\subsection{Comparative Statics}\label{subsec:CS}

In this subsection, we investigate the impact of the ambiguity aversion on the investor's optimal strategies and value function. To that end, we focus on a special portfolio constraint, namely, no-short-selling constraint $\Gamma : = {\mathbb R}_+^n$. It is obvious that ${\widehat \Gamma} = ( I + \frac{1}{\gamma} H)^{\frac{1}{2}} \Gamma = {\mathbb R}_+^n$. 

First, we note that the optimal investment strategy and the optimal distortion process
can be further specified by
$p^* (t) = ( \gamma I + H )^{-1} \theta^+ (t)$
and
$\phi^* (t) = \big [ \big( I + \frac{H}{\gamma} \big)^{-1} - I \big ] \theta^+ (t) ,$
where $\theta^+ (t)$ denotes an $n$-dimensional vector with each entry being the positive part of the corresponding entry in the market price of risk vector $\theta (t)$. Obviously, in the deterministic case, only the investment constraint affects the optimal investment strategy and the optimal distortion process. With the no-short-selling constraint, both the optimal investment strategy and the optimal distortion process become more conservative compared with the case without such constraint. The optimal distortion process is now an $n$-dimensional vector with all entries being non-positive. Hence, in the worst-case scenario, the investor is only concerned with the lower-than-expected market price of risk due to Brownian shocks. This is consistent with intuition because the investor can only hold long positions in stocks due to the no-short-selling constraint.
Moreover, everything else being identical, the more ambiguity-averse the investor is,
the smaller norms both the optimal investment strategy $p^* (t)$ and the optimal distortion process $\phi^* (t)$ have.

In the following proposition, we investigate the effect of the ambiguity aversion coefficient and the investment/consumption constraints
on the robust optimal consumption strategy and the value function.

\begin{proposition}\label{prop:av-coefficnet}
    Suppose that all the model coefficients are deterministic and short-selling is not allowed, i.e. $\widehat{\Gamma} = \mathbb{R}^n_+$. Then the following assertions hold true.
	\begin{enumerate}
		\item[(i)] If $\gamma > 1$, then the robust optimal consumption strategy $c^*$ is a decreasing function of $\eta_i$, for each $i = 1, 2, \ldots, n$.
		\item[(i')] If $0 < \gamma < 1$, then the robust optimal consumption strategy $c^*$ is an increasing function of $\eta_i$, for each $i = 1, 2, \ldots, n$.
		\item[(ii)] For any $\gamma \in (0,1) \cup (1, \infty)$, then the value function $v (t, x)$ is a decreasing function of $\eta_i$, for each $i = 1, 2, \ldots, n$.
	\end{enumerate}
\end{proposition}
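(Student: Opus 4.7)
The plan is to reduce all three assertions to a single monotonicity property of the deterministic function $Y(\cdot)$ from Corollary~\ref{cor:DeterministicCaseSolution}, and then to establish that property by an ODE comparison argument applied to the Cauchy problem \eqref{eq:Y-sol-deter-diff}. Since $c^*(t)=(1/Y(t))\vee\underline c\wedge\overline c$ is non-increasing in $Y(t)$, and $v(t,x)=\frac{x^{1-\gamma}}{1-\gamma}[Y(t)]^{\gamma}$ satisfies $\partial v/\partial Y>0$ if and only if $\gamma<1$, it suffices to show that $Y(t)$ is increasing in $\eta_i$ whenever $\gamma>1$ and decreasing in $\eta_i$ whenever $0<\gamma<1$: this delivers (i), (i') and (ii) in one stroke.

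To analyse the $\eta_i$-dependence of \eqref{eq:Y-sol-deter-diff}, I would first compute $\partial Q(t)/\partial\eta_i$. Because $\widehat\Gamma=\mathbb{R}^{n}_+$ and $H$ is diagonal, one gets
\begin{equation*}
\theta^{\top}\!\left(I+\tfrac{1}{\gamma}H\right)^{-1}\!\theta=\sum_{j=1}^{n}\frac{\theta_j^{2}}{1+\eta_j/\gamma},\qquad \mathrm{Dist}^{2}_{\widehat\Gamma}\!\left[\left(I+\tfrac{1}{\gamma}H\right)^{-\tfrac{1}{2}}\!\tfrac{\theta}{\gamma}\right]=\sum_{j=1}^{n}\frac{(\theta_j^{-})^{2}/\gamma^{2}}{1+\eta_j/\gamma}.
\end{equation*}
Differentiating and using the identity $\theta_i^{2}=(\theta_i^{+})^{2}+(\theta_i^{-})^{2}$ yields
\begin{equation*}
\frac{\partial Q(t)}{\partial\eta_i}=-\frac{(1-\gamma)[\theta_i^{+}(t)]^{2}}{2\gamma^{3}(1+\eta_i/\gamma)^{2}},
\end{equation*}
which is $\geq 0$ when $\gamma>1$ and $\leq 0$ when $0<\gamma<1$.

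To transfer this sign information to $Y$, fix $\eta_i^{(1)}<\eta_i^{(2)}$ with all other parameters unchanged, and denote by $(Y_k,Q_k)$ the resulting pair for $k=1,2$. Since $h(y):=y/(y\wedge 1/\underline c\vee 1/\overline c)$ is non-decreasing and Lipschitz on $[\varepsilon,\infty)$---with $\varepsilon>0$ the uniform lower bound provided by Lemma~\ref{lem:BSDE}---one can write $h(Y_2(t))-h(Y_1(t))=\alpha(t)[Y_2(t)-Y_1(t)]$ for some non-negative $\alpha(t)$. Then $u:=Y_2-Y_1$ solves the linear terminal-value problem
\begin{equation*}
-u'(t)=[\alpha(t)+Q_2(t)]u(t)+[Q_2(t)-Q_1(t)]\,Y_1(t),\qquad u(T)=0,
\end{equation*}
whose explicit integrating-factor solution reads
\begin{equation*}
u(t)=\int_{t}^{T}\exp\!\left(\int_{t}^{s}[\alpha(\nu)+Q_2(\nu)]\,d\nu\right)[Q_2(s)-Q_1(s)]\,Y_1(s)\,ds.
\end{equation*}
Because the exponential factor is positive and $Y_1>0$, we have $\operatorname{sign}u(t)=\operatorname{sign}(Q_2-Q_1)$; and integrating $\partial Q/\partial\eta_i$ from $\eta_i^{(1)}$ to $\eta_i^{(2)}$ shows that $Q_2-Q_1$ has the sign computed above. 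This establishes the monotonicity of $Y$ in $\eta_i$ and hence the proposition.

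The main obstacle I anticipate is the non-smoothness of the driver $h$, which rules out a direct sensitivity analysis via termwise differentiation of $Y$ in $\eta_i$. I would bypass this by relying only on the monotonicity and Lipschitz property of $h$ to extract the non-negative linearisation coefficient $\alpha(t)$, after which the standard comparison principle delivers the conclusion without any differentiability assumption on $h$.
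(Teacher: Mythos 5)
Your proof is correct and follows essentially the same route as the paper: reduce all three assertions to the monotonicity of $Y(t)$ in $\eta_i$, observe that the $\eta_i$-dependent part of the coefficient collapses to $\frac{1-\gamma}{2\gamma}\sum_j\frac{(\theta_j^+)^2}{\gamma+\eta_j}$ (the paper gets this via the Pythagorean identity for the cone $\mathbb{R}^n_+$, you get it componentwise), and then read off the sign of the difference of solutions from an explicit integrating-factor formula for a linear terminal-value ODE. The one genuine (minor) improvement is that you replace the paper's formal differentiation $\partial Y/\partial\eta_i$ — which involves $g'(Y)$ even though $g(y)=y/(y\wedge\frac{1}{\underline c}\vee\frac{1}{\overline c})$ is only piecewise smooth — by a finite-difference linearisation with a non-negative coefficient $\alpha(t)$, which makes the comparison argument rigorous without smoothness of the driver.
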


\begin{proof}
Before proving the three assertions, we note ${\widehat \Gamma} = {\mathbb R}_+^n$ is a (convex) cone, in which the Pythagorean theorem holds true. Thus, the last two terms in the exponential functions in \eqref{eq:Y-sol-deter} can be combined as follows:
\begin{align}
& \frac{1 - \gamma}{2 \gamma^2} \theta^\top \bigg ( I + \frac{1}{\gamma} H \bigg )^{-1} \theta - \frac{1 - \gamma}{2} \Dist^2_{\widehat \Gamma} \bigg [ \bigg ( I + \frac{1}{\gamma} H \bigg )^{-\frac{1}{2}} \frac{\theta}{\gamma}  \bigg ] \\
& = \frac{1 - \gamma}{2 \gamma^2} \Proj^2_{\widehat \Gamma} \bigg \{ \bigg ( I + \frac{1}{\gamma} H \bigg )^{-\frac{1}{2}} \theta \bigg \} = \frac{1 - \gamma}{2 \gamma^2} \left| \bigg ( I + \frac{1}{\gamma} H \bigg )^{-\frac{1}{2}} \theta^+ \right|^2 = \frac{1 - \gamma}{2 \gamma} \sum^n_{i = 1} \frac{(\theta^+_i)^2}{\gamma + \eta_i} .
\end{align}

{\it Assertions (i) and (i').}
Denote by $g (y) := \frac{y}{y \wedge \frac{1}{\underline c} \vee \frac{1}{\overline c}}$. Taking partial derivative with respect to $\eta_i$ on both sides of ODE \eqref{eq:Y0-sol-deter} gives
\begin{align}
	- \frac{d}{d t} \frac{\partial Y (t)}{\partial \eta_i} = \bigg \{ g^\prime (Y (t)) + \frac{1}{\gamma} \big [ - \rho + (1-\gamma) r \big ] + \frac{1 - \gamma}{2 \gamma} \sum^n_{j = 1} \frac{(\theta^+_j)^2}{\gamma + \eta_j} \bigg \} \frac{\partial Y (t)}{\partial \eta_i}
	- \frac{1 - \gamma}{2 \gamma} \frac{(\theta^+_i)^2}{(\gamma + \eta_i)^2}  Y (t) , \quad \frac{\partial Y (T)}{\partial \eta_i}  = 0 .
\end{align}
Therefore,
\begin{align}
	\frac{\partial Y (t)}{\partial \eta_i} = - \frac{1 - \gamma}{2 \gamma} \frac{(\theta^+_i)^2}{(\gamma + \eta_i)^2} \int^T_t Y (s) \exp \bigg \{ g^\prime (Y (s)) + \frac{1}{\gamma} \big [ - \rho + (1-\gamma) r \big ] + \frac{1 - \gamma}{2 \gamma} \sum^n_{j = 1} \frac{(\theta^+_j)^2}{\gamma + \eta_j} \bigg \} d s 
\end{align}
is positive (resp. negative) if $\gamma > 1$ (resp. $0 < \gamma < 1$). 
In other words, $Y (t)$ is an increasing (resp. a decreasing) function of $\eta_i$, for each $i = 1, 2, \ldots, n$, if $\gamma > 1$ (resp. $0 < \gamma < 1$).
Since the optimal consumption strategy is related
to the reciprocal of $Y$ as in \eqref{eq:inv-con-deter},
it decreases (resp. increases) with the ambiguity aversion coefficient $\eta_i$ if $\gamma > 1$ (resp. $0 < \gamma < 1$).

{\it Assertion (ii).} The explicit expression of the value function \eqref{eq:value-fun-deter} implies that its monotonicity w.r.t. $\eta_i$ is jointly determined by that of $[Y (t)]^\gamma$ and the sign of $1-\gamma$.
Therefore, the value function always decreases with the ambiguity aversion coefficient $\eta_i$ regardless of the value of the coefficient of relative risk aversion $\gamma$ as long as it is in $(0,1) \cup (1, \infty)$. 
\end{proof}

The proposition above suggests a relationship between the investor's robust optimal consumption strategy (and the elasticity of intertemporal substitution implied by the optimal path) and their preference for robustness when she is only allowed to take long positions in the investment strategy. For a power utility investor with coefficient of relative risk aversion $\gamma > 0$, her elasticity of intertemporal substitution (EIS) is $\frac{1}{\gamma}$. Thus, an investor with a high (resp. low) level of risk aversion, $\gamma > 1$ (resp. $0 < \gamma < 1$), tends to prioritize (resp. defer) present consumption due to a low (resp. high) EIS.
However, when the investor is ambiguity-averse, she recognises the greater uncertainty around how the financial risk factors and environment will evolve in the future. 
Thus, an ambiguity-averse investor who has a high (resp. low) level of risk aversion tends to have a lower (resp. higher) overall level of consumption compared to an ambiguity-neutral investor who has a high (resp. low) level of risk aversion. Furthermore, as the investor's preference for robustness increases, her overall level of consumption decreases (resp. increases). As a result, an ambiguity-averse investor tends to have a lower level of welfare (as measured by the value function) compared to an ambiguity-neutral investor and the welfare decreases as she has a higher preference for robustness.

As a final note in this subsection, we examine how the portfolio (investment/consumption) constraints affect the investor's optimal consumption strategy and value function. To facilitate our comparison, we define
\begin{itemize}
	\item $c^*_{\rm C_1} (t)$ and $V_{\rm C_1} (t, x)$: Robust optimal consumption strategy and value function when there are both investment and upper consumption constraints;
	\item $c^*_{\rm C_2} (t)$ and $V_{\rm C_2} (t, x)$: Robust optimal consumption strategy and value function when there are both investment and lower consumption constraints;
	\item $c^*_{\rm C_3} (t)$ and $V_{\rm C_3} (t, x)$: Robust optimal consumption strategy and value function when there is only investment constraint;
	\item $c^*_{\rm C_4} (t)$ and $V_{\rm C_4} (t, x)$: Robust optimal consumption strategy and value function when there is only upper consumption constraint;
	\item $c^*_{\rm C_5} (t)$ and $V_{\rm C_5} (t, x)$: Robust optimal consumption strategy and value function when there is only lower consumption constraint;
	\item $c^*_{\rm NC} (t)$ and $V_{\rm NC} (t, x)$: Robust optimal consumption strategy and value function when there is neither investment constraint nor lower/upper consumption constraints.
\end{itemize}
The comparison results are summarised in the following proposition.

\begin{proposition} \label{prop:det-comparison}
    Suppose all the model coefficients are deterministic. Then the following assertions hold true.
	\begin{enumerate}
		\item[(i)] If $\gamma > 1$, then the robust consumption rules satisfy $c^*_{\rm C_1} (t)  \leq  c^*_{\rm C_4} (t)$, $c^*_{\rm C_2} (t) \leq c^*_{\rm C_5} (t)$ and $c^*_{\rm C_3} (t)  \leq  c^*_{\rm NC} (t)$;
		\item[(i')] If $0 < \gamma < 1$, then the robust consumption rules satisfy $c^*_{\rm C_1} (t) \geq c^*_{\rm C_4} (t)$, $c^*_{\rm C_2} (t) \geq c^*_{\rm C_5} (t)$, and $c^*_{\rm C_3} (t) \geq c^*_{\rm NC} (t)$;
		\item[(ii)] For any $\gamma \in (0,1) \cup (1, \infty)$, then the value functions satisfy $V_{\rm C_1} (t, x) \leq V_{\rm C_4} (t, x)$, $V_{\rm C_2} (t, x) \leq V_{\rm C_5} (t, x)$, and $V_{\rm C_3} (t, x) \leq  V_{\rm NC} (t, x)$;
		\item[(iii)] If $\gamma > 1$, then the value functions satisfy $V_{\rm C_1} (t, x) \geq V_{\rm C_3} (t, x)$, $V_{\rm C_2} (t, x) \leq V_{\rm C_3} (t, x)$, $V_{\rm C_4} (t, x) \geq V_{\rm NC} (t, x)$,
		and $V_{\rm C_5} (t, x) \leq V_{\rm NC} (t, x)$;
		\item[(iii')] If $0 < \gamma < 1$, then the value functions satisfy $V_{\rm C_1} (t, x) \leq V_{\rm C_3} (t, x)$, $V_{\rm C_2} (t, x) \geq V_{\rm C_3} (t, x)$, $V_{\rm C_4} (t, x) \leq V_{\rm NC} (t, x)$,
		and $V_{\rm C_5} (t, x) \geq V_{\rm NC} (t, x)$.
	\end{enumerate}
\end{proposition}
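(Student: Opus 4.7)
The plan is to deduce all six assertions from a single ODE comparison argument applied to \eqref{eq:Y-sol-deter-diff} under the different constraint configurations, combined with the monotonicity relations implicit in the explicit representations $c^*(t) = \frac{1}{Y(t)} \vee \underline c \wedge \overline c$ and $V(t,x) = \frac{x^{1-\gamma}}{1-\gamma}[Y(t)]^{\gamma}$ from Corollary \ref{cor:DeterministicCaseSolution}.

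First, I would establish a general comparison lemma. Writing the driver of \eqref{eq:Y-sol-deter-diff} as $g(t,Y) = \frac{Y}{Y \wedge \frac{1}{\underline c} \vee \frac{1}{\overline c}} + Q(t) Y$, if two drivers $g_1, g_2$ satisfy $g_1(t,Y) \leq g_2(t,Y)$ pointwise in $(t,Y)$ and the corresponding backward ODEs share the terminal condition $Y(T) = \beta^{1/\gamma}$, then $Y_1(t) \leq Y_2(t)$ on $[0,T]$. This follows from a Gronwall argument applied to $w := Y_2 - Y_1$ (with $w(T)=0$), using that $g_i$ is continuous and piecewise Lipschitz in $Y$ (the first term is linear with slope $\overline c$, $0$, or $\underline c$ on the three pieces) and that each $Y_i$ is bounded and bounded above zero by Lemma \ref{lem:BSDE}.

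Next, I would classify how the driver changes across the six scenarios by isolating the contribution of each constraint. The investment constraint enters $Q(t)$ only through $-\frac{1-\gamma}{2}\Dist^2_{\widehat\Gamma}\{(I+H/\gamma)^{-1/2}\theta/\gamma\}$; since this term vanishes when $\widehat\Gamma = \mathbb R^n$ and is otherwise nonnegative, introducing an investment constraint shifts $Q$ by $\frac{\gamma-1}{2}\Dist^2 \geq 0$ if $\gamma > 1$ and by a nonpositive quantity if $\gamma < 1$, producing a $\gamma$-dependent $Y$-ordering for the pairs $(C_1, C_4)$, $(C_2, C_5)$, $(C_3,\mathrm{NC})$ that differ only in the investment constraint. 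The consumption bounds, by contrast, enter only through the first term: $\frac{Y}{Y \vee 1/\overline c} \leq 1$ and $\frac{Y}{Y \wedge 1/\underline c} \geq 1$, both $\gamma$-independent, giving the corresponding $\gamma$-independent $Y$-ordering for the pairs $(C_1, C_3)$, $(C_2, C_3)$, $(C_4,\mathrm{NC})$, $(C_5,\mathrm{NC})$ that differ only in a consumption bound.

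Finally, I would translate each $Y$-ordering into the stated inequality via the monotonicity of $c^*(t)$ and $V(t,x)$ in $Y(t)$. Since $c^*$ is monotone decreasing in $Y$, the $\gamma$-dependent $Y$-ordering from the investment-constraint pairs directly produces the opposite-direction $c^*$-orderings in (i) and (i'). Since $V = \frac{x^{1-\gamma}}{1-\gamma}Y^\gamma$ has a prefactor whose sign flips at $\gamma=1$, the $\gamma$-dependence of the $Y$-ordering in those same pairs is exactly cancelled by the sign flip of $\frac{x^{1-\gamma}}{1-\gamma}$, yielding the uniform direction in (ii). Conversely, the $\gamma$-independent $Y$-ordering from the consumption-constraint pairs is combined with the $\gamma$-dependent sign of $\frac{x^{1-\gamma}}{1-\gamma}$ to produce the two opposite regimes of the $V$-inequality in (iii) and (iii'). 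The main routine obstacle is checking the comparison theorem rigorously given the piecewise structure of the driver, but global continuity together with piecewise Lipschitz regularity on the range of $Y$ is sufficient.
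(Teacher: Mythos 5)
Your proposal is correct and follows essentially the same route as the paper's proof: the paper likewise orders the $Y_{{\rm C}_i}$ by writing the backward ODE for the difference $\Delta Y$ of two scenarios, solving it with an integrating factor (which is exactly your comparison lemma), noting that the investment constraint enters $Q(t)$ only through $-\frac{1-\gamma}{2}\Dist^2_{\widehat\Gamma}\{\cdot\}$ (sign flipping at $\gamma=1$) while the consumption bounds enter only through the $\gamma$-independent first term, and then transfers the $Y$-orderings to $c^*$ and $V$ via their monotonicity in $Y$. The only cosmetic difference is that you package the comparison step as a single lemma with explicit attention to the piecewise-Lipschitz driver, whereas the paper computes the difference ODEs case by case and handles the remaining pairs by "similarly."
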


\begin{proof}
To facilitate the following comparison, we highlight
all quantities related to the portfolio constraint cases and the no portfolio constraint case by using subscripts ``${\rm C}_i$" and ``NC", for $i = 1, 2, 3, 4, 5$. In what follows, it suffices to compare $Y_{{\rm C}_i}$ and $Y_{\rm NC}$. Let us consider the effect of the investment constraint first. To that end, we consider the ODE for $\Delta_3 Y : = Y_{\rm C_3} - Y_{\rm NC}$:
\begin{align}
	- \frac{d \Delta_3 Y (t)}{d t} =& \ \frac{1}{\gamma} \bigg [ - \rho + (1-\gamma) r
	+ \frac{1 - \gamma}{2 \gamma} \theta^\top \bigg ( I + \frac{1}{\gamma} H \bigg )^{-1} \theta \bigg ] \Delta_3 Y
	- \frac{1-\gamma}{2} \Dist^2_{\widehat \Gamma} \bigg [ \bigg ( I + \frac{1}{\gamma} H \bigg )^{-\frac{1}{2}} \frac{\theta}{\gamma}  \bigg ] Y_{\rm C_3} , \qquad \Delta_3 Y (T) = 0 ,
\end{align}
whose solution is given by
\begin{align}
	\Delta_3 Y (t) = - \frac{1-\gamma}{2} \int^T_t \Dist^2_{\widehat \Gamma} \bigg [ \bigg ( I + \frac{1}{\gamma} H \bigg )^{-\frac{1}{2}} \frac{\theta}{\gamma}  \bigg ] Y_{\rm C_3} (s)
	\exp \bigg \{ \int^s_t \frac{1}{\gamma} \bigg [ - \rho + (1-\gamma) r
	+ \frac{1 - \gamma}{2 \gamma} \theta^\top \bigg ( I + \frac{1}{\gamma} H \bigg )^{-1} \theta \bigg ] d \nu \bigg \} d s .
\end{align}
Thus, if $\gamma > 1$ (resp. $0 < \gamma < 1$), then $Y_{{\rm C}_3} (t) \geq Y_{\rm NC} (t)$ (resp. $Y_{\rm C_3} (t) \leq Y_{\rm NC} (t)$). Similarly, we can show if $\gamma > 1$ (resp. $0 < \gamma < 1$), then $Y_{\rm C_1} (t) \geq Y_{\rm C_4} (t)$ and $Y_{\rm C_2} (t) \geq Y_{\rm C_5} (t)$ (resp. $Y_{\rm C_1} (t) \leq Y_{\rm C_4} (t)$ and $Y_{\rm C_2} (t) \leq Y_{\rm C_5} (t)$).

Therefore, we have
\begin{enumerate}
	\item[\it (i)] if $\gamma > 1$, then $c^*_{\rm C_3} (t) = \frac{1}{Y_{\rm C_3} (t)} \leq \frac{1}{Y_{\rm NC} (t)} = c^*_{\rm NC} (t)$, $c^*_{\rm C_1} (t) = \frac{1}{Y_{\rm C_1} (t)} \wedge \overline c \leq \frac{1}{Y_{\rm C_4} (t)} \wedge \overline c = c^*_{\rm C_4} (t)$, and $c^*_{\rm C_2} (t) = \frac{1}{Y_{\rm C_2} (t)} \vee \underline c \leq \frac{1}{Y_{\rm C_5} (t)} \vee \underline c = c^*_{\rm C_5} (t)$;
	\item[\it (i')] if $0 < \gamma < 1$, then $c^*_{\rm C_3} (t) = \frac{1}{Y_{\rm C_3} (t)} \geq \frac{1}{Y_{\rm NC} (t)} = c^*_{\rm NC} (t)$, $c^*_{\rm C_1} (t) = \frac{1}{Y_{\rm C_1} (t)} \wedge \overline c \geq \frac{1}{Y_{\rm C_4} (t)} \wedge \overline c = c^*_{\rm C_4} (t)$, and $c^*_{\rm C_2} (t) = \frac{1}{Y_{\rm C_2} (t)} \vee \underline c \geq \frac{1}{Y_{\rm C_5} (t)} \vee \underline c = c^*_{\rm C_5} (t)$;
	\item[\it (ii)] For any $\gamma \in (0,1) \cup (1, \infty)$, $V_{\rm C_3} (t, x) = \frac{x^{1-\gamma}}{1-\gamma} \times [Y_{\rm C_3} (t)]^{\gamma} \leq \frac{x^{1-\gamma}}{1-\gamma} \times [Y_{\rm NC} (t)]^{\gamma} = V_{\rm NC} (t, x)$, $V_{\rm C_1} (t, x) = \frac{x^{1-\gamma}}{1-\gamma} \times [Y_{\rm C_1} (t)]^{\gamma} \leq \frac{x^{1-\gamma}}{1-\gamma} \times [Y_{\rm C_4} (t)]^{\gamma} = V_{\rm C_4} (t, x)$, and $V_{\rm C_2} (t, x) = \frac{x^{1-\gamma}}{1-\gamma} \times [Y_{\rm C_2} (t)]^{\gamma} \leq \frac{x^{1-\gamma}}{1-\gamma} \times [Y_{\rm C_5} (t)]^{\gamma} = V_{\rm C_5} (t, x)$.
\end{enumerate}

We next consider the effect of the upper and lower consumption constraints.
Differencing the ODEs for $Y_{\rm C_4}$ and $Y_{\rm NC}$ and denoting $\Delta_4 Y : = Y_{\rm C_4} - Y_{\rm NC}$, we obtain
\begin{align}
	- \frac{d \Delta_4 Y (t)}{d t} = \frac{Y_{\rm C_4}}{Y_{\rm C_4} \vee \frac{1}{\overline c}} - 1 + \frac{1}{\gamma} \bigg [ - \rho + (1-\gamma) r
	+ \frac{1 - \gamma}{2 \gamma} \theta^\top \bigg ( I + \frac{1}{\gamma} H \bigg )^{-1} \theta \bigg ] \Delta_4 Y ,
	\qquad \Delta_4 Y (T) = 0 ,
 \end{align}
which admits the following solution
\begin{align}
	\Delta_4 Y (t) = \int^T_t \bigg ( \frac{Y_{\rm C_4} (s)}{Y_{\rm C_4} (s) \vee \frac{1}{\overline c}} - 1 \bigg )
	\exp \bigg \{ \int^s_t \frac{1}{\gamma} \bigg [ - \rho + (1-\gamma) r
	+ \frac{1 - \gamma}{2 \gamma} \theta^\top \bigg ( I + \frac{1}{\gamma} H \bigg )^{-1} \theta \bigg ] d \nu \bigg \} d s \leq 0
\end{align}
implying $Y_{\rm C_4} (t) \leq Y_{\rm NC} (t)$. In the same vein, we can show that $Y_{\rm C_1} (t) \leq Y_{\rm C_3} (t)$, $Y_{\rm C_2} (t) \geq Y_{\rm C_3} (t)$, and $Y_{\rm C_5} (t) \geq Y_{\rm NC} (t)$.

Therefore, we have
\begin{enumerate}
	\item[\it (iii)] if $\gamma > 1$, then $V_{\rm C_4} (t, x) = \frac{x^{1-\gamma}}{1-\gamma} \times [Y_{\rm C_4} (t)]^{\gamma} \geq \frac{x^{1-\gamma}}{1-\gamma} \times [Y_{\rm NC} (t)]^{\gamma} = V_{\rm NC} (t, x)$, $V_{\rm C_1} (t, x) = \frac{x^{1-\gamma}}{1-\gamma} \times [Y_{\rm C_1} (t)]^{\gamma} \geq \frac{x^{1-\gamma}}{1-\gamma} \times [Y_{\rm C_3} (t)]^{\gamma} = V_{\rm C_3} (t, x)$, $V_{\rm C_2} (t, x) = \frac{x^{1-\gamma}}{1-\gamma} \times [Y_{\rm C_2} (t)]^{\gamma} \leq \frac{x^{1-\gamma}}{1-\gamma} \times [Y_{\rm C_3} (t)]^{\gamma} = V_{\rm C_3} (t, x)$
	and $V_{\rm C_5} (t, x) = \frac{x^{1-\gamma}}{1-\gamma} \times [Y_{\rm C_5} (t)]^{\gamma} \leq \frac{x^{1-\gamma}}{1-\gamma} \times [Y_{\rm NC} (t)]^{\gamma} = V_{\rm NC} (t, x)$;
	\item[\it (iii')] if $0 < \gamma < 1$, then $V_{\rm C_4} (t, x) = \frac{x^{1-\gamma}}{1-\gamma} \times [Y_{\rm C_4} (t)]^{\gamma} \leq \frac{x^{1-\gamma}}{1-\gamma} \times [Y_{\rm NC} (t)]^{\gamma} = V_{\rm NC} (t, x)$, $V_{\rm C_1} (t, x) = \frac{x^{1-\gamma}}{1-\gamma} \times [Y_{\rm C_1} (t)]^{\gamma} \leq \frac{x^{1-\gamma}}{1-\gamma} \times [Y_{\rm C_3} (t)]^{\gamma} = V_{\rm C_3} (t, x)$, $V_{\rm C_2} (t, x) = \frac{x^{1-\gamma}}{1-\gamma} \times [Y_{\rm C_2} (t)]^{\gamma} \geq \frac{x^{1-\gamma}}{1-\gamma} \times [Y_{\rm C_3} (t)]^{\gamma} = V_{\rm C_3} (t, x)$
	and $V_{\rm C_5} (t, x) = \frac{x^{1-\gamma}}{1-\gamma} \times [Y_{\rm C_5} (t)]^{\gamma} \geq \frac{x^{1-\gamma}}{1-\gamma} \times [Y_{\rm NC} (t)]^{\gamma} = V_{\rm NC} (t, x)$.
\end{enumerate}
This completes the proof.
\end{proof}

An economic insight can be immediately gained from Proposition \ref{prop:det-comparison}. That is, if the welfare is measured by the value function, the investor is always worse off whenever there is an investment constraint. However, depending on the investor's risk aversion level, the robust optimal consumption strategy represented as a proportion of wealth, can be either larger or smaller. To be more specific, if the investor has a high (resp. low) risk aversion level, then the optimal consumption proportion is smaller (resp. larger) when an investment constraint is present.
 
It is worth highlighting that the impact of consumption constraints on the robust optimal consumption strategy is more subtle. Though we have partly addressed the monotonicity issue of $Y$ with respect to the lower and upper consumption constraints, it is unclear how these constraints would affect the robust optimal consumption strategy, which is determined by $c^* (t) = \frac{1}{Y(t)} \vee \underline c \wedge \overline c$ through the combined effects of $Y (t)$, floor $\underline c$ and ceiling $\overline c$. Thus, 
we will revisit the impact 
of consumption constraints on $c^*$ in the next subsection.

\subsection{Numerical Illustrations}

In this subsection, we provide numerical illustrations of the results established in the previous subsection. We consider a time horizon of $T = 3$ and a financial market with $m = 2$ risky assets whose price processes are driven by $n = 3$ independent Brownian motions. 
The financial market parameters, utility function parameters, subjective discount factor, and ambiguity aversion parameters used in these illustrations are provided in Table \ref{tab:det-par-values}. These parameter values are comparable to those used in similar studies; see e.g. \citep{BalterMahayniSchweizer2021,  KraftSeifriedSteffensen2013, Maenhout2004}. In the numerical implementation, we first set the constraints of consumption rate to be $[\underline{c}, \overline{c}] = [0, \infty)$, 
which is essentially equivalent to the cases without consumption constraints and is considered in the cases $\mbox{C}_3$ and $\mbox{NC}$ in Subsection \ref{subsec:CS}. Doing so allows us to focus on the impact of investment constraints first. Later we will consider more stringent consumption constraints as other cases (i.e. $\mbox{C}_1$, $\mbox{C}_2$, $\mbox{C}_4$, and $\mbox{C}_5$) discussed in Subsection \ref{subsec:CS}.

\begin{table}[h]
\caption{Parameter values used for numerical illustrations in the case of deterministic coefficients.}
\label{tab:det-par-values}
\centering
\begin{tabular}{@{}cc@{}}
\toprule
\textbf{Parameter} & \textbf{Value} \\ \midrule
$\gamma$ & $\gamma \in \{0.9, 4\}$ \\
$\mu_i$ ($i=1,2$), $r$ & $\mu_1 = 0.09$, $\mu_2 = 0.11$, $r = 0.05$ \\
$\sigma = \left[\begin{array}{ccc} \sigma_{11} & \sigma_{12} & \sigma_{13} \\ \sigma_{21} & \sigma_{22} & \sigma_{33} \end{array}\right]$ & $\sigma = \left[\begin{array}{ccc} 0.050 & 0.066 & 0.082 \\ 0.058 & 0.0740 & 0.090 \end{array}\right]$ \\
$\beta$ & $\beta = 1$ \\
$\rho$ & $\rho = 0.015$ \\
$\eta_i$ ($i=1,2,3$) & $\eta_1 = 1$, $\eta_2 = 3$, $\eta_3 = 5$ \\ \bottomrule
\end{tabular}
\end{table}

Given the parameter values above, the market price of risk vector $\theta(t)$ is computed as \[\theta(t) = \theta = \sigma^\top(\sigma\sigma^\top)^{-1}(\mu - r\mathbf{1}_m)= \left[\begin{array}{ccc} 4.7396 & 0.8333 & -3.0729\end{array}\right]^\top.\] Since $\theta (t)$ is a constant vector, the optimal investment (more precisely, risk exposure) strategy $p^*(t)$ and the optimal distortion process $\phi^*(t)$ are also constant vectors.

The first set of illustrations compares the optimal investment strategy, the optimal distortion process, the optimal consumption strategy, the utility loss, and the value function when short-selling is allowed ($\Gamma = \mathbb{R}^n$) and not allowed ($\Gamma = \mathbb{R}^n_+$). We also compare the results when the investor is ambiguity-averse or ambiguity-neutral; see Table \ref{tab:det-comparison-cases} for a summary of the cases considered. Table \ref{tab:det-opt-strat-distort-comparison} shows the optimal investment strategy $p^*(t) = p^*$ and the optimal distortion vector $\phi^*(t) = \phi^*$ for the different cases considered. Figure \ref{fig:det-comparison} exhibits the optimal consumption strategy, the utility loss over time, and the value function over time for a fixed wealth of $x=1$. Clearly, $\phi^*(t) = 0$ when the investor is ambiguity-neutral.

\begin{table}[h]
\caption{The different cases considered for the short-selling constraint and the investor's ambiguity aversion.}
\label{tab:det-comparison-cases}
\centering
\begin{tabular}{@{}ccc@{}}
\toprule
Case & Short-selling & Ambiguity-averse \\ \midrule
Case 1 & Allowed & Yes ($H \neq 0_{n\times n}$) \\
Case 2 & Allowed & No ($H = 0_{n\times n}$) \\
Case 3 & Not allowed & Yes ($H \neq 0_{n\times n}$) \\
Case 4 & Not allowed & No ($H = 0_{n\times n}$) \\ \bottomrule
\end{tabular}
\end{table}

\begin{table}[h]
\caption{The optimal investment strategy $p^*(t) = p^*$ and the optimal distortion vector $\phi^*(t) = \phi^*$ for the cases considered in Table \ref{tab:det-comparison-cases}}
\label{tab:det-opt-strat-distort-comparison}
\centering
\begin{tabular}{@{}crrrrrrrrr@{}}
\toprule
\textbf{} & \multicolumn{4}{c}{\textbf{$p^*(t) = p^*$}} & \multicolumn{1}{c}{} & \multicolumn{4}{c}{$\phi^*(t) = \phi^*$} \\ \cmidrule(lr){2-5} \cmidrule(l){7-10} 
 & \multicolumn{1}{c}{Case 1} & \multicolumn{1}{c}{Case 2} & \multicolumn{1}{c}{Case 3} & \multicolumn{1}{c}{Case 4} & \multicolumn{1}{c}{} & \multicolumn{1}{c}{Case 1} & \multicolumn{1}{c}{Case 2} & \multicolumn{1}{c}{Case 3} & \multicolumn{1}{c}{Case 4} \\ \cmidrule(r){1-5} \cmidrule(l){7-10} 
\multirow{3}{*}{$\gamma = 4$} & 0.9479 & 1.1849 & 0.9479 & 1.1849 &  & -0.9479 & 0 & -0.9479 & 0 \\
 & 0.1190 & 0.2083 & 0.1190 & 0.2083 &  & -0.3571 & 0 & -0.3571 & 0 \\
 & -0.3414 & -0.7682 & 0 & 0 &  & 1.7072 & 0 & 0 & 0 \\ \cmidrule(r){1-5} \cmidrule(l){7-10} 
\multirow{3}{*}{$\gamma = 0.9$} & 2.4945 & 5.2662 & 2.4945 & 5.2662 &  & -2.4945 & 0 & -2.4945 & 0 \\
 & 0.2137 & 0.9259 & 0.2137 & 0.9259 &  & -0.6410 & 0 & -0.6410 & 0 \\
 & -0.5208 & -3.4144 & 0 & 0 &  & 2.6042 & 0 & 0 & 0 \\ \bottomrule
\end{tabular}
\end{table}



Let us firstly focus on the case with a high degree of risk aversion ($\gamma = 4$). When the investor is ambiguity-averse ($H \neq 0_{n\times n}$), we observe from Table \ref{tab:det-opt-strat-distort-comparison} that the imposition of the short-selling constraint eliminates the short position of wealth exposed to the third Brownian motion (risk factor), while retaining the positive positions in the other two Brownian motions. Moreover, the short-selling constraint also has an impact on the optimal distortion process. Indeed, due to no exposure to the third Brownian motion, the third entry of $\phi^* (t)$ becomes zero, that is, the investor is not concerned about the uncertainty from the third Brownian motion. Without the ability to enter a net short position with respect to the third risk factor to offset the net long positions in the other two risk factors, as reflected in Figure \ref{fig:det-opt-consumption-comparison}, the investor also reduces her consumption since she has a high degree of risk aversion ($\gamma = 4$). 
However, if the ambiguity-averse investor ignores model uncertainty, she will face a smaller utility loss when the short-selling constraint is in place. As expected, the investor is better off (in terms of the value function as shown in Figure \ref{fig:det-valuefunction-comparison}) when short-selling is allowed, although the gap decreases to zero when approaching the terminal time. Furthermore, the investor considers alternative models that are closer to the reference model, as evidenced by the lower relative entropy implied by the optimal distortion vector when short-selling is not allowed. 


\begin{figure}[h!]
    \centering
    \begin{subfigure}[b]{0.30\textwidth}
        \includegraphics[width = \textwidth]{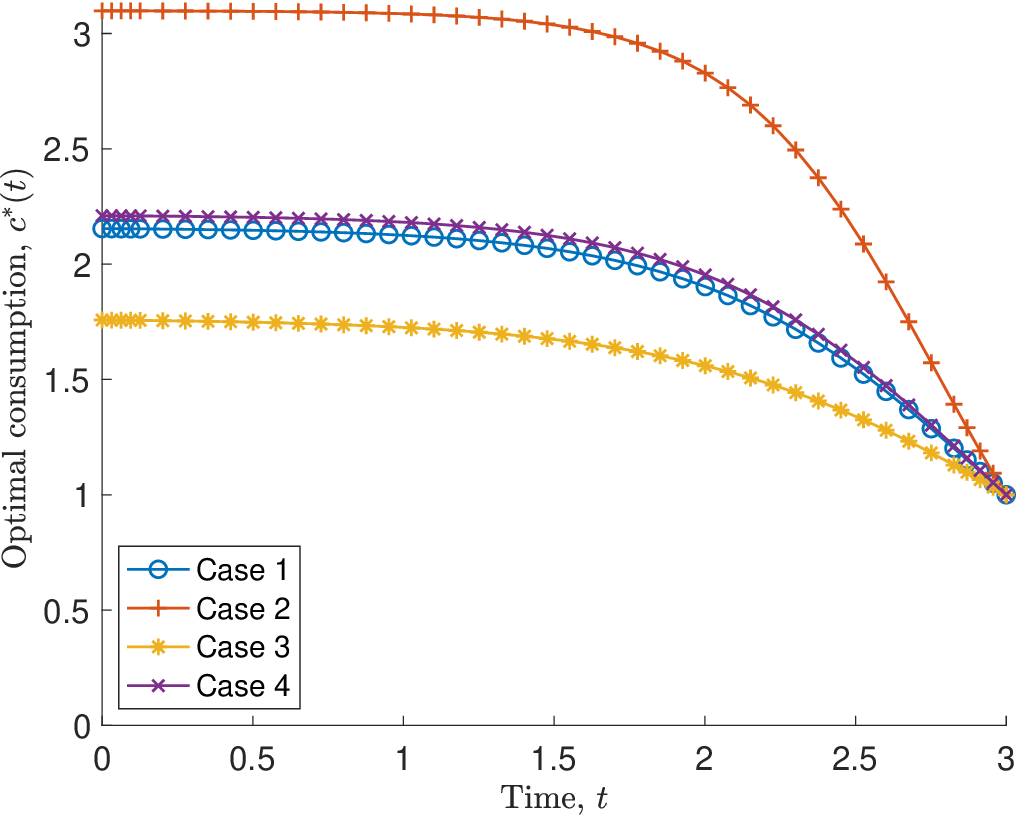}
        \caption{$\gamma = 4, c^*(t)$}
        \label{fig:det-opt-consumption-comparison}
    \end{subfigure}
    \hfill
    \begin{subfigure}[b]{0.30\textwidth}
        \includegraphics[width = \textwidth]{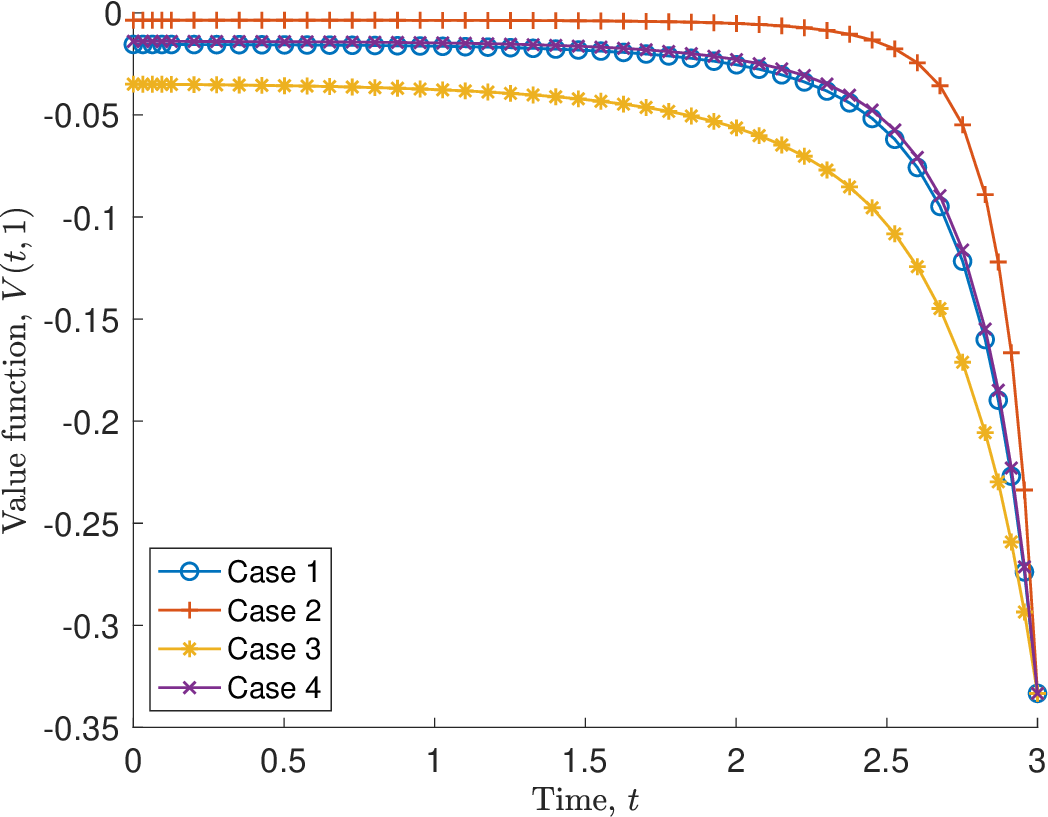}
        \caption{$\gamma = 4, V(t,1)$}
        \label{fig:det-valuefunction-comparison}
    \end{subfigure}
    \hfill
     \begin{subfigure}[b]{0.30\textwidth}
        \includegraphics[width = \textwidth]{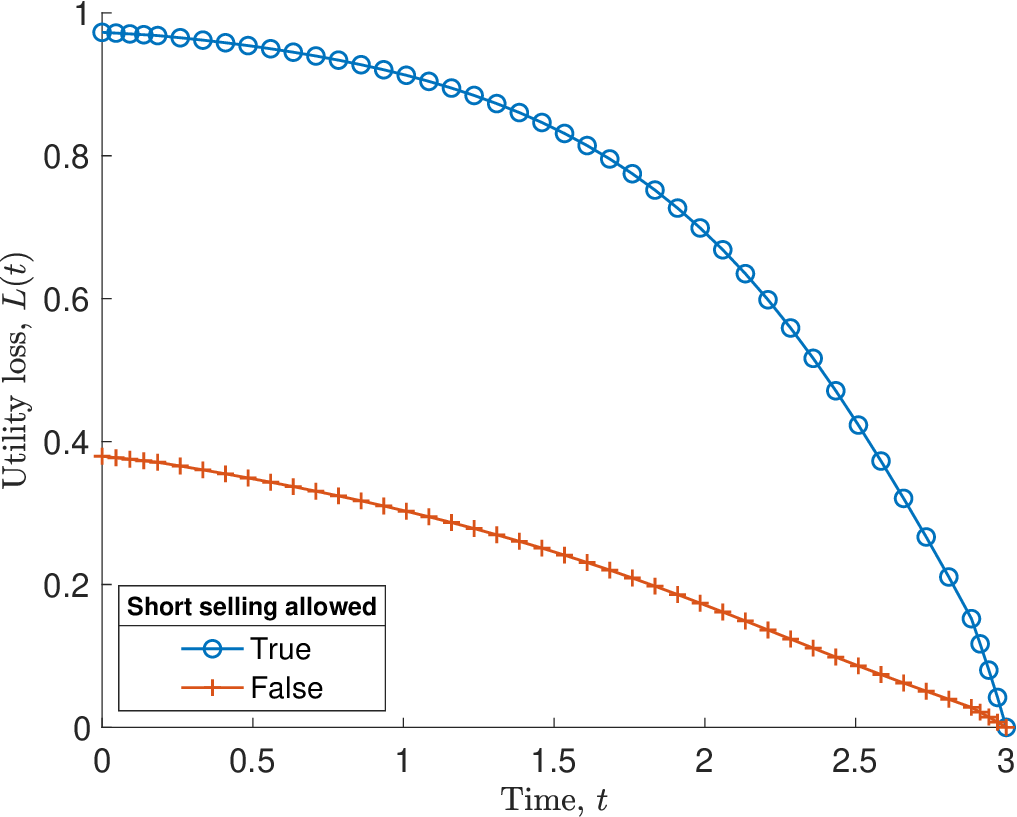}
        \caption{$\gamma = 4, L(t)$}
        \label{fig:det-util-loss-comparison}
    \end{subfigure}
    
    \begin{subfigure}[b]{0.30\textwidth}
        \includegraphics[width = \textwidth]{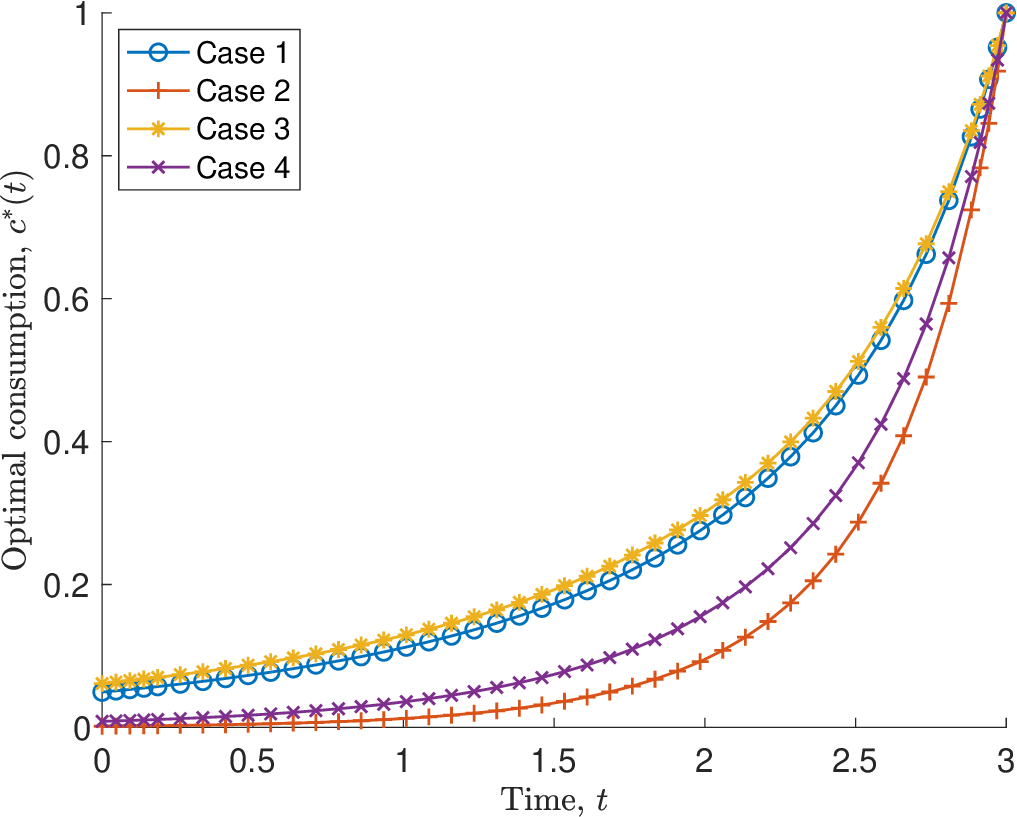}
        \caption{$\gamma = 0.9, c^*(t)$}
        \label{fig:det-opt-consumption-comparison-l1}
    \end{subfigure}
    \hfill
    \begin{subfigure}[b]{0.30\textwidth}
        \includegraphics[width = \textwidth]{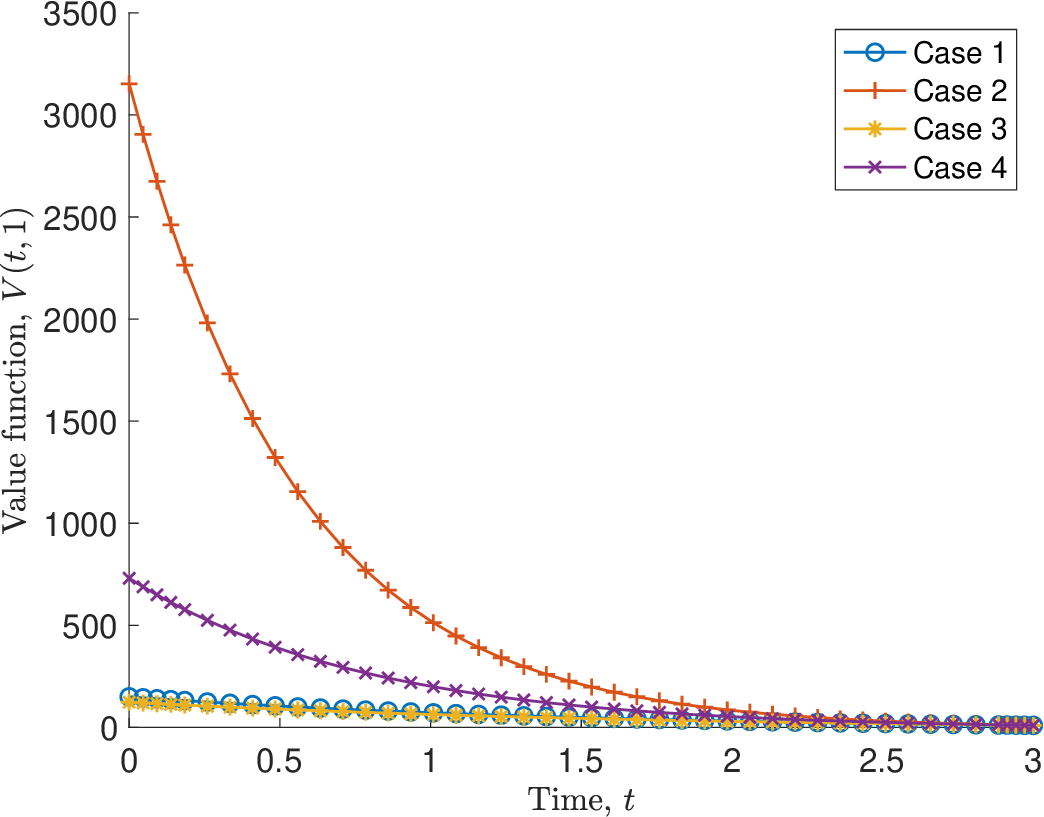}
        \caption{$\gamma = 0.9, V(t,1)$}
        \label{fig:det-valuefunction-comparison-l1}
    \end{subfigure}
    \hfill
    \begin{subfigure}[b]{0.30\textwidth}
        \includegraphics[width = \textwidth]{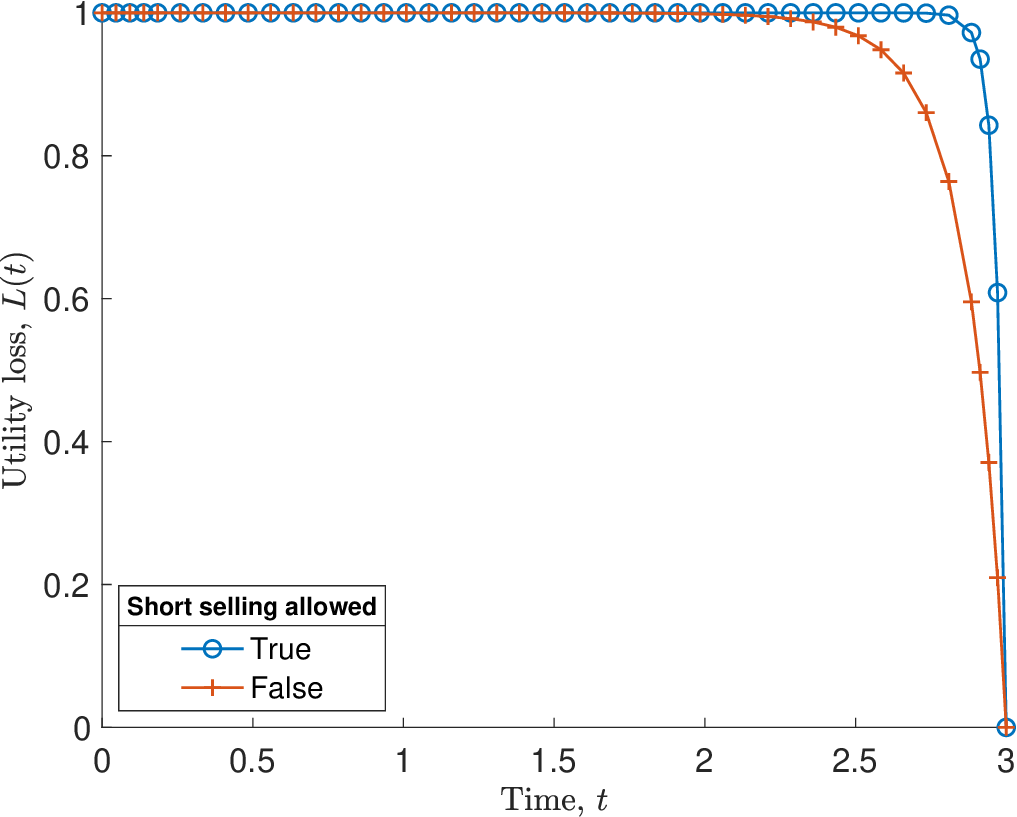}
        \caption{$\gamma = 0.9, L(t)$}
        \label{fig:det-util-loss-comparison-l1}
    \end{subfigure}
    \caption{The optimal consumption strategy $c^*(t)$ and the value function $V(t,1)$ for the cases in Table \ref{tab:det-comparison-cases} and the utility loss $L(t)$ when model uncertainty is ignored and when short-selling is allowed and not allowed.}
    \label{fig:det-comparison}
\end{figure}

As reflected by Figures \ref{fig:det-opt-consumption-comparison} and \ref{fig:det-valuefunction-comparison}, whether the investor is ambiguity-averse or not also has substantial implications on the optimal consumption strategy and the value function. When short-selling is allowed, the ambiguity-averse investor will consume less than the ambiguity-neutral investor. Consequently, the ambiguity-neutral investor is better off than the ambiguity-averse investor in terms of the value function. The ambiguity-neutral investor also tends to expose higher proportions of her wealth to the risk factors, as shown by the larger values of the optimal investment strategy $p^*$. We may attribute this to the ambiguity-neutral investor's complete faith in her belief of the risky asset price dynamics and her lower level of ``effective risk aversion'' (which, following \citet{Maenhout2004}, is characterised by the sum of $\gamma$ and the robustness preference parameters $\eta_i$'s). A similar effect takes place when short-selling is not allowed, although the overall optimal consumption strategy and the value function are lower compared to the case when short-selling is allowed.

The restriction of short-selling also has a pronounced impact on the investor's utility loss when she ignores model uncertainty, as evidenced by Figure \ref{fig:det-util-loss-comparison}. In particular, the investor faces a higher utility loss over time when short-selling is permitted. This is primarily due to the investor's optimal investment strategy (see $p^*(t)$ under Cases 1 and 3 in Table \ref{tab:det-comparison-cases}). When short-selling is not permitted (Case 3), the investor simply adopts the same positive exposure to the first two risk factors, as in Case 1, and drops the negative exposure to the third risk factor. As such, when short-selling is not permitted, the investor is concerned about the uncertainty in fewer risk factors compared to when short-selling is permitted, therefore implying a lower utility loss.

When the investor has a low degree of risk aversion ($\gamma = 0.9$), the short-selling constraint and the investor's aversion to ambiguity have the same impact on the investor's utility loss and value function as the case when $\gamma = 4$ (see Figures \ref{fig:det-util-loss-comparison-l1} and \ref{fig:det-valuefunction-comparison-l1}). However, the impact on the optimal consumption strategy is of the opposite direction compared to the case when $\gamma = 4$ (see Figure \ref{fig:det-opt-consumption-comparison-l1}). That is, when short-selling is not allowed, the investor (marginally) \textit{increases} consumption, whether or not they are ambiguity-averse. Likewise, whether or not short-selling is permitted, ambiguity-averse investors tend to consume \textit{more} of their wealth compared to an ambiguity-neutral investor. This is likely due to the investor's higher elasticity of intertemporal substitution,
i.e. $\frac{1}{\gamma}$, when she is less risk-averse, with the effect compounded by the investor's lack of faith in the dynamics of the risk factors (and hence the evolution of the wealth process) if she is ambiguity-averse. Indeed, the effect of higher levels of ambiguity aversion is explored in the succeeding numerical experiments, illustrating the formal results established in Proposition \ref{prop:av-coefficnet}.




Next, we illustrate the effect of the ambiguity aversion parameters $\eta_i$ on the optimal consumption strategy and the value function, as formally discussed in Proposition \ref{prop:av-coefficnet}. To this end, we keep $\eta_2$ and $\eta_3$ fixed at their respective values in Table \ref{tab:det-par-values} and vary $\eta_1$ over the set $\{0,1,2,3,4,5\}$. The results for $\gamma = 4$ and $\gamma = 0.9$ are shown in Figures \ref{fig:deter-comparisons-ambiguityaversion} and \ref{fig:deter-comparisons-ambiguityaversion_l1}, respectively.

\begin{figure}[h]
    \centering
    \begin{subfigure}[b]{0.30\textwidth}
        \includegraphics[width = \textwidth]{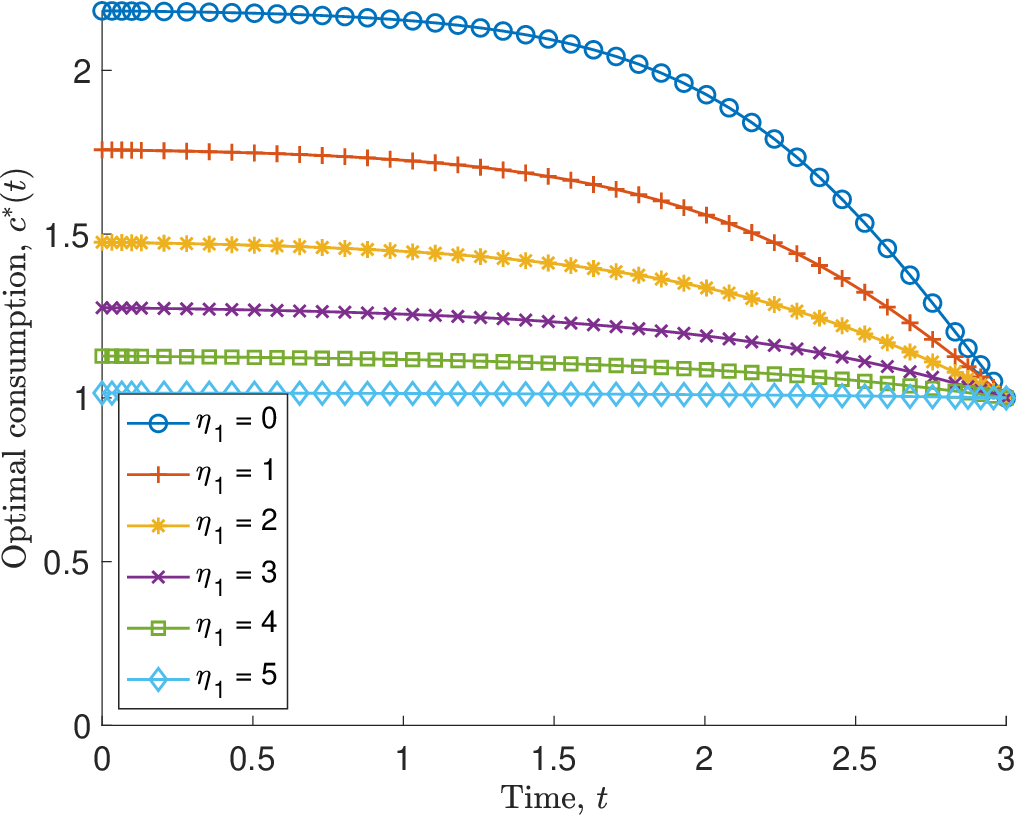}
        \caption{Optimal consumption $c^*(t)$}
    \end{subfigure}
    \qquad
    \begin{subfigure}[b]{0.30\textwidth}
        \includegraphics[width = \textwidth]{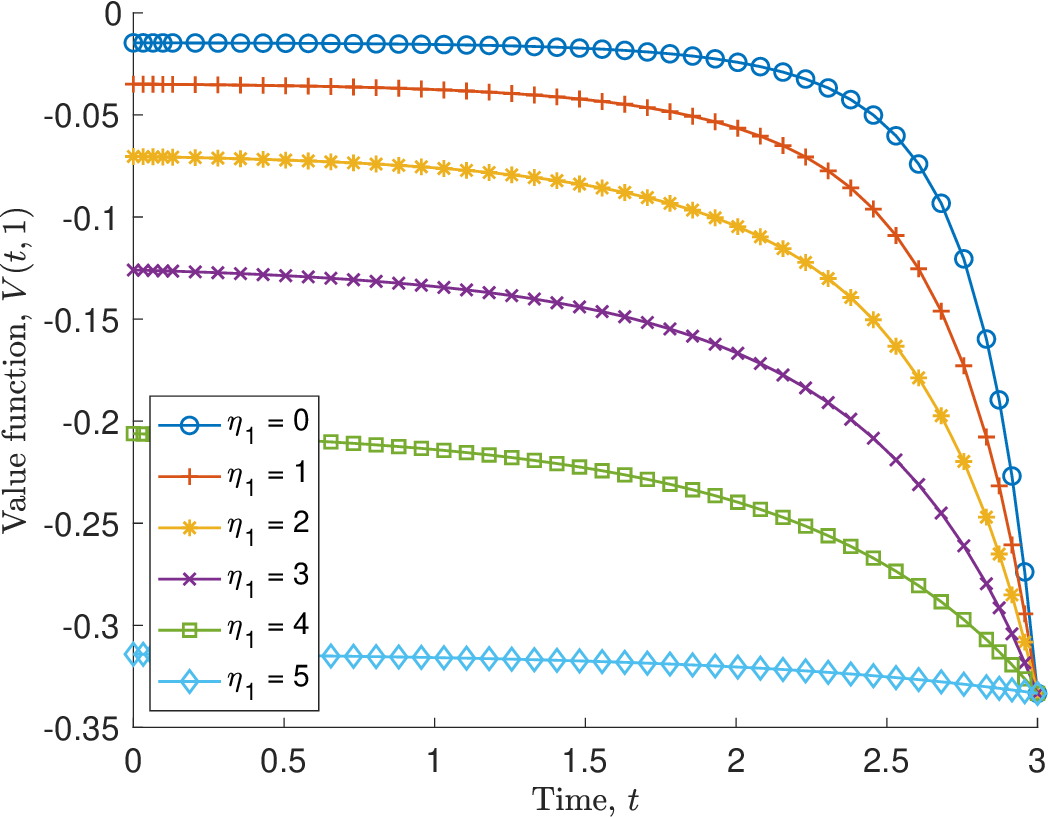}
        \caption{Value function $V(t,1)$}
    \end{subfigure}
    \caption{The optimal consumption strategy $c^*(t)$ and the value function $V(t,1)$ for $\gamma = 4$ and various values of the ambiguity aversion parameter $\eta_1$.}
    \label{fig:deter-comparisons-ambiguityaversion}
\end{figure}

\begin{figure}[h]
    \centering
    \begin{subfigure}[b]{0.30\textwidth}
        \includegraphics[width = \textwidth]{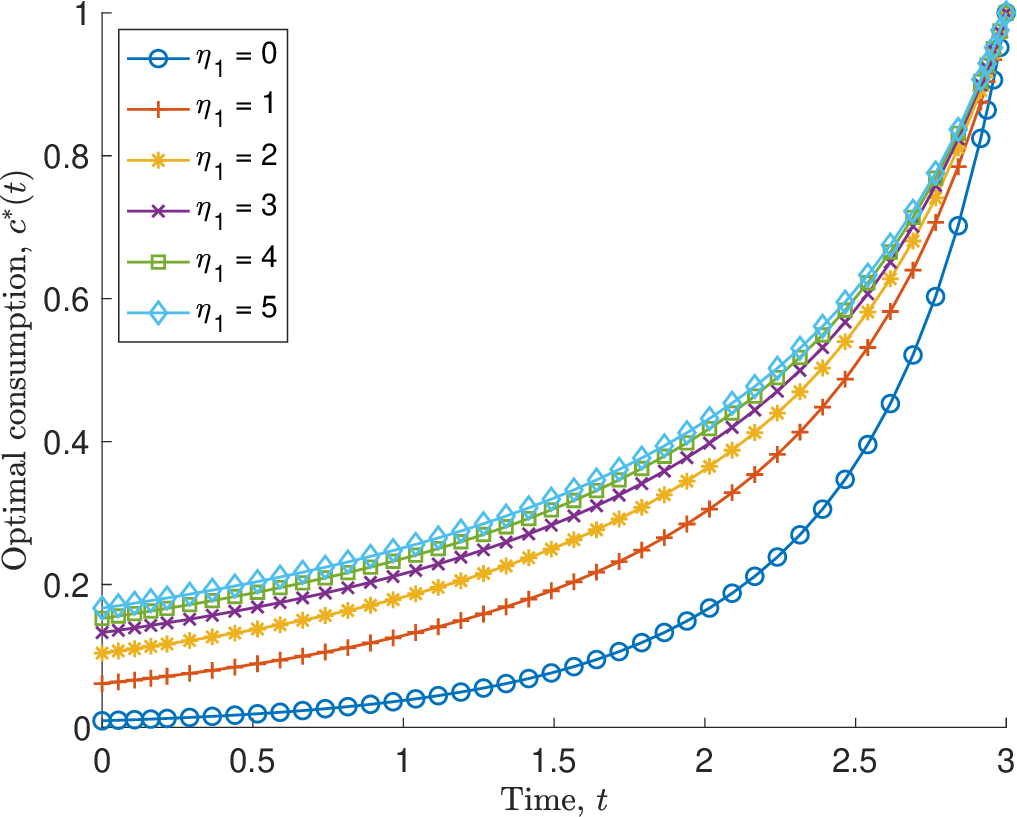}
        \caption{Optimal consumption $c^*(t)$}
    \end{subfigure}
    \qquad
    \begin{subfigure}[b]{0.30\textwidth}
        \includegraphics[width = \textwidth]{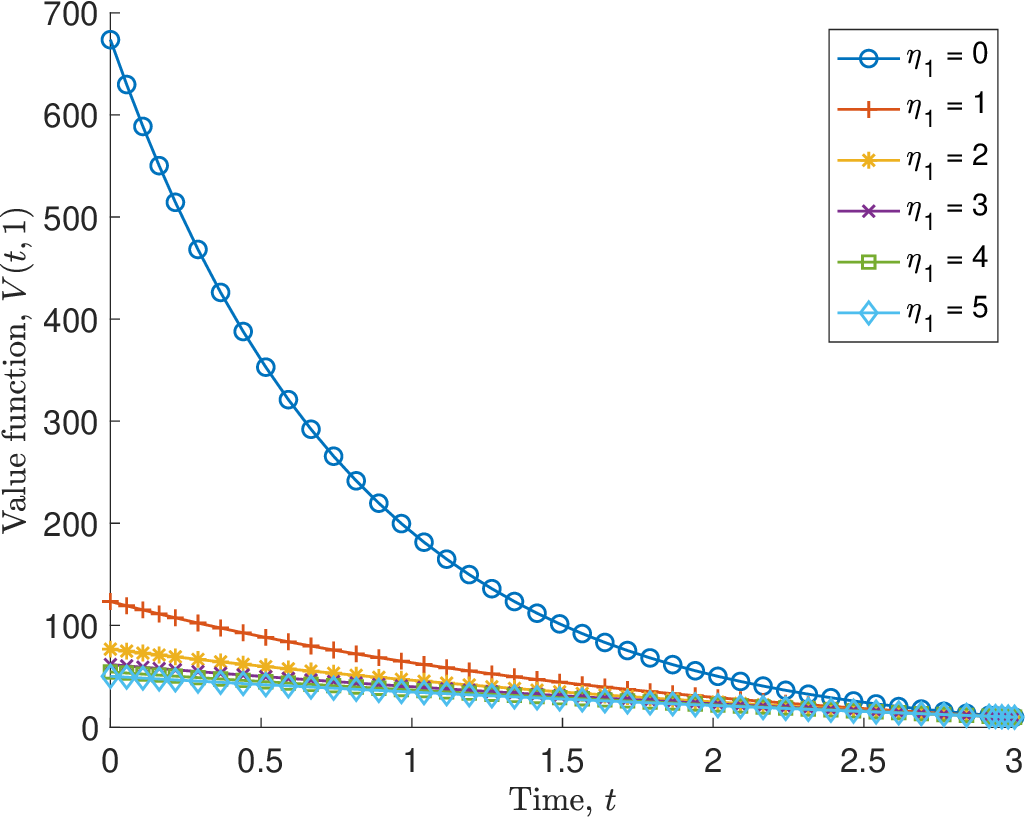}
        \caption{Value function $V(t,1)$ 
        }
    \end{subfigure}
    \caption{The optimal consumption strategy $c^*(t)$ and the value function $V(t,1)$ for $\gamma = 0.9$ and various values of the ambiguity aversion parameter $\eta_1$.}
    \label{fig:deter-comparisons-ambiguityaversion_l1}
\end{figure}


The numerical experiments indeed verify the formal conclusions drawn in Proposition \ref{prop:av-coefficnet}. That is, when the investor has a high degree of risk aversion ($\gamma = 4$, see Figure \ref{fig:deter-comparisons-ambiguityaversion}), the optimal consumption strategy $c^* (t)$ decreases with the ambiguity aversion parameter $\eta_1$. In other words, as the investor prefers stronger robustness in the model for the first risk factor (i.e. first Brownian motion), her optimal consumption also decreases for a fixed $t$. Likewise, her welfare as measured by the value function also decreases with the ambiguity aversion parameter $\eta_1$, as she will only consider alternative models that are closer to the reference model. However, the variations in both the optimal consumption strategy and the value function decrease as time  increases. In contrast, when the investor has a low degree of risk aversion ($\gamma = 0.9$, see Figure \ref{fig:deter-comparisons-ambiguityaversion_l1}), as she increases her ambiguity aversion to the model uncertainty for the first risk factor, her optimal consumption increases for a fixed $t$. As is the case for $\gamma > 1$, the investor's welfare decreases as her ambiguity aversion increases. In further contrast to the case $\gamma > 1$, the magnitude of the changes in the optimal consumption strategy as $\eta_1$ increases is lower when $\gamma < 1$. At the same time, the magnitude of decrease in the value function as $\eta_1$ increases is much larger compared to when $\gamma > 1$. Mathematically speaking, setting $\eta_1 = 0$ drastically increases the exponent of $Y(t)$ in Equation \eqref{eq:Y-sol-deter} due to the division by $\gamma < 1$ (see also the simplified form of the exponent in the proof of Proposition \ref{prop:av-coefficnet}). This then leads to a very large value of the value function, which dramatically decreases as $\eta_1$ increases.

The above analysis illustrates a possible link between a power utility investor's consumption path and her ambiguity aversion, as mentioned below Proposition \ref{prop:av-coefficnet}. An ambiguity-neutral power utility investor with a high risk aversion ($\gamma > 1$) has a low elasticity of intertemporal substitution ($\frac{1}{\gamma} < 1$), and thus she prefers present consumption over future consumption. However, when the investor becomes more ambiguity-averse, her belief about the uncertainty of the financial market leads to lower levels of consumption and welfare. The opposite effect is true for the investor with a low risk aversion ($0<\gamma<1$); as she becomes more ambiguity-averse, her preference for present consumption (over future consumption) increases, compared to an ambiguity-neutral investor with a low degree of risk aversion. In either case, an increase in the investor's ambiguity averison leads to a lower level of welfare, as measured by the value function. Indeed, such welfare loss is the 
cost of robustifying optimal strategies to combat model uncertainty.

Furthermore, we investigate numerically the impact of investment and (upper and lower) consumption constraints on the investor's optimal consumption strategy and value function, as formally shown in Proposition \ref{prop:det-comparison}. The consumption constraints used to illustrate Proposition \ref{prop:det-comparison} are provided in Table \ref{tab:det-comparison-par-values}. In this analysis, we impose more stringent bounds on consumption compared to the earlier analyses, where we have set $[\underline{c}, \overline{c}] = [0,\infty)$. That is, we impose a more stringent upper (resp. lower) bound of 1 (resp. 0.2) on the consumption rate, implying that the investor can consume at most 100\% (resp. must consume at least 20\%) of her current wealth. Figure \ref{fig:deter-comparisons-gammaG1} shows the optimal consumption strategy, the utility loss, and the value function for the different cases considered in Proposition \ref{prop:det-comparison} for $\gamma = 4$, while Figure \ref{fig:deter-comparisons-gammaL1} exhibits the same quantities for $\gamma = 0.9$. The illustrations verify the results stated in Proposition \ref{prop:det-comparison} in the setting implied by Tables \ref{tab:det-par-values} and \ref{tab:det-comparison-par-values}, although we note that equality holds in some comparisons.

\begin{table}[h]
\caption{Assumed investment constraints and upper and lower consumption constraints for the different cases in Proposition \ref{prop:det-comparison}}
\label{tab:det-comparison-par-values}
\centering
\begin{tabular}{@{}cccc@{}}
\toprule
Case & Short-selling & $\underline{c}$ & $\overline{c}$ \\ \midrule
$\rm C_1$ & Not allowed & 0 & 1 \\
$\rm C_2$ & Not allowed & 0.2 & $\infty$ \\
$\rm C_3$ & Not allowed & 0 & $\infty$ \\
$\rm C_4$ & Allowed & 0 & 1 \\
$\rm C_5$ & Allowed & 0.2 & $\infty$ \\
NC & Allowed & 0 & $\infty$ \\ \bottomrule
\end{tabular}
\end{table}

\begin{figure}[h]
    \centering
    \begin{subfigure}[b]{0.30\textwidth}
        \includegraphics[width = \textwidth]{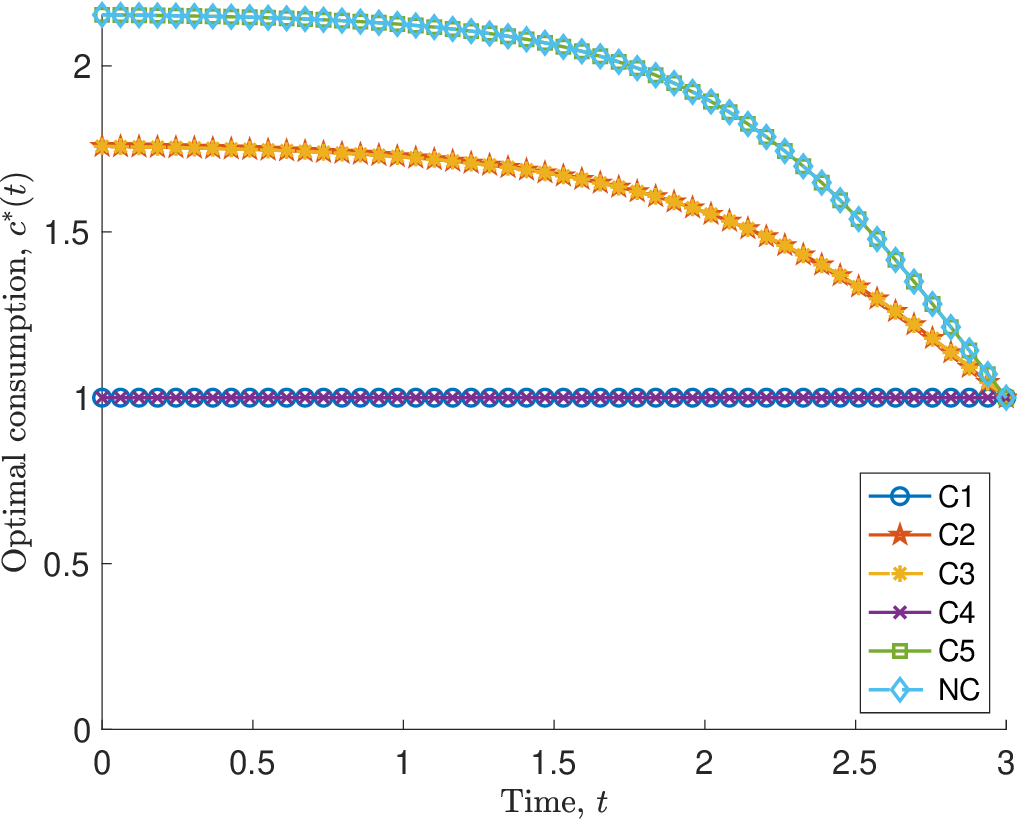}
        \caption{Optimal consumption $c^*(t)$}
    \end{subfigure}
    \hfill
    \begin{subfigure}[b]{0.30\textwidth}
        \includegraphics[width = \textwidth]{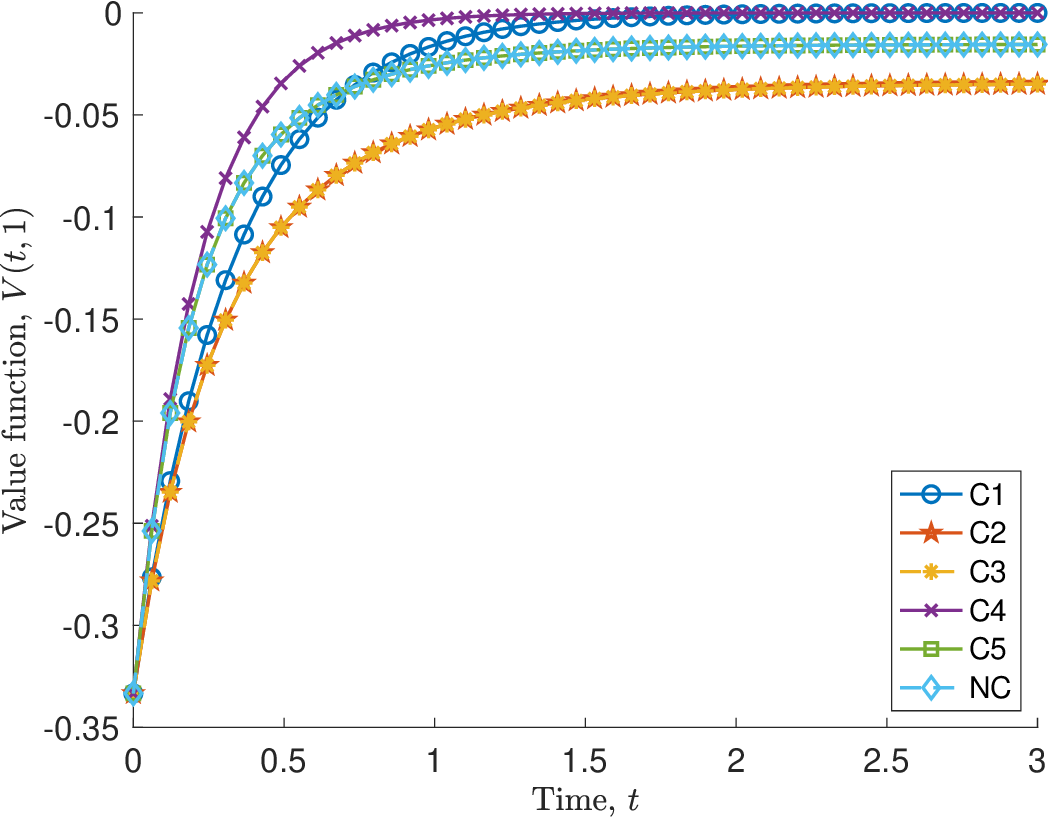}
        \caption{Value function $V(t,1)$}
    \end{subfigure}
    \hfill
    \begin{subfigure}[b]{0.30\textwidth}
        \includegraphics[width = \textwidth]{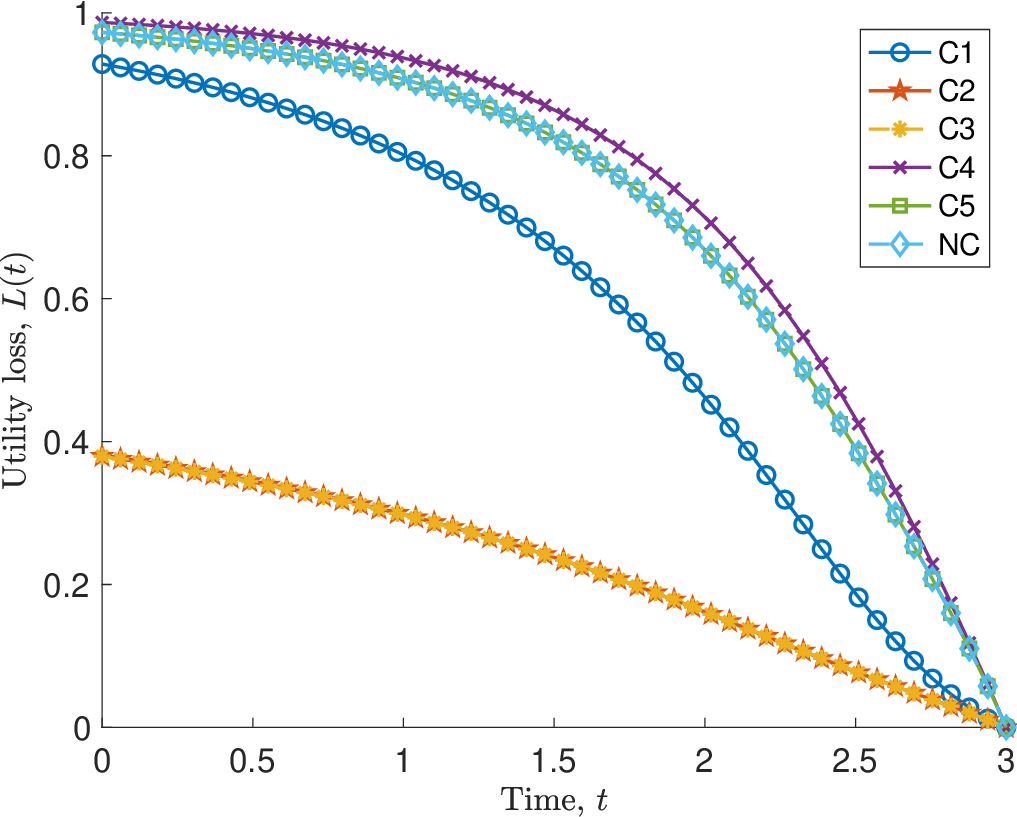}
        \caption{Utility loss $L(t)$}
    \end{subfigure}
    
    \caption{The optimal consumption strategy $c^*(t)$, the value function $V(t,1)$, and the utility loss $L(t)$ for each case considered in Proposition \ref{prop:det-comparison} for $\gamma = 4$.}
    \label{fig:deter-comparisons-gammaG1}
\end{figure}

\begin{figure}[h]
    \centering
    \begin{subfigure}[b]{0.30\textwidth}
        \includegraphics[width = \textwidth]{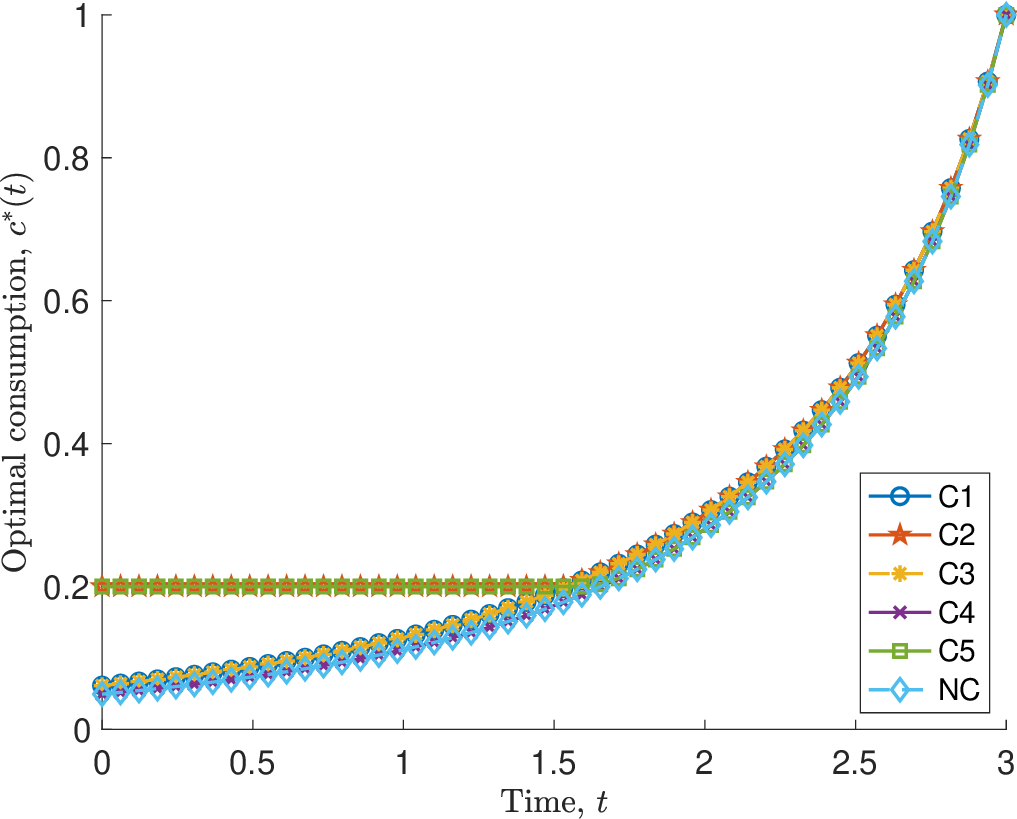}
        \caption{Optimal consumption $c^*(t)$}
    \end{subfigure}
    \hfill
    \begin{subfigure}[b]{0.30\textwidth}
        \includegraphics[width = \textwidth]{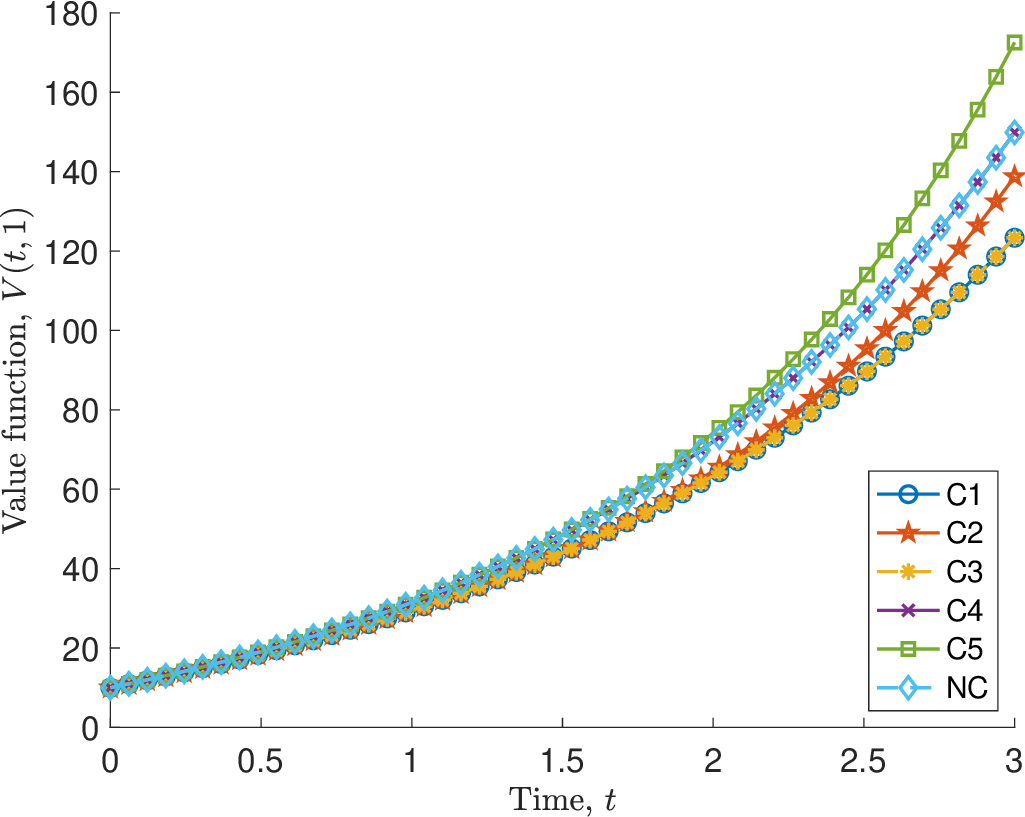}
        \caption{Value function $V(t,1)$}
    \end{subfigure}
    \hfill
    \begin{subfigure}[b]{0.30\textwidth}
        \includegraphics[width = \textwidth]{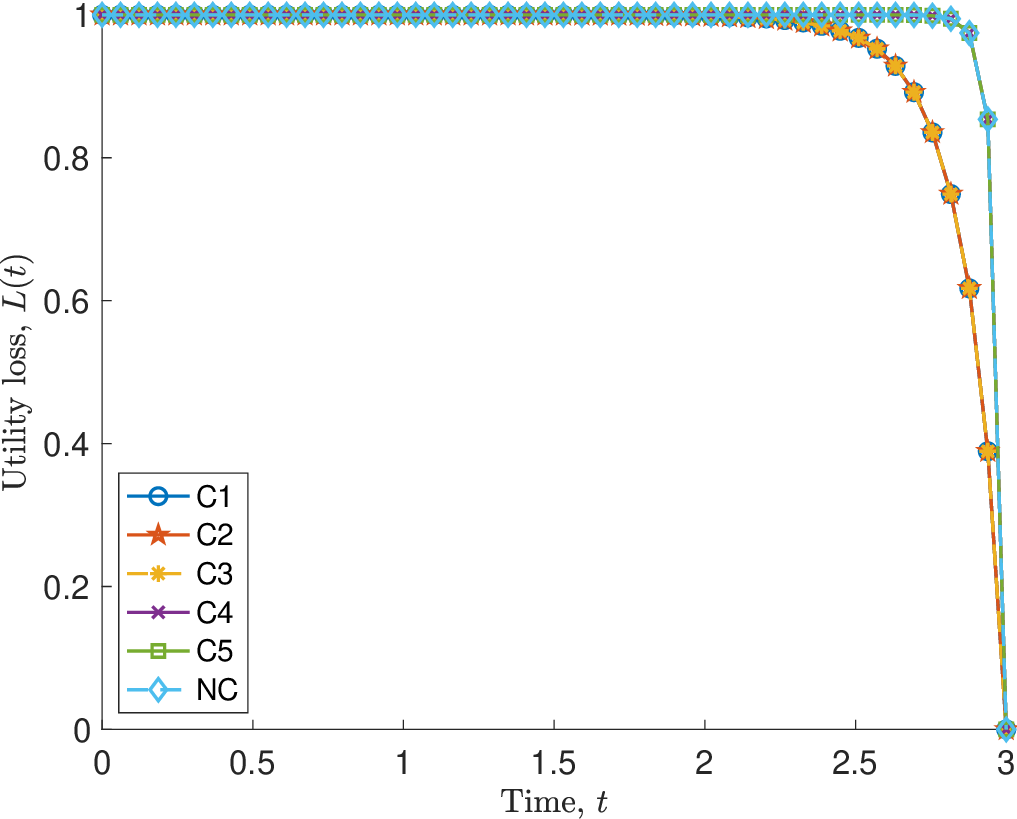}
        \caption{Utility loss $L(t)$}
    \end{subfigure}
    
    \caption{The optimal consumption strategy $c^*(t)$, the value function $V(t,1)$, and the utility loss $L(t)$ for each case considered in Proposition \ref{prop:det-comparison} for $\gamma = 0.9$.}
    \label{fig:deter-comparisons-gammaL1}
\end{figure}


On the one hand, we consider the case in which the investor has a high degree of risk aversion ($\gamma = 4$, see Figure \ref{fig:deter-comparisons-gammaG1}). When only an upper consumption constraint is in place, the optimal consumption strategies are the same whether or not short-selling is allowed ($c_{\rm C_1}^*(t)$ and $c_{\rm C_4}^*(t)$). 
However, the value function corresponding to the case when short-selling is allowed is higher (for some values of $t$) compared to when short-selling is prohibited. In contrast, when there is only a lower consumption constraint or no consumption constraint at all, there is a clear gap between the optimal consumption strategies when short-selling is allowed and not allowed ($c_{\rm C_2}^*(t)$ and $c_{\rm C_5}^*(t)$; $c_{\rm C_3}^*(t)$ and $c_{\rm NC}^*(t)$). In Assertion (iii) of Proposition \ref{prop:det-comparison}, the aforementioned inequalities are strict for the value functions except for when short-selling is not allowed and when there may or may not be a lower consumption constraint; that is, we have $V_{\rm C_2}(t,1) = V_{\rm C_3}(t,1)$.

We also observe that short-selling restrictions and consumption constraints have varying effects on the investor's utility loss when she has a high degree of risk aversion $(\gamma = 4)$. As seen in Figure \ref{fig:det-util-loss-comparison}, the investor generally faces a lower utility loss when short-selling is not allowed. Furthermore, setting a lower consumption constraint leads to the same utility loss as the case when there are no consumption constraints (compare cases $\rm C_2$ and $\rm C_3$ and cases $\rm C_5$ and $\rm NC$). However, imposing an upper consumption constraint (cases $\rm C_1$ and $\rm C_4$) leads to an increase in the utility loss. Since the investor's consumption over time is capped at 100\% of her current wealth, the welfare is influenced more by her utility from the terminal wealth, which is more heavily affected by the model uncertainty in the risk factor dynamics. In this example, the increase in the utility loss is more substantial when short-selling is not allowed (comparing case $\rm C_1$ to cases $\rm C_2$ and $\rm C_3$) than when short-selling is permitted (comparing case $\rm C_4$ to cases $\rm C_5$ and $\rm NC$). This suggests a possible delicate interplay between the short-selling restriction and the upper consumption constraint and their effect on the utility loss faced by the investor.

On the other hand, when the investor has a low degree of risk aversion ($\gamma = 0.9$, see Figure \ref{fig:deter-comparisons-gammaL1}), the gaps in the optimal consumption strategies are much smaller, although the relevant assertions in Proposition \ref{prop:det-comparison} still hold. The imposition of a short-selling constraint does not induce a change in the optimal consumption path when there is (only) a lower consumption constraint present (that is, we have $c_{\rm C_2}^*(t) = c_{\rm C_5}^*(t)$). All other inequalities stated in Assertion (i') of Proposition \ref{prop:det-comparison} hold strictly, even if the gaps are very small. In contrast, all inequalities stated in Assertion (ii) of Proposition \ref{prop:det-comparison} hold strictly. However, the value functions do not change between cases $\rm C_1$ and $\rm C_3$ (the imposition of an upper consumption constraint only when short-selling is not allowed) and cases $\rm C_4$ and $\rm NC$ (the imposition of a short-selling constraint when there are no consumption constraints in place). All other inequalities in Assertion (iii') in Proposition \ref{prop:det-comparison} hold as strict inequalities in other cases. In this case, only the short-selling restriction has an effect on the utility loss; the investor faces a slightly lower utility loss when short-selling is not allowed (cases $\rm C_1$, $\rm C_2$, and $\rm C_3$) compared to when short-selling is permitted (cases $\rm C_4$, $\rm C_5$, and $\rm NC$).

These results show the significant effect of an upper (resp. lower) consumption constraint when the investor has a high (resp. low) degree of risk aversion. When the investor has a high degree of risk aversion ($\gamma > 1$), she tends to favor present consumption more and consume a proportion of current wealth that is higher than 1. Although a short-selling constraint reduces the overall level of consumption, it is reduced to a level that still exceeds unity. However, if an upper consumption constraint is imposed, the investor finds it optimal to adopt a level of consumption equal to the upper consumption constraint. Imposing a lower consumption constraint produces an immaterial effect in the optimal consumption path. When the investor's risk aversion is low ($0<\gamma<1$), she tends to favor future consumption and consume minimally in the beginning. As such, imposing an upper consumption constraint has no effect, whereas imposing a lower consumption constraint leads to an optimal consumption path which coincides with the lower constraint until some time in the future when the investor decides to increase her consumption.

\begin{figure}[h]
    \centering
    \begin{subfigure}[b]{0.30\textwidth}
        \includegraphics[width = \textwidth]{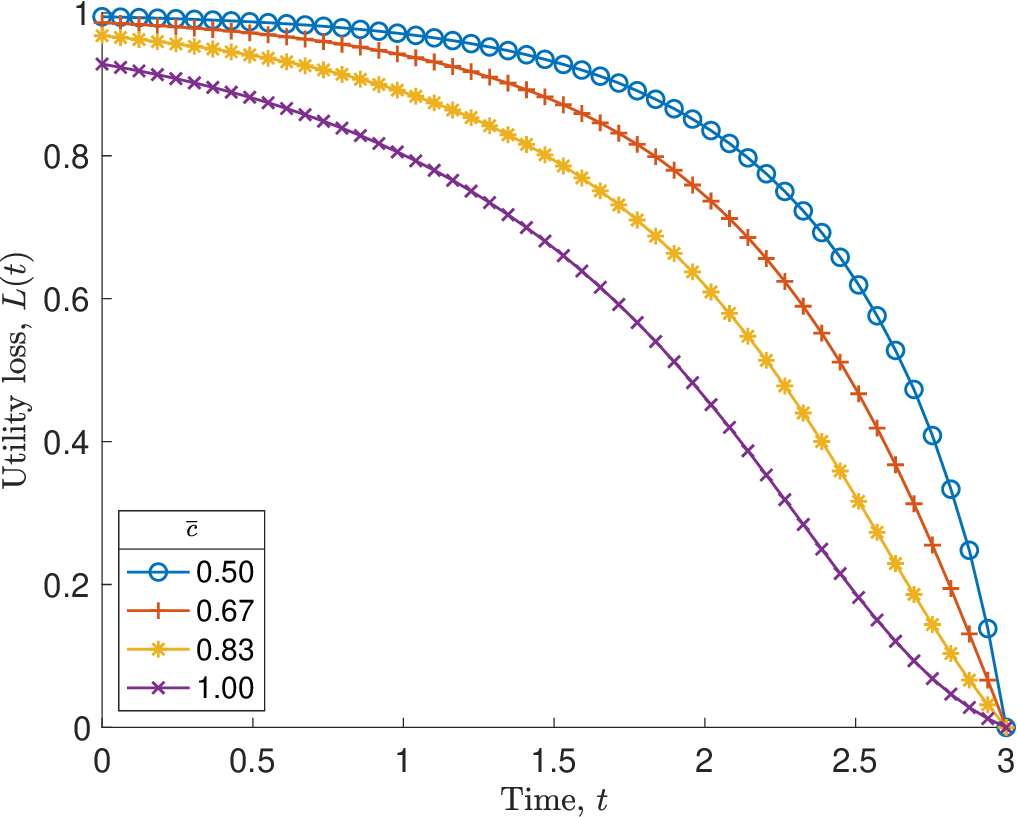}
        \caption{$\gamma = 4$}
    \end{subfigure}
    \qquad
    \begin{subfigure}[b]{0.30\textwidth}
        \includegraphics[width = \textwidth]{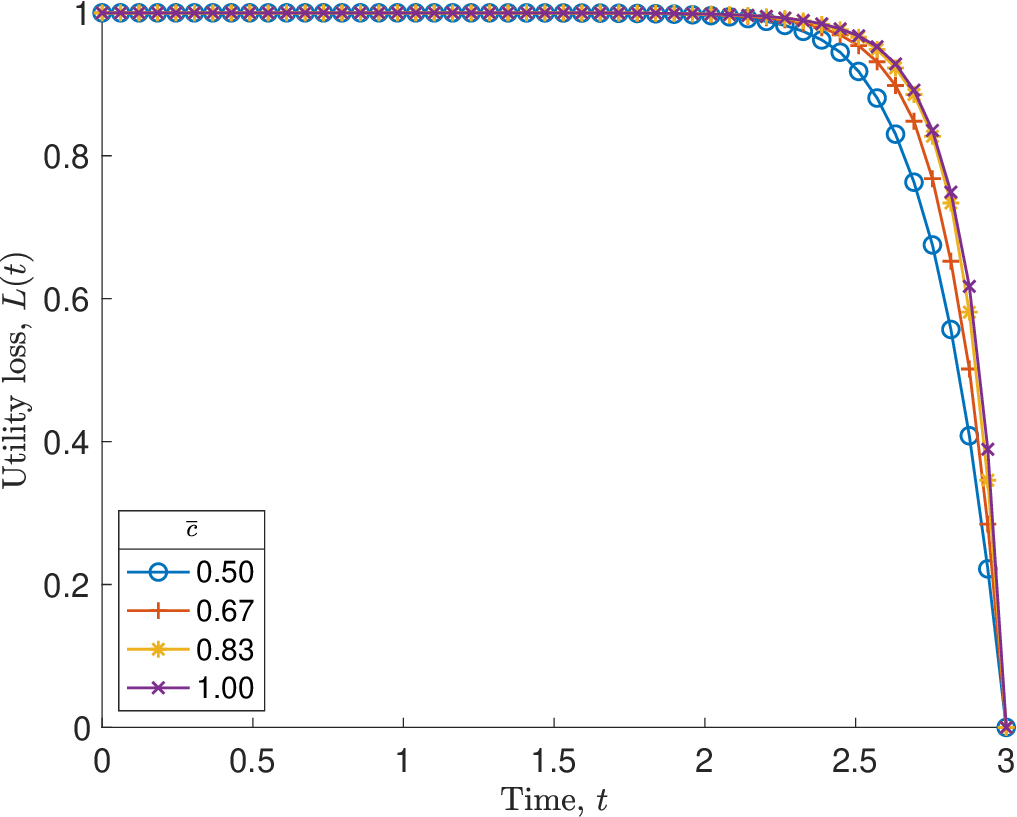}
        \caption{$\gamma = 0.9$}
    \end{subfigure}
    \caption{The effect of changing the upper consumption constraint $\overline{c}$ on the utility loss when short-selling is not allowed}
    \label{fig:deter-util-loss-ucc-comparison}
\end{figure}

Finally, in Figure \ref{fig:deter-util-loss-ucc-comparison}, we illustrate numerically the impact of imposing more stringent upper consumption constraints on the investor's utility loss when short-selling is not allowed. When $\gamma = 4$, decreasing the upper consumption constraint leads to an increasing utility loss when model ambiguity is ignored. 
We also recall from Figure \ref{fig:det-opt-consumption-comparison} that when short-selling is not allowed, an ambiguity-averse investor tends to consume less than an ambiguity-neutral investor. In other words, the investor tends to increase her consumption when she ignores model ambiguity. If the upper consumption constraint is lowered, the investor is unable to increase her consumption as much as if there were no upper consumption constraint. Thus, 
the more stringent is the upper consumption constraint, 
the higher is the investor's utility loss. 
When $\gamma = 0.9$, decreasing the upper consumption constraint leads to a decreasing utility loss when model ambiguity is ignored.
This is not unexpected due to that the investor now tends to decrease her consumption when she ignores model ambiguity as observed in Figure \ref{fig:det-opt-consumption-comparison-l1}.
Therefore, when facing a more stringent upper consumption constraint, the investor is able to decrease her consumption as much as needed, which leads to a smaller utility loss. 

\section{Conclusions}
\label{sec:conclusion}

There is now a wide recognition of the prominent role played by model ambiguity or uncertainty, in addition to the role of \textit{known} risks, in investors' decision-making. In this light, this paper investigates and solves the investment-consumption problem when the investor is both risk- and ambiguity-averse in a general continuous-time financial market with possibly stochastic coefficients. The investor's optimisation problem is formulated as a robust optimal control problem, as in \citet{AndersonHansenSargent2003}, using the homothetic robustness preference formulation proposed by \citet{Maenhout2004}. As the model coefficients are allowed to be stochastic, our study accommodates joint uncertainty from multiple sources, such as the expected return, the volatility, and interest rates. In addition, we consider the effect of imposing investment and consumption constraints on the investor's decision.

To tackle the robust optimal control problem, we solve the corresponding stochastic HJBI equation. In Proposition \ref{prop:HJBI-solution}, we show that the solution of the stochastic HJBI equation are represented by the solution of a BSDE. Furthermore, in Theorem \ref{thm:main} using the theory of BMO martingales, we establish that the candidate strategies obtained in the solution of the stochatic HJBI equation are indeed the robust optimal investment-consumption strategy and the worst-case distortion process that solve the robust optimal control problem. We also find in Proposition \ref{prop:utilloss} that the investor faces a non-negative utility loss when she follows a sub-optimal strategy that ignores model uncertainty.

An analysis of the deterministic case reveals significant relationships between the investor's degree of ambiguity aversion and her optimal consumption strategy (Proposition \ref{prop:av-coefficnet}) and interactions between investment and/or consumption constraints and the investor's optimal consumption strategy and welfare (Proposition \ref{prop:det-comparison}). A series of numerical experiments are then conducted to illustrate the effects of upper and lower consumption constraints and a short-selling restriction on the robust optimal consumption strategy and value function. We find that, when the investor has a high (resp. low) degree of risk aversion, the robust optimal consumption strategy increases (resp. decreases) when the investor is also ambiguity-averse, whether or not there is a short-selling restriction. We also find that, when the investor ignores model uncertainty, she faces a smaller utility loss when short-selling is prohibited, regardless of the degree of risk aversion. Furthermore, if the investor has a high (resp. low) risk aversion, the optimal consumption decreases (resp. increases) when her degree of ambiguity aversion increases. We also show numerically that imposing a more stringent upper consumption constraint leads to a higher (resp. lower) utility loss when the investor has a high (resp. low) risk aversion.

From the numerical experiments on the effect of the investor's ambiguity aversion on consumption, we find that the investor's elasticity of intertemporal substitution (EIS) may interact with her ambiguity aversion in determining robust optimal investment-consumption strategy. Thus, future work is concerned with solving the robust optimal control problem under recursive preferences, which allows us to disentangle the effects of risk aversion and EIS. Another future research topic is the effect of stochastic parameters (e.g. stochastic volatility and/or stochastic interest rates) on the investor's robust optimal strategies. Doing so allows us to investigate the evolution of the optimal investment strategy over time and the effect of model parameters and its interaction with risk and/or ambiguity aversion on the investor's decision making.


%

\bibliographystyle{apalike}
\begin{spacing}{1}
\setlength{\bibsep}{0pt plus 0.3ex}
\bibliography{references}

\begin{thebibliography}{}

\bibitem[Ait-Sahalia and Matthys, 2019]{Ait-SahaliaMatthys2019}
Ait-Sahalia, Y. and Matthys, F. (2019).
\newblock Robust consumption and portfolio policies when asset prices can jump.
\newblock {\em Journal of Economic Theory}, 179:1--56.

\bibitem[Anderson et~al., 2003]{AndersonHansenSargent2003}
Anderson, E., Hansen, L., and Sargent, T. (2003).
\newblock A quartet of semigroups for model specification, robustness, prices
  of risk, and model detection.
\newblock {\em Journal of the European Economic Association}, 1(1):68--123.

\bibitem[Balter et~al., 2021]{BalterMahayniSchweizer2021}
Balter, A.~G., Mahayni, A., and Schweizer, N. (2021).
\newblock Time-consistency of optimal investment under smooth ambiguity.
\newblock {\em European Journal of Operational Research}, 293:643--657.

\bibitem[Beck, 2004]{Beck2004}
Beck, A. (2004).
\newblock {\em Introduction to Nonlinear Optimization: Theory, Algorithms, and
  Applications with {MATLAB}}.
\newblock Society for Industrial and Applied Mathematics, Philadelphia.

\bibitem[Biagini and P{\i}nar, 2017]{BiaginiPinar2017}
Biagini, S. and P{\i}nar, M.~{\c{C}}. (2017).
\newblock The robust {Merton} problem of an ambiguity averse investor.
\newblock {\em Mathematics and Financial Economics}, 11:1--24.

\bibitem[Blanchet and Murthy, 2019]{BlanchetMurthy2019}
Blanchet, J. and Murthy, K. (2019).
\newblock Quantifying distributional model risk via optimal transport.
\newblock {\em Mathematics of Operations Research}, 44(2):565--600.

\bibitem[Chen and Epstein, 2002]{ChenEpstein2002}
Chen, Z. and Epstein, L. (2002).
\newblock Ambiguity, risk, and asset returns in continuous time.
\newblock {\em Econometrica}, 70(4):1403--1443.

\bibitem[Cvitani{\'c} and Karatzas, 1992]{CvitanicKaratzas1992}
Cvitani{\'c}, J. and Karatzas, I. (1992).
\newblock Convex duality in constrained portfolio optimization.
\newblock {\em The Annals of Applied Probability}, 2(4):767--818.

\bibitem[Delbaen and Tang, 2010]{DelbaenTang2010}
Delbaen, F. and Tang, S. (2010).
\newblock Harmonic analysis of stochastic equations and backward stochastic
  differential equations.
\newblock {\em Probability Theory and Related Fields}, 146(1):291--336.

\bibitem[Dubois and Veraart, 2015]{Dubois-Veraart2015}
Dubois, M. and Veraart, L. (2015).
\newblock Optimal diversification in the presence of parameter uncertainty for
  a risk averse investor.
\newblock {\em SIAM Journal on Financial Mathematics}, 6(1):201--241.

\bibitem[Ellsberg, 1961]{Ellsberg1961}
Ellsberg, D. (1961).
\newblock Risk, ambiguity, and the {Savage} axioms.
\newblock {\em The Quarterly Journal of Economics}, 75(4):643--669.

\bibitem[Epstein and Ji, 2013]{EpsteinJi2013}
Epstein, L.~G. and Ji, S. (2013).
\newblock Ambiguous volatility and asset pricing in continuous time.
\newblock {\em The Review of Financial Studies}, 26(7):1740–1786.

\bibitem[Epstein and Schneider, 2010]{EpsteinSchneider2010}
Epstein, L.~G. and Schneider, M. (2010).
\newblock Ambiguity and asset markets.
\newblock {\em Annual Review of Financial Economics}, 2:315--346.

\bibitem[Flor and Larsen, 2014]{FlorLarsen2014}
Flor, C.~R. and Larsen, L.~S. (2014).
\newblock Robust portfolio choice with stochastic interest rates.
\newblock {\em Annals of Finance}, 10:243--265.

\bibitem[Gilboa and Schmeidler, 1989]{GilboaSchmeidler1989}
Gilboa, I. and Schmeidler, D. (1989).
\newblock Maxmin expected utility with non-unique prior.
\newblock {\em Journal of Mathematical Economics}, 18:141--153.

\bibitem[Guidolin and Rinaldi, 2013]{GuidolinRinaldi2013}
Guidolin, M. and Rinaldi, F. (2013).
\newblock Ambiguity in asset pricing and portfolio choice: A review of the
  literature.
\newblock {\em Theory and Decision}, 74:183--217.

\bibitem[Hansen and Sargent, 2011]{HansenSargent2011}
Hansen, L. and Sargent, T. (2011).
\newblock {\em Robustness}.
\newblock Princeton University Press, New Jersey.

\bibitem[Hansen and Marinacci, 2016]{HansenMarinacci2016}
Hansen, L.~P. and Marinacci, M. (2016).
\newblock Ambiguity aversion and model misspecification: An economic
  perspective.
\newblock {\em Statistical Science}, 31(4):511--515.

\bibitem[Hern{\'a}ndez-Hern{\'a}ndez and Schied, 2007]{HHSchied2007}
Hern{\'a}ndez-Hern{\'a}ndez, D. and Schied, A. (2007).
\newblock A control approach to robust utility maximization with logarithmic
  utility and time-consistent penalties.
\newblock {\em Stochastic Processes and their Applications}, 117:980--1000.

\bibitem[Hu et~al., 2005]{HuImkellerMuller2005}
Hu, Y., Imkeller, P., and M{\"u}ller, M. (2005).
\newblock Utility maximization in incomplete markets.
\newblock {\em The Annals of Applied Probability}, 15(3):1691--1712.

\bibitem[Hu et~al., 2022]{hu2022optimal}
Hu, Y., Shi, X., and Xu, Z.~Q. (2022).
\newblock Optimal consumption-investment with coupled constraints on
  consumption and investment strategies in a regime switching market with
  random coefficients.
\newblock {\em arXiv preprint arXiv:2211.05291}.

\bibitem[Kazamaki, 2006]{Kazamaki2006}
Kazamaki, N. (2006).
\newblock {\em Continuous Exponential Martingales and BMO}.
\newblock Springer, Berlin Heidelberg.

\bibitem[Kobylanski, 2000]{Kobylanski2000}
Kobylanski, M. (2000).
\newblock Backward stochastic differential equations and partial differential
  equations with quadratic growth.
\newblock {\em The Annals of Probability}, 28(2):558--602.

\bibitem[Kraft et~al., 2013]{KraftSeifriedSteffensen2013}
Kraft, H., Seifried, F., and Steffensen, M. (2013).
\newblock Consumption-portfolio optimization with recursive utility in
  incomplete markets.
\newblock {\em Finance and Stochastics}, 17(1):161--196.

\bibitem[Laeven and Stadje, 2014]{LaevenStadje2014}
Laeven, R. J.~A. and Stadje, M. (2014).
\newblock Robust portfolio choice and indifference valuation.
\newblock {\em Mathematics of Operations Research}, 39(4):1109--1141.

\bibitem[Lin and Riedel, 2021]{LinRiedel2021}
Lin, Q. and Riedel, F. (2021).
\newblock Optimal consumption and portfolio choice with ambiguous interest
  rates and volatility.
\newblock {\em Economic Theory}, 71:1189--1202.

\bibitem[Liu, 2010]{Liu2010}
Liu, H. (2010).
\newblock Robust consumption and portfolio choice for time varying investment
  opportunities.
\newblock {\em Annals of Finance}, 6:435--454.

\bibitem[Liu, 2011]{Liu2011}
Liu, H. (2011).
\newblock Dynamic portfolio choice under ambiguity and regime switching mean
  returns.
\newblock {\em Journal of Economic Dynamics and Control}, 35:623--640.

\bibitem[Ma et~al., 2018]{MaYiGuan2018}
Ma, Q., Yi, F., and Guan, C. (2018).
\newblock A consumption-investment problem with constraints on minimum and
  maximum consumption rates.
\newblock {\em Journal of Computational and Applied Mathematics}, 338:185--198.

\bibitem[Maccheroni et~al., 2006]{MaccheroniMarinacciRustichini2006}
Maccheroni, F., Marinacci, M., and Rustichini, A. (2006).
\newblock Ambiguity aversion, robustness, and the variational representation of
  preferences.
\newblock {\em Econometrica}, 74(6):1447--1498.

\bibitem[Maenhout, 2004]{Maenhout2004}
Maenhout, P.~J. (2004).
\newblock Robust portfolio and asset pricing.
\newblock {\em Review of Financial Studies}, 17(4):951--983.

\bibitem[Maenhout, 2006]{Maenhout2006}
Maenhout, P.~J. (2006).
\newblock Robust portfolio rules and detection-error probabilities for a
  mean-reverting risk premium.
\newblock {\em Journal of Economic Theory}, 128:136--163.

\bibitem[Marinacci, 2015]{Marinacci2015}
Marinacci, M. (2015).
\newblock Model uncertainty.
\newblock {\em Journal of the European Economic Association}, 13(6):1022--1100.

\bibitem[Markowitz, 1952]{Markowitz1952}
Markowitz, H.~M. (1952).
\newblock Portfolio selection.
\newblock {\em Journal of Finance}, 7(1):77--91.

\bibitem[Merton, 1969]{Merton1969}
Merton, R. (1969).
\newblock Lifetime portfolio selection under uncertainty: The continuous-time
  case.
\newblock {\em Review of Economics and Statistics}, 51(3):247--257.

\bibitem[Merton, 1971]{Merton1971}
Merton, R. (1971).
\newblock Optimum consumption and portfolio rules in a continuous-time model.
\newblock {\em Journal of Economic Theory}, 3:373--413.

\bibitem[Merton, 1980]{Merton1980}
Merton, R. (1980).
\newblock On estimating the expected return on the market: An exploratory
  investigation.
\newblock {\em Journal of Financial Economics}, 8(4):323--361.

\bibitem[Michaud, 1989]{Michaud1989}
Michaud, R. (1989).
\newblock The {Markowitz} optimization enigma: Is `optimized' optimal?
\newblock {\em Financial Analysts Journal}, 45(1):31--42.

\bibitem[Peng, 1992]{Peng1992}
Peng, S. (1992).
\newblock Stochastic {Hamilton–Jacobi–Bellman} equations.
\newblock {\em SIAM Journal on Control and Optimization}, 30(2):284--304.

\bibitem[Pflug and Wozabal, 2007]{PflugWozabal2007}
Pflug, G. and Wozabal, D. (2007).
\newblock Ambiguity and portfolio selection.
\newblock {\em Quantitative Finance}, 7(4):435--442.

\bibitem[Rieder and Wopperer, 2012]{RiederWopperer2012}
Rieder, U. and Wopperer, C. (2012).
\newblock Robust consumption-investment problems with random market
  coefficients.
\newblock {\em Mathematics and Financial Economics}, 6(4):295--311.

\bibitem[Shen and Wei, 2016]{ShenWei2016}
Shen, Y. and Wei, J. (2016).
\newblock Optimal investment-consumption-insurance with random parameters.
\newblock {\em Scandinavian Actuarial Journal}, 2016(1):37--62.

\bibitem[Trojani and Vanini, 2002]{TrojaniVanini2002}
Trojani, F. and Vanini, P. (2002).
\newblock A note on robustness in {Merton}'s model of intertemporal consumption
  and portfolio choice.
\newblock {\em Journal of Economic Dynamics and Control}, 26(3):423--435.

\bibitem[Uppal and Wang, 2003]{UppalWang2003}
Uppal, R. and Wang, T. (2003).
\newblock Model misspecification and underdiversification.
\newblock {\em The Journal of Finance}, 58(6):2465--2486.

\bibitem[Wei et~al., 2023]{WeiYangZhuang2023}
Wei, P., Yang, C., and Zhuang, Y. (2023).
\newblock Robust consumption and portfolio choice with derivatives trading.
\newblock {\em European Journal of Operational Research}, 304:832--850.

\bibitem[Yan et~al., 2020]{yan2020robust}
Yan, T., Han, B., Pun, C.~S., and Wong, H.~Y. (2020).
\newblock Robust time-consistent mean--variance portfolio selection problem
  with multivariate stochastic volatility.
\newblock {\em Mathematics and financial economics}, 14:699--724.

\bibitem[Yang et~al., 2024]{yang2024optimal}
Yang, Z., Li, D., Zeng, Y., and Liu, G. (2024).
\newblock Optimal investment strategy for $\alpha$-robust utility maximization
  problem.
\newblock {\em Mathematics of Operations Research}.

\bibitem[Yang et~al., 2019]{YangLiangZhou2019}
Yang, Z., Liang, G., and Zhou, C. (2019).
\newblock Constrained portfolio-consumption strategies with uncertain
  parameters and borrowing costs.
\newblock {\em Mathematics and Financial Economics}, 13:393--427.

\bibitem[Zariphopoulou, 1994]{Zariphopoulou1994}
Zariphopoulou, T. (1994).
\newblock Consumption-investment models with constraints.
\newblock {\em SIAM Journal on Control and Optimization}, 32(1):59--85.

\bibitem[Zhou and Li, 2000]{ZhouLi2000}
Zhou, X. and Li, D. (2000).
\newblock Continuous-time mean-variance portfolio selection: A stochastic {LQ}
  framework.
\newblock {\em Applied Mathematics and Optimization}, 42(1):19--33.

\end{thebibliography}
\end{spacing}

\end{document}